\DeclareMathOperator*{\argmin}{arg\,min} 
\DeclareMathOperator*{\argmax}{arg\,max} 
\title[The Lohe Hermitian sphere model with adapted couplings]{Asymptotic interplay of states and adapted coupling gains in the Lohe hermitian sphere model}
\author[Byeon]{Junhyeok Byeon}
\address[Junhyeok Byeon]{\newline Department of Mathematical Sciences\newline Seoul National University, Seoul 08826, Republic of Korea}
\email{giugi2486@snu.ac.kr}
\author[Ha]{Seung-Yeal Ha}
\address[Seung-Yeal Ha]{\newline Department of Mathematical Sciences and Research Institute of Mathematics \newline Seoul National University, Seoul 08826 and \newline
Korea Institute for Advanced Study, Hoegiro 85, Seoul 02455, Republic of Korea}
\email{syha@snu.ac.kr}
\author[Park]{Hansol Park}
\address[Hansol Park]{\newline Department of Mathematical Sciences\newline Seoul National University, Seoul 08826, Republic of Korea}
\email{hansol960612@snu.ac.kr}
\newtheorem{theorem}{Theorem}[section]
\newtheorem{lemma}{Lemma}[section]
\newtheorem{proposition}{Proposition}[section]
\newtheorem{remark}{Remark}[section]
\newtheorem{definition}{Definition}[section]
\newcommand{\bbr}{\mathbb R}
\newcommand{\bbh}{\mathbb H}
\newcommand{\bbs}{\mathbb S}
\newcommand{\bbc}{\mathbb C}
\begin{document}

\date{\today}

\subjclass{82C10, 82C22, 35B37} \keywords{Emergence, Lohe hermitian sphere model, synchronization, complex vector, tensor}

\thanks{\textbf{Acknowledgment.} The work of S.-Y. Ha was supported by National Research Foundation of Korea(NRF-2020R1A2C3A01003881).
The work of H. Park was supported by Basic Science Research Program through the National Research Foundation of Korea(NRF) funded by the Ministry of Education (2019R1I1A1A01059585) }

\begin{abstract}
We study emergent dynamics of the Lohe hermitian sphere (LHS) model with the same free flows under the dynamic interplay between state evolution and adaptive couplings. The LHS model is a complex counterpart of the Lohe sphere (LS) model on the unit sphere in Euclidean space, and when particles lie in the Euclidean unit sphere embedded in $\bbc^{d+1}$, it reduces to the Lohe sphere model. In the absence of interactions between states and coupling gains, emergent dynamics have been addressed in \cite{H-P2}. In this paper, we further extend earlier results in the aforementioned work to the setting in which the state and coupling gains are dynamically interrelated via two types of coupling laws, namely anti-Hebbian and Hebbian coupling laws. In each case, we present two sufficient frameworks leading to complete aggregation depending on the coupling laws, when the corresponding free flow is the same for all particles.
\end{abstract}

\maketitle \centerline{\date}

%\tableofcontents

\section{Introduction}
Collective behaviors of classical and quantum systems are ubiquitous, e.g., aggregation of bacteria, schooling of fishes, flocking of birds and synchronous firing of fireflies and neurons, etc \cite{A-B, A-B-F, B-H, B-B, Ku2, Pe, P-R, St, T-B-L, T-B, VZ, Wi2, Wi1}. These coherent phenomena were first modeled by two pioneers, Arthur Winfree \cite{Wi2} and Yoshiki Kuramoto \cite{Ku2} in almost half-century ago, and after their pioneering works, several mathematical models were proposed and studied from the viewpoint of collective behaviors. Among them, our main interest in this paper lies in the LHS model \cite{H-P2} which  corresponds to the special case of the Lohe tensor model \cite{H-P1}. The Lohe tensor model is a natural higher-dimensional extension of low-dimensional aggregation models such as the Kuramoto model \cite{A-B, B-C-M, C-H-J-K, C-S, D-X, D-B1, D-B, H-K-R},  sphere models \cite{C-H5, HKLN,  J-C, Lo-1, Lo-2, M-T-G, T-M, Zhu} and matrix models \cite{B-C-S,  D-F-M-T, De, Kim, Lo-0}. Before we move onto the description of the LHS model, we first set the hermitian unit sphere $\bbh\bbs^d$ which is the unit sphere in $\bbc^{d+1}$ centered at the origin:
\begin{align*}
\begin{aligned}
& z = ([z]_1, \cdots, [z]_{d+1}) \in \bbc^{d+1}, \quad {\tilde z} = ([\tilde z]_1, \cdots,[\tilde z]_{d+1}) \in \bbc^{d+1}, \\
& \langle z, {\tilde z} \rangle := \sum_{\alpha = 1}^{d+1} \overline{[z]_{\alpha}} [\tilde z]_{\alpha}, \quad \|z \| := \sqrt{ \langle z, z \rangle}, \quad  \mathbb{HS}^d := \{ z \in \mathbb{C}^{d+1} \, | \, \|z\| = 1\},
\end{aligned}
\end{align*}
where $\overline{[z]_{\alpha}}$ is the complex conjugate of $[z]_{\alpha}$.  Equipped with these notation, the LHS model with the same free flow reads as follows:
\begin{equation}  \label{A-0}
\displaystyle  \dot{z}_j=\Omega z_j+\displaystyle\frac{1}{N}\sum_{k=1}^N \kappa_{jk}\big(\langle z_j, z_j\rangle z_k-\langle z_k, z_j\rangle z_j\big) +\frac{1}{N}\sum_{k=1}^N {\lambda}_{jk}\big(\langle z_j, z_k\rangle-\langle z_k,z_j\rangle\big)z_j,
\end{equation}
where $\kappa_{jk}$ and $\lambda_{jk}$ are constant coupling gains coined as  ``{\it Lohe sphere coupling gain}" and ``{\it rotational coupling gain}" respectively.  Here $\Omega$ is a $(d+1) \times (d+1)$ skew-hermitian matrix:
\[ \Omega^\dagger = -\Omega, \quad j \in {\mathcal N} := \{ 1, \cdots, N \}, \]
where $\Omega^\dagger$ is the Hermitian conjugate of $\Omega$. \newline

In this paper, we are interested in the following simple question: \newline

\begin{quote}
``What if dynamics of coupling gains interacts with the dynamics of states? i.e., dynamic interplay between coupling gains and state evolution. In this case, under what conditions, can a coupled system exhibit emergent dynamics?"
\end{quote}

\vspace{0.2cm}

The above question has been addressed in other aggregation models, e.g., the Kuramoto model with adaptive couplings \cite{HLLP, HNP}, the Lohe sphere model with adaptive couplings \cite{HKLN}, the Lohe matrix model with adaptive couplings \cite{Kim}. Then the coupled dynamics for $\{ (z_j, \kappa_{jk}, \lambda_{jk})$ is governed by the Cauchy problem to the LHS-AC model:
\begin{align} 
\begin{aligned} \label{A-1}
& \dot{z}_j =\Omega z_j +  \frac{1}{N}\sum_{k=1}^N\kappa_{jk}\Big(\langle z_j, z_j\rangle z_k-\langle z_k, z_j\rangle z_j\Big)
+\frac{1}{N}\sum_{k=1}^N\lambda_{jk}\Big(\langle z_j, z_k\rangle-\langle z_k,z_j\rangle \Big)z_j,  \vspace{0.2cm}\\
&  \dot{\kappa}_{jk} = -\gamma_0\kappa_{jk}+\mu_0\Gamma_0(z_j, z_k),\quad
\dot{\lambda}_{jk}= -\gamma_1\lambda_{jk}+\mu_1\Gamma_1(z_j, z_k), \quad t > 0, \vspace{0.2cm}\\
& (z_j, \kappa_{jk}, \lambda_{jk})(0) = (z_j^0, \kappa_{jk}^0, \lambda_{jk}^0) \in \bbh\bbs^{d} \times \bbr_+ \times \bbr, \quad  j, k \in \mathcal{N},
\end{aligned}
\end{align}
where $\mu_k$ and $\gamma_k$ are positive constants. \newline

Throughout the paper, we use the following handy notation:
\[ Z := (z_1, \cdots, z_N)\in(\mathbb{HS}^{d})^N, \quad K := [\kappa_{ij}], \quad \Lambda := [\lambda_{ij}], \quad  \max_{i,j} := \max_{1 \leq i, j \leq N}, \quad \min_{i,j} := \min_{1 \leq i, j \leq N}. \]
Before we discuss our main results, we recall the concept of ``{\it complete aggregation}" for the Cauchy problem \eqref{A-1} as follows.
\begin{definition} \label{D1.1}
Let $(Z, K, \Lambda)$ be a solution to \eqref{A-1}. Then, complete aggregation occurs asymptotically if and only if following relations hold.
\begin{equation*} \label{A-6}
\lim_{t \to \infty} \max_{1 \leq i,j \leq N} \|z_i(t) - z_j(t) \| = 0.
\end{equation*}
\end{definition}
Recall that the primary purpose of this paper is to provide sufficient frameworks leading to complete aggregation for system \eqref{A-1}. In general, there will be no functional dependence between $\kappa_{jk}$ and $\lambda_{jk}$. From now on, we assume that the system parameters satisfy the following relations:
\begin{equation} \label{A-1-1}
\Omega = 0, \quad \gamma_0 = \gamma_1 = \gamma, \quad \mu_0 = \mu_1  = \mu.
\end{equation}
Motivated by the reduction from the Stuart-Landau(SL) model to the LHS model in Section \ref{sec:2.3}, we call the following relation as the SL coupling gain pair:
\begin{equation}\label{A-2}
\kappa_{jk} > 0, \quad \lambda_{jk}=-\frac{1}{2} \kappa_{jk}, \quad j, k \in {\mathcal N}.
\end{equation}
Under the setting \eqref{A-2}, due to Lemma \ref{L2.2} and Lemma \ref{L2.3}, system \eqref{A-1} becomes
\begin{equation} \label{A-3}
\begin{cases}
\displaystyle \dot{z}_j = \frac{1}{N}\sum_{k=1}^N \kappa_{jk} \left[ z_k-\frac{1}{2}\Big(\langle z_j, z_k \rangle+\langle z_k, z_j \rangle \Big)z_j\right ], \quad t > 0, \vspace{0.2cm} \\
\displaystyle \dot{\kappa}_{jk} = -\gamma \kappa_{jk} + \mu \Gamma_0(z_j, z_k), \vspace{0.2cm} \\
\displaystyle (z_j, \kappa_{ij})(0) =(z_j^0, \kappa_{ij}^0) \in \bbh\bbs^{d} \times \bbr_+, \quad i, j\in \mathcal{N}.
\end{cases}
\end{equation}
At the end of Section \ref{sec:2}, we will see that system \eqref{A-3} on $\bbc^{(d+1)N}$ can be rewritten as the Lohe sphere model on $\bbr^{2(d+1)N}$ for a special case. Now, we set  
\[ \tilde{\lambda}_{jk} :=\frac{1}{2}\kappa_{jk}+\lambda_{jk}, \quad j, k \in {\mathcal N}, \quad \tilde{\Gamma}(z, w) :=\frac{1}{2}\Gamma_0(z, w)+\Gamma_1(z, w), \quad z, w \in \bbh\bbs^{d}.  \]
Then, under the setting \eqref{A-1-1}, system \eqref{A-1} can be rewritten as a perturbed system of \eqref{A-3}:
\begin{equation}\label{A-4}
\begin{cases}
\displaystyle \dot{z}_j= \frac{1}{N}\sum_{k=1}^N\kappa_{jk} \left[ z_k-\frac{1}{2}\Big(\langle z_j, z_k \rangle+\langle z_k, z_j \rangle \Big)z_j\right]+\frac{1}{N}\sum_{k=1}^N\tilde{\lambda}_{jk}\left(\langle z_j, z_k\rangle-\langle z_k, z_j\rangle\right)z_j,  \vspace{0.2cm} \\
\displaystyle \dot{\kappa}_{jk}=-\gamma\kappa_{jk}+\mu\Gamma_0(z_j,z_k),\quad \dot{\tilde{\lambda}}_{jk}=-\gamma\tilde{\lambda}_{jk}+\mu\tilde{\Gamma}(z_j, z_k),\quad t > 0, \vspace{0.2cm} \\
\displaystyle (z_j, \kappa_{jk}, \lambda_{jk})(0) =(z_j^0, \kappa_{jk}^0, \lambda_{jk}^0) \in \bbh\bbs^{d} \times \bbr_+ \times \bbr, \quad j, k \in {\mathcal N}.
\end{cases}
\end{equation}
and we take the following ansatz for the coupling law $\Gamma_0$:
\begin{equation} \label{A-5}
\Gamma_0(w, z) = 
\begin{cases}
\displaystyle \|w-z\|^2 \quad &: \mbox{Anti-Hebbian coupling law}, \\
\displaystyle 1-\frac{1}{2}\|w-z\|^2 \quad &: \mbox{Hebbian coupling law}. 
\end{cases}
\end{equation}
The choice and meaning of Anti-Hebbian and Hebbian coupling laws will be elaborated in Section \ref{sec:3}. When the coupling gains $\kappa_{jk}$ and $\lambda_{jk}$ are simply positive constants and uniformly independent of $j$ and $k$, emergent dynamics of \eqref{A-1} has been extensively studied in \cite{H-P2}. However, for the coupled system \eqref{A-2}, we will see that coupling gains tend to zero asymptotically. Hence, our presented results do not overlap with the results in aforementioned work. As complete aggregation occurs asymptotically, the vanishing of coupling gains is natural in some sense, because the coupling gain will not be needed, once complete aggregation is achieved. \newline

In what follows, we briefly discuss main results of this paper. First, we study emergent behaviors of \eqref{A-3} under \eqref{A-5} for $\Gamma_0$.  For the anti-Hebbian coupling law, we use the following Lyapunov functional measuring the degree of aggregation:
\begin{equation} \label{A-7}
\mathcal{L}_{ij}=\frac{1}{2}\|z_i-z_j\|^2+\frac{1}{4\mu N}\sum_{k=1}^{N}(\kappa_{ik} -\kappa_{jk})^2.
\end{equation}
Our first result deals with  \eqref{A-3} with anti-Hebbian coupling law $\eqref{A-5}_1$. When initial data satisfy following relation:
\[
 \max_{i,j} \mathcal{L}_{ij}^0 <1, %\quad \min_{i,j}\kappa_{ij}^0 >0,
 \]
complete aggregation emerges and mutual coupling gains tend to zero asymptotically (see Theorem \ref{T3.1}):
\[
\lim_{t \to \infty} \| z_i(t)-z_j(t) \| = 0 \quad \text{and} \quad \lim_{t \to \infty} \kappa_{ij}(t) = 0, \quad i, j \in {\mathcal N}.
\]
Our second result is conserved with  \eqref{A-3} incorporated by Hebbian coupling function $\eqref{A-5}_2$. In this case, instead of \eqref{A-7}, we introduce another functional:
\begin{equation*} \label{A-8}
{\mathcal D}(Z) := \frac{1}{2} \max_{i,j} \|z_i-z_j\|^2,
\end{equation*}
which is the half of square of state diameter. \newline

If there exist a constant $\kappa$ satisfying the following relations
\begin{equation*}
0 < \kappa < \min \left\{  \frac{\mu}{\gamma},~\min_{i,j}\kappa_{ij}^0 \right \}, \quad \max \left\{  \max_{i,j} \kappa^0_{ij}, ~\frac{\mu}{\gamma} \right \} \leq \frac{2\mu \kappa}{2\mu -\gamma \kappa}, 
\quad {\mathcal D}(Z^0) < 1-\frac{\gamma}{\mu} \kappa,
\end{equation*}
then, there exist positive constants $C_0>0$ and $C_1>0$ satisfying
\[
{\mathcal D}(Z(t)) \leq C_0 e^{-C_1 t}, \quad t>0.
\]
We refer to Theorem \ref{T3.2} and Section \ref{sec:4}  for details. \newline

Secondly, we study emergent behaviors of \eqref{A-4} with a general coupling gain pair:
\[ \kappa_{ij} > 0, \quad \lambda_{ij} \in \bbr, \quad\forall~ i, j \in {\mathcal N}. \]
Our third result can be stated as follows. Suppose that system parameters and initial data satisfy 
\[
\begin{cases}
\displaystyle \tilde{\lambda}_{ij}^0={\tilde \lambda}^0, \quad  i,j \in \mathcal{N}, \quad \tilde{\Gamma}(t) = 0, \quad \forall t>0, \quad \vspace{0.2cm} \\
\displaystyle
\max_{i,j}\frac{2|\tilde\lambda^0|}{\kappa_{ij}^0}+\max_{k,l}\mathcal{L}_{kl}^0 < 1,
\end{cases}
\]
where $\mathcal{L}_{ij}^0 := {\mathcal L}_{ij}(Z^0, K^0)$ and $\tilde{\lambda}_{ij}:=\frac{1}{2}\kappa_{ij}+\lambda_{ij}$. \newline

Then under anti-Hebbian coupling law $\eqref{A-5}_1$, we have complete aggregation and vanishing of coupling gains (see Theorem \ref{T3.3}):
\[
\lim_{t \to \infty} \| z_i(t)-z_j(t) \| = 0 \quad \text{and} \quad \lim_{t \to \infty} \kappa_{ij}(t) = 0.
\]
Finally, suppose there exist a constant $\kappa$ such that
\begin{equation*}
2|\tilde{\lambda}^0| < \kappa < \min \left\{  \frac{\mu}{\gamma},~\min_{i,j}\kappa_{ij}^0 \right \}, \quad 
\max \left\{  \max_{i,j} \kappa^0_{ij}, ~\frac{\mu}{\gamma} \right \} \leq \frac{2\mu (\kappa-2|\tilde{\lambda^0}|)}{2\mu-\gamma \kappa}, 
\quad {\mathcal D}(Z^0) < 1-\frac{\gamma}{\mu} \kappa,
\end{equation*}
and let $(Z,K)$ be a solution to \eqref{A-4}. Then under Hebbian coupling law $\eqref{A-5}_2$, there exist positive constants $C_2>0$ and $C_3>0$ satisfying
\[
{\mathcal D}(Z(t)) \leq C_2 e^{-C_3 t}, \quad t>0.
\]
See Theorem \ref{T3.4} for details. 
%
%Next, we briefly compare results of this paper and earlier results in previous literature. For anti-Hebbian case, if we set initial vectors $\{z^0_j\}$ as real vectors in Theorem \ref{T3.3}, then we have a restrict result of Theorem \ref{T3.1}. Here we said \textit{restrict result}, since Theorem \ref{T3.3} assumed the condition $\min_{i, j}\kappa_{ij}^0<1$ however Theorem \ref{T3.1} does not. For the Hebbian case, if we put initial vectors $\{z_j\}$ real vectors in Theorem \ref{L5.6}, then we have Proposition \ref{P4.1} and Theorem \ref{T3.2}. In this case, we can say that Theorem \ref{L5.6} is a generalization of Proposition \ref{P4.1} and Theorem \ref{T3.2}.

\vspace{0.5cm}

The rest of the paper is organized as follows.
In Section \ref{sec:2}, we present basic properties of the LHS-AC model \eqref{A-1}, its relations with  previous aggregation models and a reduction from the generalized Stuart-Landau model to the LHS model with special coupling pair  \eqref{A-2}. In Section \ref{sec:3}, we briefly summarize our main results on the emergent collective behaviors of \eqref{A-1}. In Section \ref{sec:4}, we study the emergent dynamics of \eqref{A-3}. In Section \ref{sec:5}, we study emergent dynamics of \eqref{A-4}. Finally, Section \ref{sec:6} is devoted to a brief summary of our main results and some remaining issues to be addressed in a future work.

\section{Preliminaries} \label{sec:2}
\setcounter{equation}{0}
In this section, we study several basic properties of the LHS-AC model  \eqref{A-0} and its relations with other first-order aggregation models with emergent property. We also provide a reduction from a generalized Stuart-Landau model to the LHS model.
\subsection{Basic estimates} \label{sec:2.1}
In this subsection, we study basic properties of system \eqref{A-1} such as the positivities of coupling gains, conservation of modulus of $z_i$ and solution splitting property.
\begin{lemma}\label{L2.1}
\emph{(Positivity and symmetry of coupling gains)}
Suppose that the coupling laws $\Gamma_0$ and $\Gamma_1$ take nonnegative values:
\[ \Gamma_0(z, \tilde z) \geq 0, \quad  \Gamma_1(z, \tilde z) \geq 0, \quad z, {\tilde z} \in \bbh \bbs^{d}, \]
and let $(Z, K, \Lambda)$ be a solution to \eqref{A-1}. Then, we have the following assertions:
\begin{enumerate}
\item 
If initial coupling gains satisfy
\[
\kappa_{ij}^0>0, \quad \lambda_{ij}^0>0, \quad  \forall~i,j \in \mathcal{N}, 
\]
then one has positivities of coupling gains:
\[
\kappa_{ij}(t)>0, \quad \lambda_{ij}(t) > 0, \quad\forall~ t>0, \quad   \in i,j \mathcal{N}.
\]
\item 
If initial coupling gains satisfy
\[
\kappa_{ij}^0=\kappa_{ji}^0, \quad \lambda_{ij}^0 = \lambda_{ji}^0, \quad\forall~  i,j \in \mathcal{N}, 
\]
then symmetries of the coupling gains are preserved:
\[
\kappa_{ij}(t)=\kappa_{ji}(t), \quad \lambda_{ij}(t) = \lambda_{ji}(t), \quad\forall~ t > 0, \quad  i,j \in \mathcal{N}.
\]
%\item If $\Gamma_k$ is bounded, then $\|\kappa_k^{ij}\|_{L^{\infty}}<\infty$ for any $i,j \in \bar{N}$ and $k=0,1$.
\end{enumerate}
\end{lemma}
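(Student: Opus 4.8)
The plan is to treat the two coupling-gain equations in \eqref{A-1} as decoupled scalar linear ODEs, driven (once $Z$ is regarded as known) by a prescribed continuous inhomogeneity, and to read off both assertions from the Duhamel representation, equivalently from an integrating-factor/comparison argument. Concretely, fix any pair $i,j\in\mathcal{N}$ and work on the maximal interval of existence $[0,T_*)$ of the solution $(Z,K,\Lambda)$, whose nonemptiness follows from the standard Cauchy--Lipschitz theory since the right-hand side of \eqref{A-1} is polynomial in the $z_\ell$ and affine in the gains. Along the flow the map $t\mapsto\Gamma_0(z_i(t),z_j(t))$ is continuous, so the $\kappa_{ij}$-equation is a genuine first-order linear ODE $\dot\kappa_{ij}+\gamma_0\kappa_{ij}=\mu_0\Gamma_0(z_i,z_j)$, and likewise for $\lambda_{ij}$.

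For the positivity assertion (1), I would multiply the $\kappa_{ij}$-equation by the integrating factor $e^{\gamma_0 t}$ to get $\frac{d}{dt}\big(e^{\gamma_0 t}\kappa_{ij}(t)\big)=\mu_0 e^{\gamma_0 t}\Gamma_0(z_i(t),z_j(t))\ge 0$ using $\mu_0>0$ and $\Gamma_0\ge 0$; integrating from $0$ to $t$ then yields
\[
\kappa_{ij}(t)=e^{-\gamma_0 t}\kappa_{ij}^0+\mu_0\int_0^t e^{-\gamma_0(t-s)}\Gamma_0(z_i(s),z_j(s))\,ds\ \ge\ e^{-\gamma_0 t}\kappa_{ij}^0\ >\ 0,
\]
where the last strict inequality is exactly the hypothesis $\kappa_{ij}^0>0$. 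Running the identical computation with $(\gamma_0,\mu_0,\Gamma_0,\kappa_{ij}^0)$ replaced by $(\gamma_1,\mu_1,\Gamma_1,\lambda_{ij}^0)$ gives $\lambda_{ij}(t)>0$. Since these lower bounds never deteriorate as $t\uparrow T_*$, there is no issue extending the conclusion to all $t>0$ once global existence is in hand.

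For the symmetry assertion (2), I would introduce the difference $u_{ij}(t):=\kappa_{ij}(t)-\kappa_{ji}(t)$ and subtract the $\kappa_{ji}$-equation from the $\kappa_{ij}$-equation to obtain $\dot u_{ij}=-\gamma_0 u_{ij}+\mu_0\big(\Gamma_0(z_i,z_j)-\Gamma_0(z_j,z_i)\big)$. The one essential observation is that the coupling law is symmetric in its two arguments, $\Gamma_0(z,\tilde z)=\Gamma_0(\tilde z,z)$ — automatic for both laws in \eqref{A-5}, since each depends on $z,\tilde z$ only through $\|z-\tilde z\|$, and similarly for $\Gamma_1$ — so the forcing term vanishes and $u_{ij}$ solves the homogeneous linear ODE $\dot u_{ij}=-\gamma_0 u_{ij}$ with $u_{ij}(0)=\kappa_{ij}^0-\kappa_{ji}^0=0$; by uniqueness $u_{ij}\equiv 0$, and the same argument with $(\Gamma_0,\kappa)$ replaced by $(\Gamma_1,\lambda)$ gives $\lambda_{ij}\equiv\lambda_{ji}$. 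I do not anticipate any real obstacle here; the only point worth flagging is that, in contrast with (1), nonnegativity of the coupling laws by itself does not propagate symmetry — one genuinely needs them to be symmetric functions, which is exactly the structure of the laws \eqref{A-5} used throughout the paper.
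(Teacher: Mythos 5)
Your proof is correct and follows essentially the same route as the paper: the positivity claim is obtained from the identical Duhamel/integrating-factor representation, and the symmetry claim from the invariance of the gain equations under the index exchange $i\leftrightarrow j$, which you merely make explicit via the homogeneous ODE for $\kappa_{ij}-\kappa_{ji}$ and uniqueness. Your remark that one genuinely needs $\Gamma_0,\Gamma_1$ to be symmetric in their arguments (not just nonnegative) is a fair point; the paper's proof invokes this symmetry implicitly, and it holds for the laws \eqref{A-5} since they depend only on $\|z-\tilde z\|$.
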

\begin{proof} (i)~For the first assertion, we use $\eqref{A-1}_2$ and Duhamel's principle to find the following representations: for $t \geq 0$, 
\begin{align}
\begin{aligned} \label{N-B-1}
\kappa_{ij}(t) &= e^{-\gamma_0 t}\left(\kappa_{ij}^0 + \int_0^t \mu_0 e^{\gamma_0 s} \Gamma_0(z_i(s),z_j(s))ds \right), \\
\lambda_{ij}(t) &= e^{-\gamma_1 t}\left(\lambda_{ij}^0 + \int_0^t \mu_1 e^{\gamma_1 s} \Gamma_1(z_i(s),z_j(s))ds \right).
\end{aligned}
\end{align}
Since system parameters $\mu_k$ and $\gamma_k$ are nonnegative, it follows from \eqref{N-B-1} that 
\[ \kappa_{ij}(t) > 0, \quad \lambda_{ij}(t) > 0, \quad t > 0. \]
(ii)~For the second assertion, we use the symmetry of $\kappa_{ij}^0,~\lambda_{ij}^0,~\Gamma_0$ and $\Gamma_1$ in the index exchange $i \longleftrightarrow j$ to find the desired results. 
\end{proof}
\begin{lemma}\label{L2.2}
\emph{(Conservation of modulus)}
Let $(Z, K, \Lambda)$ be a solution to the Cauchy problem
 \eqref{A-1}. Then, the modulus $\| z_i \|$ is a conserved quantity:~for $i \in {\mathcal N}$, 
\[ \|z_i^0\|=1 \quad \Longrightarrow \quad \|z_i(t) \| = 1, \quad t > 0. \]
\end{lemma}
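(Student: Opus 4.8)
The plan is to prove the slightly stronger (and cleaner) fact that $t\mapsto\|z_i(t)\|^2$ is constant along every solution of \eqref{A-1}; the stated implication is then just the case $\|z_i^0\|=1$. First I would differentiate $\|z_i\|^2=\langle z_i,z_i\rangle$ in time. Since the form $\langle\cdot,\cdot\rangle$ is conjugate-linear in its first slot and linear in its second, the product rule gives
\[
\frac{d}{dt}\|z_i\|^2=\langle\dot z_i,z_i\rangle+\langle z_i,\dot z_i\rangle=2\,\mathrm{Re}\,\langle z_i,\dot z_i\rangle ,
\]
using $\langle\dot z_i,z_i\rangle=\overline{\langle z_i,\dot z_i\rangle}$. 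So it suffices to show that the right-hand side of $\eqref{A-1}_1$, paired against $z_i$ in the \emph{second} slot, is purely imaginary.

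Next I would split the vector field into its three pieces. For the free-flow term, skew-Hermiticity $\Omega^\dagger=-\Omega$ yields $\langle z_i,\Omega z_i\rangle=-\langle\Omega z_i,z_i\rangle=-\overline{\langle z_i,\Omega z_i\rangle}$, so $\langle z_i,\Omega z_i\rangle\in i\bbr$. For the Lohe sphere coupling term, pulling the scalars out of the second slot and using $\langle z_i,z_i\rangle=\|z_i\|^2\in\bbr$ gives
\[
\Big\langle z_i,\ \langle z_i,z_i\rangle z_k-\langle z_k,z_i\rangle z_i\Big\rangle
=\|z_i\|^2\big(\langle z_i,z_k\rangle-\langle z_k,z_i\rangle\big),
\]
and $\langle z_i,z_k\rangle-\langle z_k,z_i\rangle=\langle z_i,z_k\rangle-\overline{\langle z_i,z_k\rangle}\in i\bbr$. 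The rotational term is identical in structure: $\big(\langle z_i,z_k\rangle-\langle z_k,z_i\rangle\big)\langle z_i,z_i\rangle$ is a real multiple of a purely imaginary number. Since the gains $\kappa_{ik},\lambda_{ik}$ and the prefactor $1/N$ are real, summing over $k$ preserves these properties, so $\langle z_i,\dot z_i\rangle\in i\bbr$ and $\frac{d}{dt}\|z_i\|^2=0$. Integrating from $0$ to $t$ yields $\|z_i(t)\|=\|z_i^0\|=1$.

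There is no real obstacle here: the argument is pure bookkeeping about which quantities are real and which are imaginary, and on which side of the sesquilinear form each scalar sits — the one subtlety worth flagging is that, under the convention $\langle z,\tilde z\rangle=\sum_\alpha\overline{[z]_\alpha}[\tilde z]_\alpha$, a scalar extracted from the first argument is conjugated while one extracted from the second is not, which is exactly why pairing the coupling terms on the \emph{second} slot is the convenient choice. I would also remark that the coupling-gain equations $\eqref{A-1}_2$ and the specific choice of $\Gamma_0,\Gamma_1$ are irrelevant to this lemma beyond the gains being real-valued.
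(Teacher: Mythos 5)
Your proof is correct and follows essentially the same route as the paper: the paper writes out $\langle z_i,\dot z_i\rangle+\langle\dot z_i,z_i\rangle$ and observes term-by-term cancellation (the $\Omega$ contributions by skew-Hermiticity, the coupling contributions because each is a real multiple of $\langle z_i,z_k\rangle-\langle z_k,z_i\rangle$ plus its conjugate), which is just a repackaging of your observation that $\langle z_i,\dot z_i\rangle$ is purely imaginary. No gaps.
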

\begin{proof}
It follows from the symmetry of coupling strengths that 
\begin{align}
\begin{aligned} \label{B-0}
\langle z_i,\dot{z_i} \rangle &= \langle z_i, \Omega z_i \rangle +\displaystyle\frac{1}{N}\sum_{k=1}^N\kappa_{ik} \Big (\langle z_i, z_k \rangle -\langle z_k, z_i\rangle \Big ) \langle z_i, z_i \rangle +\frac{1}{N}\sum_{k=1}^N\lambda_{ik} \Big(\langle z_i, z_k\rangle-\langle z_k,z_i\rangle \Big) \langle z_i, z_i \rangle,\\
\langle \dot{z}_i, z_i \rangle &= \langle \Omega z_i, z_i \rangle +\displaystyle\frac{1}{N}\sum_{k=1}^N\kappa_{ik} \Big( \langle z_k, z_i \rangle -\langle z_i, z_k \rangle \Big) \langle z_i, z_i \rangle +\frac{1}{N}\sum_{k=1}^N \lambda_{ik} \Big(\langle z_k, z_i\rangle-\langle z_i,z_k\rangle \Big)\langle z_i, z_i \rangle.
\end{aligned}
\end{align}
Since $\Omega_i$ is skew-hermitian, we have 
\begin{equation} \label{B-0-0}
\langle z_i, \Omega z_i \rangle + \langle \Omega z_i, z_i \rangle = 0. 
\end{equation}
Finally, we combine \eqref{B-0} and \eqref{B-0-0} to obtain the desired estimate:
\[
\frac{d}{dt}\|z_i\|^2=\frac{d}{dt}\langle z_i,z_i \rangle = \langle z_i, \dot{z}_i \rangle + \langle \dot{z}_i, z_i \rangle = 0.
\]
\end{proof}
Now, we consider corresponding linear and nonlinear flows:
\begin{equation}  \label{B-1-0}
\begin{cases}
\displaystyle \dot{f}_j = \Omega f_j, \quad t > 0,~~\forall~j \in {\mathcal N}, \\
\displaystyle f_j(0) = f_j^0,
\end{cases}
\end{equation}
and 
\begin{equation}\label{B-1-1}
\begin{cases}
\displaystyle \dot{w}_j= \frac{1}{N}\sum_{k=1}^N\kappa_{jk}\big(w_k-\langle w_k, w_j\rangle w_j\big)+\frac{1}{N}\sum_{k=1}^N\lambda_{jk}\big(\langle w_j, w_k\rangle-\langle w_k,w_j\rangle\big)w_j,\vspace{0.1cm}\\
\displaystyle \dot{\kappa}_{jk}= -\gamma_0\kappa_{jk}+\mu_0\Gamma_0(w_j, w_k),\quad
\dot{\lambda}_{jk}= -\gamma_1\lambda_{jk}+\mu_1\Gamma_1(w_j, w_k),\vspace{0.1cm}\\
(w_j(0), \kappa_{jk}(0), \lambda_{jk}(0)) = (w_j^0, \kappa_{jk}^0, \lambda_{jk}^0) \in \bbh\bbs^d \times \bbr_+ \times \bbr, \quad i, j\in {\mathcal N}.
\end{cases} 
\end{equation}

\vspace{0.2cm}

Let $R$ and $L_j$ be solution operators to \eqref{B-1-0} and $\eqref{B-1-1}$, respectively. Then, solutions to \eqref{B-1-0} and $w_j$ in $\eqref{B-1-1}$ can be represented as follows.
\[
f_j(t) = R(t) f_j^0  =: e^{\Omega t} f_j^0, \quad  w_j(t) := L_j(t) (W^0,K^0,\Lambda^0), \quad \forall~j \in  {\mathcal N}.
\]
In next lemma, we show that the full solution operator to \eqref{A-1} can be expressed as a composition of $R$ and $L_j$.
\begin{lemma} \label{L2.3}
\emph{(Solution splitting property)} Let $(Z, K, \Lambda)$ be a solution to system \eqref{A-1} with initial data $(Z^0, K^0, \Lambda^0)$ satisfying 
\[ \Omega_j \equiv \Omega, \quad j \in \mathcal{N}. \]
Then, $z_j$ can be decomposed as a composition of $f_j$ and $w_j$:
\[  z_j(t) = R(t) \circ L_j(t)(Z^0,K^0,\Lambda^0). \]
\end{lemma}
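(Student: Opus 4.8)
The plan is to verify the splitting directly, by conjugating the state variables with the linear flow and leaving the coupling gains untouched. Given a solution $(Z,K,\Lambda)$ of \eqref{A-1} with $\Omega_j\equiv\Omega$, I would set
\[ w_j(t) := R(t)^{-1} z_j(t) = e^{-\Omega t} z_j(t), \qquad j\in\mathcal N, \]
and keep $K$ and $\Lambda$ as they are. The goal is then to show that $(W,K,\Lambda)$ solves the nonlinear system \eqref{B-1-1} with the same initial data $(Z^0,K^0,\Lambda^0)$; once this is established, $w_j(t) = L_j(t)(Z^0,K^0,\Lambda^0)$ and hence $z_j(t) = R(t) w_j(t) = R(t)\circ L_j(t)(Z^0,K^0,\Lambda^0)$, which is the claim. (One could equivalently run the argument in reverse: define $w_j$ to be the solution of \eqref{B-1-1}, set $\tilde z_j := R(t) w_j$, check that $\tilde z_j$ solves \eqref{A-1} with data $Z^0$, and conclude by uniqueness; the computations are the same.)

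The first ingredient is that $\Omega$ being skew-hermitian makes $e^{\Omega t}$ unitary, so $R(t)$ is an isometry of $\bbc^{d+1}$ for the hermitian inner product. In particular
\[ \langle w_i(t), w_j(t)\rangle = \langle e^{-\Omega t} z_i(t), e^{-\Omega t} z_j(t)\rangle = \langle z_i(t), z_j(t)\rangle, \qquad i,j\in\mathcal N, \]
and, by Lemma \ref{L2.2}, $\|w_j(t)\| = \|z_j(t)\| = 1$. Since the coupling laws depend on the states only through these inner products (equivalently, through the relative distances $\|z_i-z_j\|$ appearing in \eqref{A-5}), we get $\Gamma_0(w_i,w_j)=\Gamma_0(z_i,z_j)$ and $\Gamma_1(w_i,w_j)=\Gamma_1(z_i,z_j)$, so the gain equations $\eqref{B-1-1}_2$--$\eqref{B-1-1}_3$ coincide with $\eqref{A-1}_2$ verbatim. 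It then remains to check the state equation: differentiating $w_j = e^{-\Omega t} z_j$ and using that $\Omega$ commutes with $e^{-\Omega t}$,
\[ \dot w_j = -\Omega w_j + e^{-\Omega t}\dot z_j = -\Omega w_j + e^{-\Omega t}\Big(\Omega z_j + (\text{coupling terms of }\eqref{A-1})\Big) = e^{-\Omega t}(\text{coupling terms}), \]
because $e^{-\Omega t}\Omega z_j = \Omega w_j$ cancels $-\Omega w_j$. Since $e^{-\Omega t}$ is linear and isometric, each coupling term transforms by merely replacing $z$ with $w$, e.g.
\[ e^{-\Omega t}\big(\langle z_j,z_j\rangle z_k - \langle z_k,z_j\rangle z_j\big) = \langle w_j,w_j\rangle w_k - \langle w_k,w_j\rangle w_j = w_k - \langle w_k,w_j\rangle w_j, \]
and likewise for the rotational term $\lambda_{jk}(\langle z_j,z_k\rangle-\langle z_k,z_j\rangle)z_j$; thus $w_j$ satisfies $\eqref{B-1-1}_1$ with $w_j(0)=z_j^0$.

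I do not expect a genuine obstacle here — the proof is essentially bookkeeping around the unitarity of $R(t)$. The one point that must be handled with a little care is the invariance $\Gamma_k(w_i,w_j)=\Gamma_k(z_i,z_j)$ under the common flow $R(t)$: this is precisely where the hypothesis $\Omega_j\equiv\Omega$ (rather than particle-dependent free flows) enters, together with the fact that the relevant coupling functions are built from $\|z_i-z_j\|$, which the isometry $R(t)$ preserves. Everything else is the standard "interaction flow commutes with a common linear flow" mechanism.
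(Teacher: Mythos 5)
Your proposal is correct and follows essentially the same route as the paper: substitute the ansatz $z_j = e^{\Omega t} w_j$ and use the unitarity of $e^{\Omega t}$ (from skew-hermitianity of $\Omega$) to cancel the free-flow term and pull the conjugation through the inner products. You are in fact slightly more careful than the paper in checking that the gain equations are preserved, i.e.\ that $\Gamma_k(w_i,w_j)=\Gamma_k(z_i,z_j)$ because the coupling laws depend only on isometry-invariant quantities — a point the paper's proof leaves implicit.
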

\begin{proof} We substitute $z_j=e^{\Omega t}w_j$ into \eqref{A-0} to obtain
\begin{align}
\begin{aligned} \label{B-1-2}
e^{\Omega t}\dot{w}_j + \Omega e^{\Omega t}w_j =\Omega e^{\Omega t}w_j+\displaystyle\frac{1}{N}\sum_{k=1}^N\kappa_{jk}\big(e^{\Omega t}w_k-\langle e^{\Omega t}w_k, e^{\Omega t}w_j\rangle e^{\Omega t}w_j\big)\\
+\frac{1}{N}\sum_{k=1}^N\lambda_{jk}\big(\langle e^{\Omega t}w_j, e^{\Omega t}w_k\rangle-\langle e^{\Omega t}w_k,e^{\Omega t}w_j\rangle\big)e^{\Omega t}w_j.
\end{aligned}
\end{align}
On the other hand, we use the skew-hermitian property of  $\Omega$ to find
\begin{equation} \label{B-1-3}
    (e^{\Omega t})^{\dagger} = (e^{\Omega t})^{-1}.
\end{equation}
Finally, we combine \eqref{B-1-2} and \eqref{B-1-3} to get 
\[
\dot{w}_j  = \displaystyle\frac{1}{N}\sum_{k=1}^N\kappa_{jk}\big(w_k-\langle w_k, w_j\rangle w_j\big) +\frac{1}{N}\sum_{k=1}^N\lambda_{jk}\big(\langle w_j, w_k\rangle-\langle w_k,w_j\rangle\big)w_j,
\]
so that $w_j(t)=L_j(t)(W^0, K^0, \Lambda^0)$. This gives a desired result.
\end{proof}

\vspace{0.5cm}

Note that the LHS model $\eqref{A-1}_1$ contains two terms involving with $\kappa_{jk}$ and $\lambda_{jk}$.  To see the role of each coupling gain separately, we consider the following subsystems:  \newline

\noindent$\bullet$ (Subsystem A): If we impose the following conditions on \eqref{A-1}:
\begin{align*}
\lambda^0_{jk}=0\quad \text{for all} \quad j, k\in \mathcal{N}, \quad \text{and} \quad\mu_1=0,
\end{align*}
then we have 
\[ \lambda_{jk}(t)=0, \quad \forall~t > 0, \quad  j, k\in \mathcal{N}. \] 
In this case, we have Subsystem A:
\begin{align}\label{B-2}
\begin{cases}
\dot{z}_j=\Omega z_j+\displaystyle\frac{1}{N}\sum_{k=1}^N\kappa_{jk}\big(\langle z_j, z_j\rangle z_k-\langle z_k, z_j\rangle z_j\big), \quad t > 0, \vspace{0.2cm}\\
\dot{\kappa}_{jk}= -\gamma_0\kappa_{jk}+\mu_0\Gamma_0(z_j, z_k), \quad  j, k \in \mathcal{N}, \vspace{0.2cm}\\
(z_j(0), \kappa_{jk}(0)) = (z_j^0, \kappa_{jk}^0) \in \bbh\bbs^{d} \times \bbr_+.
\end{cases}
\end{align}

\vspace{0.5cm}

\noindent$\bullet$ (Subsystem B): If we impose the following condition on \eqref{A-1}:
\begin{align*}
\kappa_{jk}^0=0\quad \text{for all} \quad j, k\in \mathcal{N}, \quad \text{and} \quad\mu_0=0,
\end{align*}
then we have 
\[ \kappa_{jk}(t)=0 \quad \forall~ t > 0, \quad j, k\in \mathcal{N}. \] 
In this case, one has Subsystem B:
\begin{align}\label{B-3}
\begin{cases}
\dot{z}_j=\Omega z_j+\displaystyle\frac{1}{N}\sum_{k=1}^N\lambda_{jk}\big(\langle z_j, z_k\rangle-\langle z_k,z_j\rangle\big)z_j, \quad t > 0, \vspace{0.2cm}\\
\dot{\lambda}_{jk}= -\gamma_1 \lambda_{jk}+\mu_1\Gamma_1(z_j, z_k), \quad  i, j\in \mathcal{N}, \vspace{0.2cm}\\
(z_j(0), \lambda_{jk}(0)) = (z_j^0, \lambda_{jk}^0) \in \bbh\bbs^d \times \bbr_+.
\end{cases}
\end{align}

\subsection{Reductions to other aggregation models}  \label{sec:2.2}
In this subsection, we study the relations between \eqref{A-1} and other first-order aggregation models with adaptive couplings. \newline

For a real vector-valued state $\{z_1, z_2, \cdots, z_N\} \subset \bbr^{d+1}$, the interaction terms in Subsystem B become zero:
\[  \langle z_j, z_k\rangle-\langle z_k,z_j\rangle = 0, \quad j, k \in {\mathcal N}\]
so that Subsystem B reduces to the free flow:
 \begin{equation*}
 \begin{cases} \label{B-3-0}
 \displaystyle \dot{z}_j =\Omega z_j,  \quad t > 0, \quad  j, k \in {\mathcal N}, \\
 \displaystyle \dot{\lambda}_{jk} = -\gamma_1 \lambda_{jk} +\mu_1\Gamma_1(z_j, z_k).
\end{cases}
\end{equation*}

In next lemma, we show that the real-valuedness of the components of $z_j$ and $\kappa_{ij}$ are propagated along system \eqref{B-2}.
\begin{lemma} \label{L2.4}
Let $(Z, K, \Lambda)$ be a solution to \eqref{B-2} with initial data satisfying following conditions:
\begin{align*}\label{B-3-1}
z_j^0  \in\bbr^{d+1},\quad \Omega \in\bbr^{(d+1)\times (d+1)}, \quad \kappa_{jk}=\kappa_{kj}, \quad \forall~j, k \in {\mathcal N}.
\end{align*}
Then, one has 
\[
z_j(t)\in\bbr^{d+1}, \quad t \geq 0, \quad  j \in {\mathcal N}.
\] 
\end{lemma}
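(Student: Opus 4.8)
The plan is to show that the vector field of \eqref{B-2} preserves the real subspace $\bbr^{d+1}\subset\bbc^{d+1}$ for each $z_j$, and that the $\kappa_{jk}$-equations keep real values as well; then invoke uniqueness of solutions to the ODE system to conclude that the solution issued from real initial data stays real for all time. The key observation is that, under the stated hypotheses, the pair $(\bar z_1,\dots,\bar z_N,\bar K)$ obtained by taking complex conjugates componentwise is also a solution of \eqref{B-2} with the \emph{same} initial data, so the two solutions must coincide.

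\smallskip

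\textbf{Step 1 (Conjugation symmetry of the state equations).} First I would take the complex conjugate of $\eqref{B-2}_1$. Since $\Omega$ has real entries, $\overline{\Omega z_j}=\Omega\bar z_j$. For the interaction term, observe that the Hermitian inner product satisfies $\overline{\langle z,w\rangle}=\langle \bar z,\bar w\rangle$ and $\langle z_j,z_j\rangle=\|z_j\|^2$ is real, hence
\[
\overline{\langle z_j,z_j\rangle z_k-\langle z_k,z_j\rangle z_j}=\langle \bar z_j,\bar z_j\rangle \bar z_k-\langle \bar z_k,\bar z_j\rangle \bar z_j .
\]
If moreover the $\kappa_{jk}$ remain real (established in Step 2), conjugating $\eqref{B-2}_1$ gives exactly the same system satisfied by $(\bar z_j)$.

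\smallskip

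\textbf{Step 2 (Reality of the coupling gains).} From Lemma \ref{L2.1} (or directly from the Duhamel representation \eqref{N-B-1} specialized to \eqref{B-2}), we have
\[
\kappa_{jk}(t)=e^{-\gamma_0 t}\Big(\kappa_{jk}^0+\int_0^t \mu_0 e^{\gamma_0 s}\,\Gamma_0(z_j(s),z_k(s))\,ds\Big).
\]
The coupling law $\Gamma_0$ in \eqref{A-5} depends on $z_j,z_k$ only through $\|z_j-z_k\|^2$, which is automatically real-valued regardless of whether the states are real; together with $\gamma_0,\mu_0\in\bbr$ and $\kappa_{jk}^0\in\bbr$ this forces $\kappa_{jk}(t)\in\bbr$ for all $t$. (One can also argue more crudely: $\dot\kappa_{jk}=-\gamma_0\kappa_{jk}+\mu_0\Gamma_0$ with a real forcing term and real initial datum has a real solution by uniqueness for the scalar linear ODE.) This closes the loop with Step 1, so that the conjugate $(\bar Z,\bar K)$ genuinely solves \eqref{B-2}.

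\smallskip

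\textbf{Step 3 (Uniqueness).} The right-hand side of \eqref{B-2} is smooth (polynomial in the states and their conjugates, linear in $\kappa$, and $\Gamma_0$ is a polynomial), hence locally Lipschitz on $(\bbh\bbs^d)^N\times\bbr_+^{N\times N}$; by Lemma \ref{L2.2} the states stay on $\bbh\bbs^d$ and by Lemma \ref{L2.1} the gains stay positive, so the solution exists globally and is unique. Since $(Z,K)$ and $(\bar Z,\bar K)$ are two solutions with the same initial data $z_j^0\in\bbr^{d+1}$, $\kappa_{jk}^0\in\bbr$, uniqueness yields $Z(t)=\bar Z(t)$ and $K(t)=\bar K(t)$ for all $t\ge 0$, i.e.\ $z_j(t)\in\bbr^{d+1}$ and $\kappa_{jk}(t)\in\bbr$, as claimed. \emph{The main obstacle} — really the only delicate point — is ensuring the symmetry $\kappa_{jk}=\kappa_{kj}$ in the hypothesis is actually used correctly: it is needed so that the original interaction sum and its conjugate match index-by-index; without it one would only recover that $(\bar z_j)$ solves a system with $\kappa$ replaced by $\bar\kappa=\kappa$, which is fine here, so in fact the symmetry assumption is harmless and the argument goes through cleanly via the conjugation-plus-uniqueness scheme.
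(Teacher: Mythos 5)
Your argument is correct. The only point worth double-checking is the conjugation identity, and it holds with the paper's convention $\langle z,\tilde z\rangle=\sum_\alpha \overline{[z]_\alpha}[\tilde z]_\alpha$: indeed $\overline{\langle z,\tilde z\rangle}=\langle \bar z,\bar{\tilde z}\rangle$, so with $\Omega$ real and $\kappa_{jk}$ real (which is automatic, since the $\kappa$-equation is a real scalar ODE with real forcing $\Gamma_0$), the conjugated tuple $(\bar Z,K)$ solves $\eqref{B-2}$ with the same initial data, and uniqueness for the (real-smooth, locally Lipschitz, globally bounded on the invariant set $(\bbh\bbs^d)^N$) vector field forces $Z=\bar Z$.

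Your route differs from the paper's, though both ultimately rest on uniqueness. The paper proves invariance of the real subspace by exhibiting a real solution: it observes that $\eqref{B-2}_1$ restricted to real states coincides with the Lohe sphere model on $\bbs^d\subset\bbr^{d+1}$, which has its own unique global solution, and then identifies the complex solution with that real one. You instead prove invariance by exhibiting a symmetry: complex conjugation commutes with the flow, so the solution equals its own conjugate. Your version is somewhat more self-contained (it does not appeal to well-posedness of the LS model as an external fact, only to the structure of the right-hand side), and it makes transparent why the tangency of the vector field to $\bbr^{(d+1)N}$ holds. It also correctly exposes that the symmetry hypothesis $\kappa_{jk}=\kappa_{kj}$ plays no role in the invariance argument itself; in the paper it is carried along because the subsequent identification with the LS model \eqref{B-4} is stated under that symmetry. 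Either proof is acceptable.
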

\begin{proof}
Since  the R.H.S. of system $\eqref{B-2}_1$ is Lipschitz continuous with respect to state variables and uniformly bounded, global well-posedness of classical solutions are guaranteed by the standard Cauchy-Lipschitz theory. Meanwhile, governing system $\eqref{B-2}_1$ coincides with the LS model on the sphere in $\bbr^{d+1}$. On the other hand, the LS model has a unique solution which is bounded in $\mathbb{R}^{d+1}$. Thus, we have a desired result.
\end{proof}

By Lemma \ref{L2.4}, if we assume that $(Z,K, \Lambda)$ is a solution to \eqref{B-2} with following conditions:
\[
z_j^0 \in\bbr^{d+1},\quad \Omega \in\bbr^{(d+1)\times (d+1)}, \quad j \in {\mathcal N},
\]
then $(Z,K, \Lambda)$ is a solution to system \eqref{B-4} with following conditions:
\[
\gamma=\gamma_1,\quad \mu=\mu_1,\quad \Gamma=\Gamma_1,\quad \kappa_{ij}^0=\kappa_{ji}^0.
\]
This implies that system \eqref{B-2} can be reduced to \eqref{B-4} with real natural frequency matrices. Then, Subsystem A \eqref{B-2} reduces to the LS model with adaptive couplings:
\begin{align}\label{B-4}
\begin{cases}
\dot{x}_j=\Omega x_j +\displaystyle\frac{1}{N}\sum_{k=1}^N\kappa_{jk} \Big(\langle x_j, x_j \rangle x_k -\langle x_k, x_j \rangle x_j \Big),\quad t>0,\quad j \in {\mathcal N}, \vspace{0.2cm} \\
\dot{\kappa}_{jk}=-\gamma\kappa_{jk}+\mu\Gamma(x_j, x_k), \vspace{0.2cm} \\
(x_i(0), \kappa_{jk}(0)) = (x_i^0, \kappa_{jk}^0) \in \bbs^d \times \bbr_+.
\end{cases}
\end{align}

\vspace{0.2cm}

Next, we show that  Subsystem A and Subsystem B can be reduced to the Kuramoto model with adaptive couplings in three different ways:
\begin{align}\label{B-6}
\begin{cases}
\dot{\theta}_j =\nu_j +\displaystyle\frac{1}{N}\sum_{l=1}^N\kappa_{jl}\sin(\theta_l-\theta_j),\\
\dot{\kappa}_{jk}=-\gamma\kappa_{jk}+\mu\Gamma(\theta_k -\theta_j),\quad t>0,\quad j, k \in {\mathcal N},
\end{cases}
\end{align}
where $\Gamma$ satisfies
\[
\Gamma(-\theta)=\Gamma(\theta),\quad \Gamma(\theta+2\pi)=\Gamma(\theta).
\]
\vspace{0.2cm}

\noindent First, let $\{x_j \}$ be a solution of the LS model \eqref{B-4} with adaptive couplings. We set 
\[
d=1,\quad x_j =\begin{bmatrix}
\cos\theta_j \\
\sin\theta_j
\end{bmatrix},\quad
\Omega_j =\begin{bmatrix}
0&-\nu_j \\
\nu_j &0
\end{bmatrix}.
\]
Then, system \eqref{B-4}$_1$ can be converted to
\[
\dot{\theta}_j \begin{bmatrix}
-\sin\theta_j \\
\cos\theta_j
\end{bmatrix}=
\nu_j \begin{bmatrix}
-\sin\theta_j \\
\cos\theta_j
\end{bmatrix}+\frac{1}{N}\sum_{k=1}^N\kappa_{jk}\left(\begin{bmatrix}
\cos\theta_k\\
\sin\theta_k
\end{bmatrix}-\cos(\theta_j-\theta_i)\begin{bmatrix}
\cos\theta_j \\
\sin\theta_j
\end{bmatrix}
\right).
\] 
This yields
\begin{equation} \label{B-6-1}
\dot{\theta}_j =\nu_j +\frac{1}{N}\sum_{k=1}^N\kappa_{jk}\sin(\theta_k-\theta_j)
\end{equation}
and we can obtain \eqref{B-6}$_1$. \newline

Note that the dynamics $\eqref{B-4}_2$ of $\kappa_{jk}$  can also be expressed as
\[
\dot{\kappa}_{jk}=-\gamma\kappa_{jk}+\mu\Gamma\left( \begin{bmatrix}
\cos\theta_j\\
\sin\theta_j
\end{bmatrix},\begin{bmatrix}
\cos\theta_k\\
\sin\theta_k
\end{bmatrix}\right).
\]
From the simple assumption $\Gamma(x, y)= {\tilde \Gamma}(\|x-y\|)$, there is a proper function $\hat{\Gamma}$ with the following properties:
\begin{align}\label{B-7}
\Gamma\left( \begin{bmatrix}
\cos\theta_i\\
\sin\theta_i
\end{bmatrix},\begin{bmatrix}
\cos\theta_j\\
\sin\theta_j
\end{bmatrix}\right)= \tilde{\Gamma} \left(2\left|\sin\left(\frac{\theta_i-\theta_j}{2}\right)\right|\right)= \hat{\Gamma}(\theta_i-\theta_j).
\end{align}
It is easy to check that $\hat{\Gamma}$ satisfies
\[
\hat{\Gamma}(-\theta)=\hat{\Gamma}(\theta),\quad \hat{\Gamma}(\theta+2\pi)= \hat{\Gamma}(\theta).
\]
Thus, system $\eqref{B-6}_2$ becomes 
\begin{equation} \label{B-6-2}
\dot{\kappa}_{jk}=-\gamma\kappa_{jk}+\mu \hat{\Gamma}(\theta_k-\theta_j).
\end{equation}
Finally, we combine \eqref{B-6-1} and \eqref{B-6-2} to derive the Kuramoto model with adaptive couplings \eqref{B-6}. \newline

Second, we consider Subsystem A \eqref{B-2}. Let $(Z, K)$ be a solution to \eqref{B-2} with
\begin{align}\label{B-8}
d=0,\quad z_j=e^{\mathrm{i}\theta_j},\quad \Omega_j=\mathrm{i}\nu_j.
\end{align}
Then we substitute \eqref{B-8} into $\eqref{B-2}_1$ to get 
\begin{align*}
\mathrm{i}\dot{\theta}_je^{\mathrm{i}\theta_j}=\mathrm{i}\nu_je^{\mathrm{i}\theta_j}+\frac{1}{N}\sum_{k=1}^N\kappa_{jk}\left(e^{\mathrm{i}\theta_k}-e^{\mathrm{i}(2\theta_j-\theta_k)}\right)
\end{align*}
which can be simplified as 
\[
\dot{\theta}_j=\nu_j+\frac{2}{N}\sum_{k=1}^N\kappa_{jk} \sin(\theta_k-\theta_j).
\]
By the same arguments as in \eqref{B-7}, we can also reduce \eqref{B-2}$_2$ to 
\[
\dot{\kappa}_{jk}=-\gamma_0\kappa_{jk}+\mu_0 \hat{\Gamma}_0(\theta_k-\theta_j).
\]
Again, Subsystem A can be reduced to the Kuramoto model with adaptive couplings. \newline

Third, we consider Subsystem B \eqref{B-3}. Let $(Z, K)$ be a solution to \eqref{B-3} with \eqref{B-8}. Then, by  the same argument as in Subsystem A, we can convert \eqref{B-3} as 
\begin{align*}
\begin{cases}
\dot{\theta}_j=\nu_j+\displaystyle\frac{2}{N}\sum_{k=1}^N\kappa_{jk}\sin(\theta_k-\theta_j),\\
\dot{\kappa}_{jk} =-\gamma\kappa_{jk} +\mu_1 \hat{\Gamma}(\theta_k -\theta_j).
\end{cases}
\end{align*}
This implies that Subsystem B can be reduced to the Kuramoto model with adaptive coupling gains. To sum up,  we can visualize aforementioned discussions in the following diagram.
\vspace{0.5cm}
\begin{center}
\begin{tabular}{c c c c c}
&&Subsystem A  &$\rightarrow$&Lohe sphere model\\
&&\eqref{B-2}&&with adaptive couplings\\
&&&&\eqref{B-4}\\
&$\nearrow$&&$\searrow$&$\downarrow$\\
Lohe hermitian sphere model &&&&Kuramoto model \\
with adaptive couplings&&&&with adaptive couplings\\
\eqref{A-0}&&&&\eqref{B-6}\\
&$\searrow$&&$\nearrow$&\\
&&Subsystem B\\
&& \eqref{B-3}
\end{tabular}
\end{center}
\vspace{0.5cm}

The LHS model with adaptive coupling gains can be reduced to Subsystem A and Subsystem B by setting $\lambda_{jk}\equiv0$ and $\kappa_{jk}\equiv0$, respectively. Each subsystem can  also be reduced to the Kuramoto model with adaptive couplings. This implies that each coupling term of the LHS model with adaptive couplings can be reduced to the Kuramoto model  with adaptive couplings. So we can conclude that the LHS model \eqref{A-0} with adaptive couplings is well-defined. 

\subsection{From the Stuart-Landau model to the LHS model}\label{sec:2.3}
In this subsection, we explain how the special coupling gain relation $\lambda_{jk} = -\frac{1}{2} \kappa_{jk}$ can arise in the reduction from  the generalized Stuart-Landau model  to the LHS model. \newline

Consider a {\it generalized Stuart-Landau model} on $\bbc^{d+1}$:  
\begin{equation} \label{B-9}
\frac{dz_j}{dt} =\big((1-\|z_j\|^2)I_{d+1}+\Omega \big)z_j+\frac{\kappa}{N}\sum_{k=1}^N(z_k-z_j),
\end{equation}
where $z_j\in\bbc^{d+1}$ for all $j\in\mathcal{N}$, $\Omega$ is a skew-hermitian matrix with the size $(d+1)\times (d+1)$ and $I_{d+1}$ is the identity matrix with the size $(d+1)\times (d+1)$. \newline

We substitute the ansatz: 
\[
z_j=r_jw_j, \quad r_j=\|z_j\| \quad \mbox{and} \quad w_j=\frac{z_j}{\|z_j \|}, \quad \forall~j \in \mathcal{N}
\]
into \eqref{B-9} to see 
\begin{equation} \label{B-10}
\dot{r}_jw_j+r_j\dot{w}_j=(1-r_j^2)r_jw_j+r_j\Omega w_j+\frac{\kappa}{N}\sum_{k=1}^N(r_k w_k -r_jw_j).
\end{equation}
Then, $\langle w_j, \eqref{B-10} \rangle$ implies
\begin{align}\label{B-11}
\dot{r}_j+r_j\langle w_j, \dot{w}_j\rangle=(1-r_j^2)r_j+r_j\langle w_j,  \Omega w_j\rangle+\frac{\kappa}{N}\sum_{k=1}^N (r_k \langle w_j, w_k \rangle-r_j).
\end{align}
If we take the real part of \eqref{B-11}, one has
\begin{align}\label{B-12}
\dot{r}_j=(1-r_j^2)r_j+\frac{\kappa}{N}\sum_{k=1}^N(r_k \mathrm{Re}(\langle w_j, w_k \rangle)-r_j).
\end{align}
Here we used the relations:
\[ \langle w_j, \dot{w}_j\rangle+\langle \dot{w}_j, w_j\rangle=   \langle w_j, \dot{w}_j\rangle + \overline{\langle w_j, \dot{w}_j\rangle} = 2 \mbox{Re}~ \langle w_j, \dot{w}_j\rangle, \quad 
\langle w_j,  \Omega w_j\rangle = 0. \]
Now, we combine \eqref{B-11} and \eqref{B-12} to get 
\begin{align*}
\dot{w}_j=\Omega w_j+\frac{\kappa}{N}\sum_{k=1}^N\frac{r_k}{r_j}\Big(w_k -\mathrm{Re}(\langle w_k, w_j\rangle) w_j \Big).
\end{align*}
Similarly, we impose $r_j\equiv1$ on \eqref{B-12} to obtain
\begin{align}
\begin{aligned} \label{B-12-1}
\dot{w}_j&=\Omega w_j+\frac{\kappa}{N}\sum_{k=1}^N \Big[ w_k - \frac{1}{2} \Big(\langle w_k, w_j\rangle+\langle w_j, w_k \rangle \Big )w_j \Big] \\
&=\Omega w_j+\frac{\kappa}{N}\sum_{k=1}^N(w_k -\langle w_k, w_j\rangle w_j)-\frac{\kappa}{2N}\sum_{k=1}^N(\langle w_j, w_k \rangle-\langle w_k, w_j\rangle)w_j \\
& = \Omega w_j+\frac{\kappa}{N}\sum_{k=1}^N\Big[ w_k -\frac{1}{2} \Big(\langle w_k, w_j\rangle+\langle w_j, w_k \rangle \Big)w_j \Big].
\end{aligned}
\end{align}
Note that this is the special case of the LHS model \eqref{A-0} with $\kappa_1=-\frac{\kappa_0}{2}$. \newline

Next, we show that system \eqref{B-12-1} can be embedded as a system on the Euclidean space by extending $(d+1)$-dimensional complex-valued vector $w\in\bbc^{d+1}$ to $2(d+1)$-dimensional real-valued vector $\tilde{w}\in\bbr^{2(d+1)}$ with the following map:
\[
w=(w^1, \cdots, w^{d+1}) \quad \mapsto \quad \tilde{w}=\big(\mathrm{Re}(w^1), \cdots, \mathrm{Re}(w^{d+1}),  \mathrm{Im}(w^1), \cdots, \mathrm{Im}(w^{d+1})\big).
\] 
Now we will rewrite \eqref{B-12-1} in terms of  $\{\tilde{w}_j\}$. First, it is easy to see that
\begin{equation} \label{B-13}
\dot{\tilde{w}}_j=\tilde{\dot{w}}_j.
\end{equation}
By simple calculation, we have
\begin{align*}
\Omega w_j&=\big(\mathrm{Re}(\Omega )+\mathrm{i}\mathrm{Im}(\Omega)\big)\big(\mathrm{Re}(w_j)+\mathrm{i}\mathrm{Im}(w_j)\big)\\
&=\big(\mathrm{Re}(\Omega)\mathrm{Re}(w_j)-\mathrm{Im}(\Omega)\mathrm{Im}(w_j)\big)+\mathrm{i}\big(\mathrm{Im}(\Omega)\mathrm{Re}(w_j)+\mathrm{Re}(\Omega)\mathrm{Im}(w_j)\big).
\end{align*}
This yields
\begin{align}
\begin{aligned} \label{B-14}
& \mathrm{Re}(\Omega w_j)=\mathrm{Re}(\Omega)\mathrm{Re}(w_j)-\mathrm{Im}(\Omega)\mathrm{Im}(w_j), \\
& \mathrm{Im}(\Omega w_j)=\mathrm{Im}(\Omega)\mathrm{Re}(w_j)+\mathrm{Re}(\Omega)\mathrm{Im}(w_j).
\end{aligned}
\end{align}
Since $\Omega$ is a $(d+1)\times (d+1)$ complex skew-hermitian matrix, we know that $\mathrm{Re}(\Omega)$ and $\mathrm{Im}(\Omega)$ are symmetric. From this, we can define $2(d+1)\times 2(d+1)$ skew-symmetric matrix $\tilde{\Omega}$ as follows:  
\[
\tilde{\Omega}=\begin{bmatrix}
\mathrm{Re}(\Omega)&-\mathrm{Im}(\Omega)\\
\mathrm{Im}(\Omega)&\mathrm{Re}(\Omega)
\end{bmatrix}.
\] 
Then we have
\begin{equation} \label{B-15}
\widetilde{\Omega w_j}=\tilde{\Omega} \tilde{w}_j.
\end{equation}
Next, we rewrite $\langle w_k, w_j\rangle+\langle w_j, w_k \rangle$ in terns of $\tilde{w}_k$ and $\tilde{w}_j$ as follows. By definition of the complex inner-product, we have
\begin{align}
\begin{aligned} \label{B-16}
\langle w_k, w_j\rangle+\langle w_j, w_k \rangle&=w_k^\dagger w_j+w_j^\dagger w_k =2\mathrm{Re}(w_k)^T \mathrm{Re}(w_j)+2\mathrm{Im}(w_k)^T\mathrm{Im}(w_j)\\
&=2\tilde{w}_k^T\tilde{w}_j=2\langle \tilde{w}_k, \tilde{w}_j\rangle.
\end{aligned}
\end{align}
Finally we can express system \eqref{B-12-1} with $\{\tilde{w}_j\}$ and $\Omega$ using \eqref{B-13}, \eqref{B-14}, \eqref{B-15} and \eqref{B-16} to get 
\[
\dot{\tilde{w}}_j=\tilde{\Omega}\tilde{w}_j+\frac{\kappa}{N}\sum_{k=1}^N\left(\tilde{w}_k -\langle \tilde{w}_k,\tilde{w}_j\rangle\tilde{w}_j\right)
\]
which is exactly the Lohe sphere model. In summary, from the proper map between $\bbc^{d+1}$ and $\bbr^{2(d+1)}$, we can transform the special case of the LHS model \eqref{A-0} with $\lambda_{jk}=-\frac{\kappa_{jk}}{2}$ to the LS model. Thus, we can see that system \eqref{B-13} is a gradient flow as in  the LS model (see Proposition 5.1 in  \cite{H-Ko-R}).

\section{Frameworks for complete aggregation and main results} \label{sec:3}
\setcounter{equation}{0}
In this section, we briefly present our main results and sufficient frameworks leading to complete aggregation in the sense of Definition \ref{D1.1}. As noted in the previous section, we consider four different cases depending on the relations between coupling gains $\kappa_{jk},~\lambda_{jk}$ and coupling law $\Gamma_0$ (anti-Hebbian or Hebbian law). \newline

\noindent $\diamond$~(Coupling gain pair): Depending on the relation between $\kappa_{jk}$ and $\lambda_{jk}$, we consider the following two cases: \newline
\begin{itemize}
\item
Stuart-Landau coupling gain pair $(\kappa_{jk}, \lambda_{jk})$:
\[  \kappa_{jk} > 0, \quad \lambda_{jk} = -\frac{1}{2}\kappa_{jk}, \quad j, k \in {\mathcal N}. \]
\vspace{0.1cm}
\item
General coupling gain pair $(\kappa_{jk}, \lambda_{jk})$:
\[ \kappa_{jk} > 0 \quad \mbox{and} \quad \lambda_{jk} \in \bbr, \quad j,k  \in {\mathcal N}. \]
\end{itemize}
\vspace{0.2cm}

\noindent $\diamond$~(Coupling law $\Gamma_0$):~Consider anti-Hebbian and Hebbian laws:
\begin{equation} \label{C-0}
\Gamma_0(w, z) = 
\begin{cases}
\displaystyle \|w-z\|^2, \quad & \mbox{Anti-Hebbian law} \\
\displaystyle 1-\frac{1}{2}\|w-z\|^2, \quad & \mbox{Hebbian law} 
\end{cases}
\end{equation}

The motivation for \eqref{C-0} can be explained as follows. In literature \cite{HKLN, HLLP, HNP} on the synchronization with an adaptive coupling law, the following coupling law 
\begin{equation} \label{C-0-1}
\Gamma(\theta, {\tilde \theta}) = \cos(\theta - \tilde{\theta}) 
\end{equation}
was often employed. Note that when the phase difference between the interactiong oscillators is small, it increases the mutual coupling strength. Thus, it is called Hebbian coupling law. In contrast, when the phase difference is small, there is a case in which the rate of increment in coupling strength becomes small. This is called ``{\it anti-Hebbian coupling law}" and the ansatz:
\begin{equation} \label{C-0-2}
\Gamma(\theta, {\tilde \theta}) =  |\sin(\theta - {\tilde \theta})|  
\end{equation}
was used in aforementioned literature. Note that on $\bbh \bbs^d$, 
\[ \Gamma_0(z, {\tilde z}) = \|z - \tilde{z} \|^2 = 2(1 - \mathrm{Re}\langle z, {\tilde z} \rangle) = 2 \left(1 - \mathrm{Re}\big(\cos \theta(z, \tilde z)\big)\right),\]
where $\theta(z, \tilde z)$ is the angle between $z$ and $\tilde z$, and $\Gamma_0$ becomes smaller when the angle is small. In this sense, it plays the same role as anti-Hebbian law \eqref{C-0-2}. In contrast, real part of $\Gamma_0(z, {\tilde z}) = 1 - \frac{1}{2} \|z - \tilde{z} \|^2$ exhibits the same dynamics as \eqref{C-0-1}. 

\subsection{SL coupling gain pair}  \label{sec:3.1}
Consider the Stuart-Landau coupling gain pair:
\begin{equation} \label{C-1-0}
\kappa_{jk} > 0, \quad \lambda_{jk} = -\frac{1}{2}\kappa_{jk}, \quad\forall~ t \geq 0, \quad j, k \in {\mathcal N}. 
\end{equation}
In fact, one can show that once initial gain pair satisfy \eqref{C-1-0}, then the relation \eqref{C-1-0} will be propagated along \eqref{A-1} under suitable conditions on system parameters and coupling laws (see Lemma \ref{L4.1}). 
\subsubsection{Anti-Hebbian coupling law} \label{sec:3.1.1} Consider the anti-Hebbian coupling law: 
\begin{equation} \label{C-1-1}
\Gamma_0(z,{\tilde z})=\|z - {\tilde z}\|^2. 
\end{equation}
Under the setting \eqref{C-1-0} and \eqref{C-1-1}, system \eqref{A-1} becomes 
\begin{equation} \label{C-1-2}
\begin{cases}
\displaystyle \dot{z}_j = \frac{1}{N}\sum_{k=1}^N \kappa_{jk}\left[ z_k-\frac{1}{2}\Big(\langle z_j, z_k \rangle+\langle z_k, z_j \rangle \Big)z_j\right ], \quad t > 0, \vspace{0.2cm} \\
\displaystyle \dot{\kappa}_{jk} = -\gamma \kappa_{jk} + \mu \| z_j - z_k \|^2, \vspace{0.2cm} \\
\displaystyle \displaystyle (z_j, \kappa_{jk})(0) =(z_j^0, \kappa_{jk}^0) \in \bbh\bbs^{d} \times \bbr_+, \quad  j, k \in \mathcal{N}.
\end{cases}
\end{equation}
For the emergent estimate, we use a Lyapunov functional approach: for $i, j \in {\mathcal N}$, 
\begin{equation} \label{C-1-3}
\mathcal{L}_{ij}=\frac{1}{2}\|z_i-z_j\|^2+\frac{1}{4\mu N}\sum_{k=1}^{N}(\kappa_{ik} -\kappa_{jk})^2.
\end{equation}
Note that at the completely aggregated state
\[ z_i = z, \quad \kappa_{ij} = \kappa, \quad i, j \in {\mathcal N} \]
the functional $\mathcal{L}_{ij}$ is exactly zero. Thus, we can see that the functional $\mathcal{L}_{ij}$ can measure how a state configuration and coupling gains are close to complete aggregated state. \newline

Now, we state our first result on the emergent dynamics for \eqref{A-0}.  
\begin{theorem} \label{T3.1} 
Suppose initial data $(Z^0, K^0)$ satisfy
\begin{equation} \label{C-1-3-1}
 \max_{i,j} \mathcal{L}_{ij}^0 <1, % \quad \min_{i,j}\kappa_{ij}^0 >0,
 \end{equation}
and let $(Z,K)$ be a solution to \eqref{C-1-2}. Then, one has 
\[
\lim_{t \to \infty} \| z_i(t)-z_j(t) \| = 0 \quad \text{and} \quad \lim_{t \to \infty} \kappa_{ij}(t) = 0, \quad i, j \in {\mathcal N}.
\]
\end{theorem}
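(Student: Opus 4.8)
The plan is to show that the Lyapunov functional $\mathcal{L}_{ij}$ defined in \eqref{C-1-3} is non-increasing along trajectories of \eqref{C-1-2}, and then use this together with the structure of the coupling-gain dynamics to extract the two limits. First I would differentiate $\mathcal{L}_{ij}$ in time. The state part contributes $\frac{1}{2}\frac{d}{dt}\|z_i-z_j\|^2 = \mathrm{Re}\langle z_i - z_j, \dot z_i - \dot z_j\rangle$; plugging in $\eqref{C-1-2}_1$ and using $\|z_i\|=\|z_j\|=1$ (Lemma \ref{L2.2}), the diagonal terms $z_k - \frac12(\langle z_j,z_k\rangle + \langle z_k,z_j\rangle)z_j$ should rearrange, after taking real parts, into something like $\frac{1}{N}\sum_k \kappa_{ik}\,\mathrm{Re}\langle z_i - z_j, z_k - z_i\rangle + \cdots$, and symmetrizing over $i \leftrightarrow k$ one expects the leading quadratic term to be $-\frac{1}{2N}\sum_k(\kappa_{ik}+\kappa_{jk})\|z_i-z_j\|^2$ plus cross terms proportional to $(\kappa_{ik}-\kappa_{jk})\|z_i - z_k\|^2$-type quantities. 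The coupling-gain part contributes $\frac{1}{2\mu N}\sum_k (\kappa_{ik}-\kappa_{jk})(\dot\kappa_{ik}-\dot\kappa_{jk})$; substituting $\eqref{C-1-2}_2$, the $-\gamma$ terms give a manifestly non-positive $-\frac{\gamma}{2\mu N}\sum_k(\kappa_{ik}-\kappa_{jk})^2$, and the $\mu\|z_\cdot - z_\cdot\|^2$ terms produce $\frac{1}{2N}\sum_k(\kappa_{ik}-\kappa_{jk})(\|z_i-z_k\|^2 - \|z_j-z_k\|^2)$, which is exactly designed to cancel the problematic cross terms from the state part. This cancellation is the whole point of the specific coefficient $\frac{1}{4\mu N}$ in \eqref{C-1-3}, and verifying it is the main computational core of the argument.

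After the cancellation I expect to be left with an inequality of the schematic form
\[
\frac{d}{dt}\mathcal{L}_{ij} \leq -\frac{1}{N}\sum_{k=1}^N (\text{something nonnegative involving }\kappa)\cdot\|z_i-z_j\|^2 \;-\; \frac{\gamma}{2\mu N}\sum_k (\kappa_{ik}-\kappa_{jk})^2,
\]
so that $\mathcal{L}_{ij}$ is non-increasing; in particular $\max_{i,j}\mathcal{L}_{ij}(t) \leq \max_{i,j}\mathcal{L}_{ij}^0 < 1$ for all $t$. The role of the smallness assumption \eqref{C-1-3-1} is to guarantee that $\|z_i - z_j\|^2 \le 2\mathcal{L}_{ij} < 2$ stays bounded away from the antipodal value $4$, which keeps the geometry non-degenerate (the map $z \mapsto z - \frac12(\langle z_j,z\rangle+\langle z,z_j\rangle)z_j$ and the associated quadratic forms behave well), and more importantly it should be what makes the coefficient multiplying $\|z_i-z_j\|^2$ in the dissipation strictly positive — or at least allow a Gronwall-type or Barbalat-type conclusion.

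For the actual limits, I would argue as follows. Monotonicity plus boundedness of $\mathcal{L}_{ij}$ gives that $\mathcal{L}_{ij}(t)$ converges, hence $\frac{d}{dt}\mathcal{L}_{ij}$ is integrable; since the right-hand side is a sum of non-positive terms, each term is integrable, so $\int_0^\infty \sum_k(\kappa_{ik}-\kappa_{jk})^2\,dt < \infty$ and $\int_0^\infty (\cdots)\|z_i-z_j\|^2\,dt < \infty$. Combining with uniform bounds on $\dot z_j$ and $\dot\kappa_{jk}$ (the right-hand sides of \eqref{C-1-2} are bounded since states lie on $\bbh\bbs^d$ and, by the Duhamel representation \eqref{N-B-1} with $\Gamma_0 = \|z_i - z_j\|^2 \le 4$, the $\kappa_{ij}$ are bounded), Barbalat's lemma yields $\|z_i(t) - z_j(t)\| \to 0$. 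Finally, for $\kappa_{ij}\to 0$: once $\|z_i - z_j\|\to 0$ we have $\Gamma_0(z_i,z_j) = \|z_i-z_j\|^2 \to 0$, so from $\eqref{C-1-2}_2$, $\dot\kappa_{ij} = -\gamma\kappa_{ij} + \mu\cdot o(1)$; a standard comparison/Duhamel estimate on this linear ODE with asymptotically vanishing forcing then forces $\kappa_{ij}(t)\to 0$. The delicate point I anticipate is making the coefficient of $\|z_i - z_j\|^2$ in the dissipation genuinely usable — it involves the $\kappa_{jk}$, which are themselves decaying to $0$, so one cannot get exponential decay of the diameter for free and must instead rely on the integrability/Barbalat route for the state variables, reserving quantitative rates (if any) for the Hebbian case where the coupling gains do not vanish.
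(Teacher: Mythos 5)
Your computation of $\frac{d}{dt}\mathcal{L}_{ij}$ and the cancellation mechanism are exactly the paper's Lemma \ref{L4.2}, and your monotonicity claim matches Lemma \ref{L4.3} (though note that making $\dot{\mathcal{L}}_{ij}\le 0$ rigorous requires a bootstrap/continuity argument showing that \emph{all} $R_{ik}=\mathrm{Re}\langle z_i,z_k\rangle$ stay positive simultaneously, since the dissipation coefficient is $\sum_k(\kappa_{ik}R_{ik}+\kappa_{jk}R_{jk})$ and a single negative $R_{ik}$ would spoil the sign; the paper does this with an $\argmin$ over first-crossing times).

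The genuine gap is in your final step for the state limit. From integrability of the dissipation you get, via Barbalat, only that the \emph{product} $\bigl(\frac{1}{N}\sum_k(\kappa_{ik}R_{ik}+\kappa_{jk}R_{jk})\bigr)(1-R_{ij})$ tends to zero. To conclude $\|z_i-z_j\|\to 0$ you would need the coefficient to be bounded away from zero, but it is a sum of the $\kappa$'s — and the theorem's own conclusion is that these tend to zero, so this degeneration is not a hypothetical worry but the generic situation. You flag this difficulty yourself in your last sentence, but the "integrability/Barbalat route for the state variables" you fall back on is precisely the step that fails. The paper resolves it with a dichotomy on $\int_0^\infty \mathcal{K}_{ij}\,dt$ where $\mathcal{K}_{ij}=\min\{\frac{1-\mathcal{L}^0_{IJ}}{N}\sum_k(\kappa_{ik}+\kappa_{jk}),\,2\gamma\}$, using $\dot{\mathcal{L}}_{ij}\le-\mathcal{K}_{ij}\mathcal{L}_{ij}$. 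If the integral diverges, $\mathcal{L}_{ij}\to0$ directly and then $\kappa_{ij}\to0$ by the Gr\"onwall-type Lemma \ref{L4.5} (this half you have). If it converges, one instead deduces $\int_0^\infty(\kappa_{ik}+\kappa_{jk})\,dt<\infty$, applies Barbalat to $\kappa$ itself to get $\kappa_{ij}\to0$, then shows $\ddot{\kappa}_{ij}$ is bounded so that a second application of Barbalat to $\dot{\kappa}_{ij}$ gives $\dot{\kappa}_{ij}\to0$, and finally reads off $\|z_i-z_j\|^2=(\dot{\kappa}_{ij}+\gamma\kappa_{ij})/\mu\to0$ from the gain equation. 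That reversal — extracting state convergence from the coupling-gain ODE rather than from the Lyapunov dissipation — is the missing idea, and without it your argument does not close.
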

\begin{proof} We leave its proof in Section \ref{sec:4}.
\end{proof}
\subsubsection{Hebbian coupling law} \label{sec:3.1.2} In this part, we consider the Hebbian law:
\begin{equation} \label{C-1-4}
\Gamma_0(z, {\tilde z})=1-\frac{\|z -{\tilde z} \|^2}{2}.
\end{equation}
Under the setting \eqref{C-1-0} and \eqref{C-1-4}, system \eqref{A-0} becomes 
\begin{align}\label{C-2-0}
\begin{cases}
\displaystyle  \dot{z}_j = \frac{1}{N}\sum_{k=1}^N \kappa_{jk}\Big[ z_k-\frac{1}{2}\Big( \langle z_j, z_k \rangle+\langle z_k, z_j \rangle\big)z_j \Big], \quad  t > 0, \vspace{0.2cm} \\
\displaystyle \dot{\kappa}_{jk} = -\gamma \kappa_{jk} + \mu \left(1-\displaystyle\frac{\|z_k -z_j \|^2}{2}\right),  \vspace{0.2cm} \\
\displaystyle \displaystyle (z_j, \kappa_{jk})(0) =(z_j^0, \kappa_{jk}^0) \in \bbh\bbs^{d} \times \bbr_+, \quad j, k \in \mathcal{N}.
\end{cases}
\end{align}
For the emergent dynamics of \eqref{C-2-0}, we introduce a Lyapunov function:
\begin{equation*}
{\mathcal D}_{ij}(Z):= \frac{1}{2}\|z_i-z_j\|^2, \qquad {\mathcal D}(Z):= \max_{i,j} {\mathcal D}_{ij}(Z).
\end{equation*}
Note that  ${\mathcal D}$ is the half of the square of state diameter, and complete aggregation occurs if 
\[
\lim_{t \to \infty} {\mathcal D}(Z(t))=0.
\]
The maximum of differentiable functions does not need to be differentiable, hence we cannot guarantee differentiability of ${\mathcal D}(Z)$. However, it follows from the analyticity of each ${\mathcal D}_{ij}(Z)$, ${\mathcal D}(Z)$ is differentiable almost everywhere and we can regard $\dot{{\mathcal D}}(Z(t))$ as a weak derivative of ${\mathcal D}(Z)$. By the continuity and estimate for $\dot{{\mathcal D}}(Z)$ a.e. is enough to derive the estimate for ${\mathcal D}(Z)$ via direct integration. \newline

Now, we present our second result as follows.
\begin{theorem} \label{T3.2}
Suppose there exist a constant $\kappa$ such that
\begin{equation} \label{C-2-4}
0 < \kappa < \min \left\{  \frac{\mu}{\gamma},~\min_{i,j}\kappa_{ij}^0 \right \}, \quad \max \left\{  \max_{i,j} \kappa^0_{ij}, ~\frac{\mu}{\gamma} \right \} \leq \frac{2\mu \kappa}{2\mu-\gamma \kappa}, 
\quad {\mathcal D}(Z^0) < 1-\frac{\gamma}{\mu} \kappa.
\end{equation}
and let $(Z,K)$ be a solution to \eqref{C-2-0}. Then, there exist positive constants $C_0>0$ and $C_1>0$ satisfying
\[
{\mathcal D}(Z(t)) \leq C_0 e^{-C_1 t}, \quad t>0.
\]
\end{theorem}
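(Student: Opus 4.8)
The plan is to track the scalar quantity $\mathcal D(Z(t))=\tfrac12\max_{i,j}\|z_i-z_j\|^2$ together with suitable pointwise bounds on the coupling gains $\kappa_{jk}$, and to close a Gr\"onwall-type inequality that forces $\mathcal D$ to decay exponentially. First I would establish a priori two-sided bounds for $\kappa_{jk}(t)$: using the Duhamel representation from Lemma \ref{L2.1} together with $0\le 1-\tfrac12\|z_k-z_j\|^2=1-\mathcal D_{jk}\le 1$ (which holds as long as $\mathcal D(Z)\le 1$, to be verified a posteriori via a continuity/bootstrap argument), one gets that $\kappa_{jk}(t)$ is trapped between a lower envelope tending to $\tfrac{\mu}{\gamma}(1-\sup_s\mathcal D(Z(s)))$ and the upper value $\tfrac{\mu}{\gamma}$. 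The arithmetic conditions in \eqref{C-2-4} — in particular $\kappa<\min\{\mu/\gamma,\min_{i,j}\kappa^0_{ij}\}$ and $\max\{\max_{i,j}\kappa^0_{ij},\mu/\gamma\}\le \tfrac{2\mu\kappa}{2\mu-\gamma\kappa}$ — are precisely engineered so that, provided $\mathcal D(Z)$ stays below $1-\tfrac{\gamma}{\mu}\kappa$, the gains remain in the window $[\kappa_m,\kappa_M]$ with $\kappa_m>0$; I would make this self-consistent loop explicit.

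Next I would compute $\dot{\mathcal D}_{ij}$ along the flow. Picking indices $M,m$ realizing the maximum (so $\mathcal D(Z)=\mathcal D_{Mm}$ at a.e.\ $t$, with the weak-derivative caveat already noted in the text), differentiate $\tfrac12\|z_M-z_m\|^2 = 1-\mathrm{Re}\langle z_M,z_m\rangle$ and substitute $\eqref{C-2-0}_1$. The dynamics $\dot z_j=\tfrac1N\sum_k\kappa_{jk}\big[z_k-\tfrac12(\langle z_j,z_k\rangle+\langle z_k,z_j\rangle)z_j\big]$ is the gradient-type LS-form flow, so the key algebraic identity is that $\mathrm{Re}\langle z_k - \tfrac12(\langle z_j,z_k\rangle+\langle z_k,z_j\rangle)z_j,\ z_\ell\rangle$ can be rewritten in terms of the pairwise quantities $\mathcal D_{jk},\mathcal D_{j\ell},\mathcal D_{k\ell}$ using $\|z_p\|=1$ and $2\mathrm{Re}\langle z_p,z_q\rangle = 2-\|z_p-z_q\|^2$. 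Using maximality of $\mathcal D_{Mm}$ (each $\mathcal D_{Mk}\le\mathcal D_{Mm}$, $\mathcal D_{mk}\le\mathcal D_{Mm}$) to bound the cross terms, and the lower bound $\kappa_{jk}\ge\kappa_m$ on the "restoring" contribution while using $\kappa_{jk}\le\kappa_M$ on the "bad" contribution, I expect an inequality of the shape
\[
\dot{\mathcal D}(Z) \;\le\; -\,2\kappa_m\,\mathcal D(Z)\,\big(1-\mathcal D(Z)\big) \;+\; (\text{const})\,\kappa_M\,\mathcal D(Z)^2,
\]
or something similar, where the coefficient arrangement from \eqref{C-2-4} guarantees the net coefficient of $\mathcal D(Z)$ is strictly negative whenever $\mathcal D(Z)<1-\tfrac{\gamma}{\mu}\kappa$. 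From this one first runs a bootstrap: the strict inequality $\mathcal D(Z^0)<1-\tfrac{\gamma}{\mu}\kappa$ and continuity show $\mathcal D(Z(t))$ cannot reach the threshold, so the differential inequality is valid for all $t>0$; then $\dot{\mathcal D}(Z)\le -C_1\mathcal D(Z)$ for a fixed $C_1>0$ yields $\mathcal D(Z(t))\le \mathcal D(Z^0)e^{-C_1 t}=:C_0e^{-C_1t}$ by integration (valid for the weak derivative of the Lipschitz function $\mathcal D$).

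The main obstacle is the coupled self-consistency between the state decay and the gain window: the bound $\kappa_{jk}\ge\kappa_m$ that makes $\dot{\mathcal D}$ negative is itself only valid while $\mathcal D(Z)$ stays small, so I must set up the bootstrap carefully — define $t^\ast=\sup\{t:\mathcal D(Z(s))<1-\tfrac{\gamma}{\mu}\kappa\ \forall s<t\}$, show on $[0,t^\ast)$ both the gain bounds and the differential inequality hold, deduce $\mathcal D$ is nonincreasing there, and conclude $t^\ast=\infty$. A secondary technical point is the nonsmoothness of $\mathcal D(Z)$: I would invoke the standard argument (as the authors already flag after the statement of Theorem \ref{T3.2}) that the upper Dini derivative of $\mathcal D$ equals $\dot{\mathcal D}_{Mm}$ for an extremal pair at a.e.\ $t$, which suffices for Gr\"onwall. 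Extracting the cleanest possible constants $C_0,C_1$ in terms of $\mu,\gamma,\kappa$ is bookkeeping once the structure above is in place.
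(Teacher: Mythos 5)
Your proposal is correct and follows essentially the same route as the paper: the authors also derive two-sided gain bounds (upper bound via Duhamel, lower bound $\kappa_{jk}>\kappa_m$ via a continuation argument), decompose $\dot{\mathcal D}_{ij}$ for an extremal pair into a restoring term bounded below by $2\kappa_m\mathcal D$ and two bad terms bounded by $2\kappa_M\mathcal D^2$ and $2(\kappa_M-\kappa_m)\mathcal D$, close the resulting Riccati inequality by comparison, and run exactly the self-consistent bootstrap you describe (theirs is phrased as $\tilde t^{*}=\sup\{\tau:\min_{i,j}\kappa_{ij}>\kappa_m \text{ on }(0,\tau)\}$ rather than on $\mathcal D$, but the two are equivalent). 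The only bookkeeping you leave implicit — the choice $\kappa_m=\kappa$, $\kappa_M=\tfrac{2\mu\kappa}{2\mu-\gamma\kappa}$ translating \eqref{C-2-4} into the window conditions — is precisely how the paper reduces Theorem \ref{T3.2} to its Proposition \ref{P4.1}.
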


\begin{proof}
We leave its proof in Section \ref{sec:4.2}.
\end{proof}

\subsection{Asymptotically SL coupling gain pair} \label{sec:3.2}
In this subsection, we consider a coupling gain pair:
\[ \kappa_{jk} > 0, \quad \lambda_{jk} \in \bbr, \quad\forall~ j, k \in {\mathcal N}. \]
Note that unlike the previous subsection, we do not assume any functional relation between $\kappa_{jk}$ and $\lambda_{jk}$. Nevertheless, we can still rewrite \eqref{A-1} as a perturbation of \eqref{A-3}. To see this, we recall the dynamics of $z_j$:
\begin{equation} \label{E-1}
 \dot{z}_j= \frac{1}{N}\sum_{k=1}^N\kappa_{jk}\big(\langle z_j, z_j\rangle z_k-\langle z_k, z_j\rangle z_j\big)+\frac{1}{N}\sum_{k=1}^N\lambda_{jk}\big(\langle z_j, z_k\rangle-\langle z_k,z_j\rangle\big)z_j.
\end{equation} 
To use the result in Section \ref{sec:3.1}, we set
\begin{equation} \label{E-1-1}
\tilde{\lambda}_{jk}:=\frac{1}{2}\kappa_{jk}+\lambda_{jk}, \quad\forall~ j,k \in \mathcal{N}.
\end{equation}
Then, $\tilde{\lambda}_{jk} = 0$ corresponds to exactly the same situation in Section \ref{sec:3.1}. Now we can rewrite \eqref{E-1} using \eqref{E-1-1} into
\begin{equation*}
\dot{z}_{j} = \frac{1}{N}\sum_{k=1}^{N} \kappa_{jk}(z_k-R_{kj}z_j)+ \frac{1}{N}\sum_{k=1}^{N} \tilde{\lambda}_{jk}(h_{jk}-h_{kj})z_j,
\end{equation*}
and the dynamics of $\tilde{\lambda}_{jk}$ can also be expressed as follows:
\begin{equation} \label{E-1-3}
\dot{\tilde{\lambda}}_{jk} =\frac{1}{2}\dot{\kappa}_{jk}+\dot{\lambda}_{jk} =\frac{1}{2} \Big(-\gamma_0\kappa_{jk}+\mu_0\Gamma_0(z_j, z_k) \Big)+ \Big(-\gamma_1\lambda_{jk}+\mu_1\Gamma_1(z_j, z_k) \Big).
\end{equation}
In what follows, we set
\begin{align}\label{E-1-4}
\gamma_0=\gamma_1=\gamma,\quad \mu_0=\mu_1=\mu.
\end{align}
We combine \eqref{E-1-3} and \eqref{E-1-4} to get 
\begin{equation*}
\dot{\tilde{\lambda}}_{jk} =-\gamma\left(\frac{1}{2}\kappa_{jk}+\lambda_{jk}\right)+\mu\left(\frac{1}{2}\Gamma_0(z_j, z_k)+\Gamma_1(z_j, z_k)\right).
\end{equation*}
Now, we set
\[
\tilde{\Gamma}(z, w) :=\frac{1}{2}\Gamma_0(z, w)+\Gamma_1(z, w)
\]
to rewrite
\begin{equation} \label{E-1-6}
\dot{\tilde{\lambda}}_{jk}=-\gamma \tilde{\lambda}_{jk}+\mu\tilde{\Gamma}(z_j, z_k).
\end{equation}
Finally we combine \eqref{E-1} and \eqref{E-1-6} to get 
\begin{equation}\label{E-1-7}
\begin{cases}
\displaystyle \dot{z}_j=  \frac{1}{N}\sum_{k=1}^N \kappa_{jk}\Big[ z_k-\frac{1}{2}\Big( \langle z_j, z_k \rangle+\langle z_k, z_j \rangle\big)z_j \Big] +\frac {1}{N}\sum_{k=1}^N\tilde{\lambda}_{jk}(\langle z_j, z_k\rangle-\langle z_k, z_j\rangle)z_j,~~t > 0, \vspace{0.2cm} \\
\displaystyle \dot{\kappa}_{jk}=-\gamma\kappa_{jk}+\mu\Gamma_0(z_j,z_k),\quad \dot{\tilde{\lambda}}_{jk}=-\gamma\tilde{\lambda}_{jk}+\mu\tilde{\Gamma}(z_j, z_k),\quad j, k \in \mathcal{N}, \vspace{0.2cm}  \\
\displaystyle (z_j, \kappa_{jk}, {\tilde \lambda}_{jk})(0) =(z_j^0, \kappa_{jk}^0, {\tilde \lambda}_{jk}^0) \in \bbh\bbs^{d} \times \bbr_+ \times \bbr.
\end{cases}
\end{equation}
In what follows, we consider only following cases:
\[ {\tilde \lambda}_{ij}~\mbox{is independent of $i$ and $j$, but is a function of $t$, and $\tilde \Gamma \equiv 0$}. \]
\subsubsection{Anti-Hebbian coupling law} \label{sec:3.2.1}
In this part, we study emergent dynamics of \eqref{E-1-7} with anti-Hebbian coupling law:  
\begin{equation}\label{E-2-0-0}
\begin{cases}
\displaystyle \dot{z}_j=  \frac{1}{N}\sum_{k=1}^N \kappa_{jk}\Big[ z_k-\frac{1}{2}\Big( \langle z_j, z_k \rangle+\langle z_k, z_j \rangle\big)z_j \Big] +\frac{1}{N}\sum_{k=1}^N\tilde{\lambda}_{jk}(\langle z_j, z_k\rangle-\langle z_k, z_j\rangle)z_j, \vspace{0.2cm} \\
\displaystyle \dot{\kappa}_{jk}=-\gamma\kappa_{jk}+\mu\|z_j-z_k\|^2,\quad \dot{\tilde{\lambda}}_{jk}=-\gamma\tilde{\lambda}_{jk}+\mu\tilde{\Gamma}(z_j, z_k), \vspace{0.2cm} \\
\displaystyle (z_j, \kappa_{jk}, {\tilde \lambda}_{jk})(0) =(z_j^0, \kappa_{jk}^0, {\tilde \lambda}_{jk}^0) \in \bbh\bbs^{d} \times \bbr_+ \times \bbr, \quad j, k \in \mathcal{N}.
\end{cases}
\end{equation}
Our third main result is concerned with a sufficient framework leading to complete aggregation. 

\begin{theorem} \label{T3.3}
Suppose that the following relations hold
\[
\tilde{\lambda}_{ij}^0={\tilde \lambda}^0, \quad\forall~ i,j \in \mathcal{N} \quad \text{and} \quad \tilde{\Gamma}(t) \equiv 0, \quad t>0,
\]
for some constant $\tilde\lambda^0$, and let $(Z, K, {\tilde \Lambda})$ be a solution to \eqref{E-2-0-0} with initial data satisfying the following conditions:
\begin{align*}
\quad 1 > \max_{i,j}\frac{2|\tilde\lambda^0|}{\kappa_{ij}^0}+\max_{k,l}\mathcal{L}_{kl}^0,
\end{align*}
then we have
\[
\lim_{t \to \infty} \| z_i(t)-z_j(t) \| = 0 \quad \text{and} \quad \lim_{t \to \infty} \kappa_{ij}(t) = 0.
\]
\end{theorem}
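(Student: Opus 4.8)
The plan is to transplant the Lyapunov-functional proof of Theorem \ref{T3.1} to \eqref{E-2-0-0}, treating the $\tilde\lambda$-terms as a perturbation that decays at exactly the same exponential rate as the restoring force produced by the $\kappa$-coupling. First I would dispose of the gain equations: since $\tilde\Gamma\equiv0$ and $\tilde\lambda^0_{ij}=\tilde\lambda^0$ for all $i,j$, the relation $\dot{\tilde\lambda}_{jk}=-\gamma\tilde\lambda_{jk}$ integrates to $\tilde\lambda_{jk}(t)=\tilde\lambda^0e^{-\gamma t}=:\ell(t)$, common to all pairs and decaying like $e^{-\gamma t}$; from $\eqref{E-2-0-0}_2$ and Duhamel's formula (as in Lemma \ref{L2.1}) one has $\kappa_{jk}(t)\ge\kappa_{jk}^0e^{-\gamma t}>0$. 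Setting $\Theta:=\max_{i,j}\tfrac{2|\tilde\lambda^0|}{\kappa_{ij}^0}=\tfrac{2|\tilde\lambda^0|}{\min_{i,j}\kappa_{ij}^0}$, the hypothesis reads $\Theta+\max_{i,j}\mathcal L_{ij}^0<1$, and the two a priori bounds combine into the decisive inequality $2|\ell(t)|\le\Theta\,\kappa_{jk}(t)$ for every pair and every $t\ge0$.

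Next I would differentiate $\mathcal L_{ij}$ along \eqref{E-2-0-0}. With $R_{ab}:=\mathrm{Re}\langle z_a,z_b\rangle=1-\tfrac12\|z_a-z_b\|^2$ the computation behind \eqref{C-1-2} reproduces the first two terms below, and the cross term $\tfrac1{2N}\sum_k(\kappa_{ik}-\kappa_{jk})(\|z_i-z_k\|^2-\|z_j-z_k\|^2)$ coming from the two summands of $\mathcal L_{ij}$ cancels, leaving
\[
\frac{d}{dt}\mathcal L_{ij}=-\frac{\gamma}{2\mu N}\sum_{k}(\kappa_{ik}-\kappa_{jk})^2-\frac{\|z_i-z_j\|^2}{2N}\sum_k\big(\kappa_{ik}R_{ik}+\kappa_{jk}R_{jk}\big)+\mathcal P_{ij},
\]
where $\mathcal P_{ij}:=\mathrm{Re}\,\langle z_i-z_j,\,\ell(t)(s_iz_i-s_jz_j)\rangle$ with $s_j:=\tfrac1N\sum_k(\langle z_j,z_k\rangle-\langle z_k,z_j\rangle)$, which is purely imaginary. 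The key elementary point, and the step I expect to be the main obstacle, is that $\mathcal P_{ij}$ is in fact \emph{quadratic} in $\|z_i-z_j\|$: after adding and subtracting $s_j(1-\langle z_j,z_i\rangle)$ and using $\mathrm{Re}(s_j)=0$ together with $(1-\langle z_j,z_i\rangle)+(1-\langle z_i,z_j\rangle)=\|z_i-z_j\|^2\in\bbr$, one obtains $\mathcal P_{ij}=\ell(t)\,\mathrm{Re}\big[(s_i-s_j)(1-\langle z_j,z_i\rangle)\big]$, and then $|s_i-s_j|\le2\|z_i-z_j\|$ and $|1-\langle z_j,z_i\rangle|\le\|z_i-z_j\|$ give $|\mathcal P_{ij}|\le2|\ell(t)|\,\|z_i-z_j\|^2$. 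Had the estimate been merely linear in $\|z_i-z_j\|$ the argument below would break down.

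The invariance step is then a bootstrap. Suppose $\max_{k,l}\mathcal L_{kl}(t)<1-\Theta$ on a maximal interval; then for every pair $\tfrac12\|z_i-z_k\|^2\le\mathcal L_{ik}<1-\Theta$, hence $R_{ik}>\Theta$, so using $\kappa_{ik}(t)\ge\kappa^0_{ik}e^{-\gamma t}$ and the averaged form $\tfrac{\Theta}{2N}\sum_k(\kappa^0_{ik}+\kappa^0_{jk})\ge2|\tilde\lambda^0|$ of $\Theta\kappa^0_{ab}\ge2|\tilde\lambda^0|$,
\[
\frac{\|z_i-z_j\|^2}{2N}\sum_k(\kappa_{ik}R_{ik}+\kappa_{jk}R_{jk})>\frac{\Theta\|z_i-z_j\|^2}{2N}\sum_k(\kappa_{ik}+\kappa_{jk})\ge2|\ell(t)|\,\|z_i-z_j\|^2\ge|\mathcal P_{ij}|.
\]
Consequently $\tfrac{d}{dt}\mathcal L_{ij}\le-\tfrac{\gamma}{2\mu N}\sum_k(\kappa_{ik}-\kappa_{jk})^2\le0$ for every fixed $(i,j)$, so $\max_{i,j}\mathcal L_{ij}$ is non-increasing and stays below $1-\Theta$; the maximal interval is all of $[0,\infty)$ and each $\mathcal L_{ij}$ is globally non-increasing, bounded below by $0$, hence convergent.

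For the emergence step, the displayed identity together with $\int_0^\infty|\mathcal P_{ij}|\,dt\le4|\tilde\lambda^0|/\gamma$ gives $\int_0^\infty\!\big[\sum_k(\kappa_{ik}-\kappa_{jk})^2+\|z_i-z_j\|^2\sum_k(\kappa_{ik}R_{ik}+\kappa_{jk}R_{jk})\big]dt<\infty$. Both integrands have bounded time-derivative, since the right-hand sides of \eqref{E-2-0-0} and the $\kappa_{ij}$'s are bounded, so Barbalat's lemma yields $\kappa_{ik}-\kappa_{jk}\to0$ and $\|z_i-z_j\|^2\sum_k(\kappa_{ik}R_{ik}+\kappa_{jk}R_{jk})\to0$ for all $i,j$. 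To conclude $\|z_i-z_j\|\to0$ I would argue by contradiction: if $\max_{i,j}\|z_i-z_j\|^2\ge\varepsilon_0$ along $t_n\to\infty$, then by the pigeonhole principle a fixed pair $(i_*,j_*)$ satisfies $\|z_{i_*}(t_n)-z_{j_*}(t_n)\|^2\ge\varepsilon_0$ along a subsequence, and by the velocity bound $\|z_{i_*}-z_{j_*}\|^2\ge\varepsilon_0/2$ on intervals $I_n$ of a fixed length; on $I_n$ the vanishing product then forces $\kappa_{i_*k},\kappa_{j_*k}\to0$ uniformly (as $R>\Theta$), hence $\kappa_{lk}\to0$ on $I_n$ for all $l,k$ (using $\kappa_{lk}-\kappa_{i_*k}\to0$), but $\dot\kappa_{i_*j_*}=-\gamma\kappa_{i_*j_*}+\mu\|z_{i_*}-z_{j_*}\|^2\ge\mu\varepsilon_0/4$ once $\kappa_{i_*j_*}$ is small on $I_n$, forcing $\kappa_{i_*j_*}$ to grow by a fixed positive amount over $I_n$ — a contradiction. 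Hence $\|z_i-z_j\|\to0$, and then $\kappa_{ij}(t)=e^{-\gamma t}\kappa_{ij}^0+\mu\int_0^te^{-\gamma(t-s)}\|z_i(s)-z_j(s)\|^2ds\to0$ by splitting the integral at a large time, which completes the proof.
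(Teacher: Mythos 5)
Your proof is correct, and its first two thirds coincide with the paper's: your $\mathcal P_{ij}$ is exactly the extra term $-\tfrac{2I_{ij}}{N}\sum_k\tilde\lambda(I_{ik}-I_{jk})$ of Lemma \ref{L5.2}, your bound $|\mathcal P_{ij}|\le 2|\ell(t)|\,\|z_i-z_j\|^2$ matches the estimate in Lemma \ref{L5.3} (the paper bounds $|I_{ij}|$ and $|I_{ik}-I_{jk}|$ each by $\sqrt{2(1-R_{ij})}$), and your bootstrap with $\Theta=\max_{i,j}2|\tilde\lambda^0|/\kappa_{ij}^0$ is the paper's Lemmas \ref{L5.4}--\ref{L5.5} with $2c=\Theta$. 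Where you genuinely diverge is the endgame. The paper packages the dissipation as $\dot{\mathcal L}_{ij}\le-\mathcal M_{ij}\mathcal L_{ij}$ with $\mathcal M_{ij}=\min\{\mathcal O_{ij},2\gamma\}$ and splits on whether $\int_0^\infty\mathcal M_{ij}\,dt$ is infinite (then $\mathcal L_{ij}\to0$ and Lemma \ref{L4.5} kills $\kappa_{ij}$) or finite (then a measure-theoretic argument gives $\int(\kappa_{ik}+\kappa_{jk})\,dt<\infty$, Barbalat gives $\kappa\to0$, a second Barbalat applied to $\dot\kappa$ gives $\dot\kappa\to0$, and the gain ODE forces $\|z_i-z_j\|\to0$). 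You instead integrate the dissipation identity directly --- legitimate because $\int_0^\infty|\mathcal P_{ij}|\,dt\le C|\tilde\lambda^0|/\gamma<\infty$ --- obtain integrability of both nonnegative dissipation terms, apply Barbalat to each, and rule out persistent separation by a contradiction built on the gain ODE $\dot\kappa_{i_*j_*}=-\gamma\kappa_{i_*j_*}+\mu\|z_{i_*}-z_{j_*}\|^2$. Your route avoids the case dichotomy and the set-$B$ measure argument entirely, at the price of the pigeonhole/interval construction; both are sound. One small imprecision: in the contradiction step you justify extracting $\kappa_{i_*k}\to0$ from the vanishing of $\sum_k(\kappa_{i_*k}R_{i_*k}+\kappa_{j_*k}R_{j_*k})$ ``as $R>\Theta$,'' but $\Theta=0$ when $\tilde\lambda^0=0$; the uniform positive lower bound you actually need is $R_{ik}\ge 1-\max_{k,l}\mathcal L_{kl}^0>0$, which your own invariance step already supplies, so nothing breaks.
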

\begin{proof} We leave its proof in Section \ref{sec:5.1}. 
\end{proof}
\begin{remark} Since the coupling gains tend to zero asymptotically, the presented result is completely different from the previous result in \cite{H-P2} in which the coupling gains take the same positive constant:
\[ \kappa_{ij}(t) = \kappa > 0, \quad\forall~ i, j \in {\mathcal N}. \]
\end{remark}

\subsubsection{Hebbian coupling law} \label{sec:3.2.2}
Consider system \eqref{E-1-7} with a Hebbian coupling law:
\begin{align}\label{NE-4}
\begin{cases}
\displaystyle \dot{z}_j=  \frac{1}{N}\sum_{k=1}^N \kappa_{jk}\Big[ z_k-\frac{1}{2}\Big( \langle z_j, z_k \rangle+\langle z_k, z_j \rangle\big)z_j \Big] +\frac{1}{N}\sum_{k=1}^N\tilde{\lambda}_{jk}(\langle z_j, z_k\rangle-\langle z_k, z_j\rangle)z_j, \vspace{0.2cm} \\
\displaystyle \dot{\kappa}_{jk}=-\gamma\kappa_{jk}+\mu \left(\displaystyle 1-\frac{\|z_j -z_k \|^2}{2} \right),\quad \dot{\tilde{\lambda}}_{jk}=-\gamma\tilde{\lambda}_{jk}+\mu\tilde{\Gamma}(z_j, z_k), \vspace{0.2cm} \\
(z_j, \kappa_{jk}, {\tilde \lambda}_{jk})(0) = (z_j^0, \kappa_{jk}^0, {\tilde \lambda}_{jk}^0) \in \bbh\bbs^{d} \times \bbr_+ \times \bbr,\quad j, k\in \mathcal{N},
\end{cases}
\end{align}

\vspace{0.2cm}

Similar to Section \ref{sec:3.1.2}, one has the same emergent dynamics. 
\begin{theorem} \label{T3.4}
Suppose there exist a constant $\kappa$ such that
\begin{equation}\label{Y-0}
2|\tilde{\lambda}^0| < \kappa < \min \left\{  \frac{\mu}{\gamma},~\min_{i,j}\kappa_{ij}^0 \right \}, \quad 
\max \left\{  \max_{i,j} \kappa^0_{ij}, ~\frac{\mu}{\gamma} \right \} \leq \frac{2\mu (\kappa-2|\tilde{\lambda^0}|)}{2\mu-\gamma \kappa}, 
\quad {\mathcal D}(Z^0) < 1-\frac{\gamma}{\mu} \kappa.
\end{equation}
and let $(Z,K)$ be a solution to \eqref{NE-4}. Then, there exist positive constants $C_2>0$ and $C_3>0$ satisfying
\[
{\mathcal D}(Z(t)) \leq C_2 e^{-C_3 t}, \quad t>0.
\]
\end{theorem}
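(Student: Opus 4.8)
The plan is to mirror the proof of Theorem~\ref{T3.2}, treating the $\tilde\lambda$-coupling as a uniformly bounded, exponentially small, index-symmetric perturbation of the Hebbian LHS flow, and to control the half-squared diameter $\mathcal{D}(Z)$ by a continuity (bootstrap) argument. First I would observe that, since $\tilde\Gamma\equiv 0$ and $\tilde\lambda_{ij}^0=\tilde\lambda^0$ is independent of $(i,j)$, the equation $\dot{\tilde\lambda}_{jk}=-\gamma\tilde\lambda_{jk}$ integrates to $\tilde\lambda_{jk}(t)=\tilde\lambda^0 e^{-\gamma t}$ for every $j,k\in\mathcal{N}$; hence $\tilde\lambda_{jk}(t)$ is still index-independent for all $t\ge 0$, $|\tilde\lambda_{jk}(t)|\le|\tilde\lambda^0|$, and the whole $\tilde\lambda$-term in $\eqref{NE-4}_1$ equals the common scalar $\tilde\lambda^0 e^{-\gamma t}$ times $\frac{1}{N}\sum_k(\langle z_j,z_k\rangle-\langle z_k,z_j\rangle)z_j$.

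Next, set $T^\ast:=\sup\{\,t>0:\ \mathcal{D}(Z(s))<1-\frac{\gamma}{\mu}\kappa\ \text{for all }s\in[0,t)\,\}$, which is positive by the last inequality in \eqref{Y-0}. On $[0,T^\ast)$ I would first pin down the coupling gains. From $\dot{\kappa}_{jk}\le-\gamma\kappa_{jk}+\mu$ together with Duhamel's formula (as in Lemma~\ref{L2.1}) one gets the upper bound $\kappa_{jk}(t)\le\bar\kappa:=\max\{\max_{i,j}\kappa_{ij}^0,\ \mu/\gamma\}$; and since $\mathcal{D}(Z(t))<1-\frac{\gamma}{\mu}\kappa$ forces $1-\frac{1}{2}\|z_j-z_k\|^2>\frac{\gamma}{\mu}\kappa$, we have $\dot{\kappa}_{jk}\ge-\gamma(\kappa_{jk}-\kappa)$, hence $\kappa_{jk}(t)\ge\kappa$ because $\kappa<\min_{i,j}\kappa_{ij}^0$ by \eqref{Y-0}. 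Thus $\kappa\le\kappa_{jk}(t)\le\bar\kappa$ on $[0,T^\ast)$, which in particular keeps the coupling dissipative.

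Then I would estimate $\dot{\mathcal{D}}$. As in the remark preceding Theorem~\ref{T3.2}, $\mathcal{D}$ is a maximum of analytic functions, hence differentiable a.e., and it suffices to bound $\frac{1}{2}\frac{d}{dt}\|z_i-z_j\|^2=\mathrm{Re}\langle z_i-z_j,\dot z_i-\dot z_j\rangle$ at an index pair $(i,j)$ realizing the diameter. The $\kappa$-part is exactly the one handled in the proof of Theorem~\ref{T3.2} in Section~\ref{sec:4.2}: using $\mathrm{Re}\langle z_a,z_b\rangle=1-\frac{1}{2}\|z_a-z_b\|^2$, the bounds $\kappa\le\kappa_{jk}\le\bar\kappa$ and $\|z_a-z_b\|^2\le 2\mathcal{D}(Z)$, one obtains a bound of the form $(\bar\kappa-2\kappa+\bar\kappa\,\mathcal{D}(Z))\|z_i-z_j\|^2$. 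For the $\tilde\lambda$-part, index-independence lets one factor out $\tilde\lambda^0 e^{-\gamma t}$ and, after using $\mathrm{Re}(\mathrm{i}w)=-\mathrm{Im}(w)$, $\langle z_i-z_j,z_i\rangle=1-\overline{\langle z_i,z_j\rangle}$, $\langle z_i-z_j,z_j\rangle=\langle z_i,z_j\rangle-1$, $|\mathrm{Im}\langle z_i,z_j\rangle|\le\|z_i-z_j\|$ and $|\mathrm{Im}\langle z_j-z_i,z_k\rangle|\le\|z_i-z_j\|$, it is bounded by $2|\tilde\lambda^0|\,\|z_i-z_j\|^2$. Collecting terms and writing $\|z_i-z_j\|^2=2\mathcal{D}(Z)$,
\begin{equation*}
\dot{\mathcal{D}}(Z)\ \le\ 2\big(\bar\kappa-2\kappa+2|\tilde\lambda^0|+\bar\kappa\,\mathcal{D}(Z)\big)\,\mathcal{D}(Z)\qquad\text{a.e. on }[0,T^\ast).
\end{equation*}

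Finally I would close the bootstrap. The middle inequality in \eqref{Y-0} rearranges (multiplying by $2\mu-\gamma\kappa$, which is positive since $\kappa<\mu/\gamma$) to $\bar\kappa-\kappa+2|\tilde\lambda^0|\le\frac{\gamma\kappa\bar\kappa}{2\mu}$; combined with $\mathcal{D}(Z(t))<1-\frac{\gamma}{\mu}\kappa$ on $[0,T^\ast)$, this makes the factor $\bar\kappa-2\kappa+2|\tilde\lambda^0|+\bar\kappa\,\mathcal{D}(Z)$ strictly negative there, so $\mathcal{D}(Z)$ is non-increasing and $\dot{\mathcal{D}}(Z)\le-C_3\,\mathcal{D}(Z)$ with $C_3:=-2(\bar\kappa-2\kappa+2|\tilde\lambda^0|+\bar\kappa\,\mathcal{D}(Z^0))>0$. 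Grönwall's inequality then yields $\mathcal{D}(Z(t))\le\mathcal{D}(Z^0)e^{-C_3t}$, so in particular $\mathcal{D}(Z(t))<1-\frac{\gamma}{\mu}\kappa$ for all $t<T^\ast$; by continuity this forces $T^\ast=\infty$, and the bound holds on $[0,\infty)$ with $C_2=\mathcal{D}(Z^0)$. The main obstacle is faithfully reproducing the $\kappa$-part estimate from the proof of Theorem~\ref{T3.2} and verifying that the extra $2|\tilde\lambda^0|$ contributed by the rotational perturbation is precisely what the strengthened hypothesis $2|\tilde\lambda^0|<\kappa$ and the $(\kappa-2|\tilde\lambda^0|)$-corrected middle inequality in \eqref{Y-0} are built to absorb; the remaining computations are routine.
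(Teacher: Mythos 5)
Your proof is correct and follows essentially the same route as the paper: two-sided bounds $\kappa\le\kappa_{ij}(t)\le\bar\kappa$ obtained by a continuity argument, the same decomposition of $\dot{\mathcal D}$ into the dissipative, quadratic, cross and rotational pieces leading to a Riccati-type inequality, and closure of the bootstrap via the middle inequality of \eqref{Y-0}. The only differences are cosmetic: you bootstrap on $\mathcal D<1-\frac{\gamma}{\mu}\kappa$ rather than on $\min_{i,j}\kappa_{ij}>\kappa_m$ as in Proposition \ref{P5.1}, you skip the intermediate $(\kappa_m,\kappa_M)$ formulation, and your bound $4|\tilde\lambda^0|\mathcal D$ on the rotational term is slightly sharper than the paper's $8|\tilde\lambda^0|\mathcal D$ — none of which changes the argument.
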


\begin{proof}
We leave its proof in Section \ref{secL5.6:5.2}. 
\end{proof}

\section{Collective dynamics under Stuart-Landau coupling gain pair}\label{sec:4}
\setcounter{equation}{0}
In this section, we study emergent dynamics of system \eqref{A-1} with the initial Stuart-Landau coupling gain pair:
\begin{equation} \label{D-0}
\kappa^0_{ij} > 0, \quad \lambda^0_{ij} = -\frac{1}{2}\kappa^0_{ij}, \quad\forall~ i, j \in {\mathcal N}.
\end{equation}
First, we show that if the initial SL coupling gain pair satisfies \eqref{D-0}, then it is propagated along the dynamics \eqref{A-1}.
\begin{lemma} \label{L4.1}
Suppose that system parameters and initial coupling strengths satisfy
\begin{align}\label{D-0-0}
\gamma_0=\gamma_1=\gamma,\quad \mu_0=\mu_1=\mu,\quad \Gamma_0+2\Gamma_1=0,\quad \lambda_{ij}^0 =-\frac{1}{2}\kappa_{ij}^0,
\end{align}
and let $(Z, K, \Lambda)$ be a solution of \eqref{A-0}. Then we have
\begin{equation} \label{D-0-1}
\lambda_{ij}(t) = -\frac{1}{2}\kappa_{ij}(t), \quad \forall~t \geq 0, \quad i, j\in \mathcal{N}.
\end{equation}
\end{lemma}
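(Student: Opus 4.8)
The plan is to derive an ODE for the difference $\delta_{ij}(t) := \lambda_{ij}(t) + \frac{1}{2}\kappa_{ij}(t)$ and show it is the zero solution of a linear homogeneous ODE with zero initial data, so that uniqueness forces $\delta_{ij} \equiv 0$. First I would simply differentiate: using the $\kappa$- and $\lambda$-equations in \eqref{A-0} together with the parameter assumptions $\gamma_0=\gamma_1=\gamma$ and $\mu_0=\mu_1=\mu$ from \eqref{D-0-0}, one computes
\[
\dot{\delta}_{ij} = \dot{\lambda}_{ij} + \tfrac{1}{2}\dot{\kappa}_{ij}
= -\gamma\lambda_{ij} + \mu\Gamma_1(z_i,z_j) + \tfrac{1}{2}\bigl(-\gamma\kappa_{ij} + \mu\Gamma_0(z_i,z_j)\bigr)
= -\gamma\delta_{ij} + \mu\bigl(\Gamma_1(z_i,z_j) + \tfrac{1}{2}\Gamma_0(z_i,z_j)\bigr).
\]
The third hypothesis in \eqref{D-0-0}, namely $\Gamma_0 + 2\Gamma_1 = 0$, says exactly that $\Gamma_1 + \tfrac{1}{2}\Gamma_0 \equiv 0$ as functions on $\bbh\bbs^d \times \bbh\bbs^d$, so the inhomogeneous term vanishes identically along the solution and we are left with the scalar linear ODE $\dot{\delta}_{ij} = -\gamma\delta_{ij}$.

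Next I would invoke the initial condition $\lambda_{ij}^0 = -\tfrac{1}{2}\kappa_{ij}^0$, which is precisely $\delta_{ij}(0) = 0$. By uniqueness for the linear scalar ODE (or simply by Duhamel/integrating factor: $\delta_{ij}(t) = e^{-\gamma t}\delta_{ij}(0) = 0$), we conclude $\delta_{ij}(t) = 0$ for all $t \geq 0$, i.e. $\lambda_{ij}(t) = -\tfrac{1}{2}\kappa_{ij}(t)$, which is \eqref{D-0-1}. One small point worth stating explicitly is that this argument presupposes a well-defined (global) classical solution $(Z,K,\Lambda)$ along which $\Gamma_0, \Gamma_1$ are evaluated; this is already guaranteed by the standard Cauchy--Lipschitz theory as invoked in the proof of Lemma \ref{L2.4}, since the right-hand sides are Lipschitz in the state variables on the compact manifold $(\bbh\bbs^d)^N$ and affine in the gains, so no circularity arises.

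I do not expect any genuine obstacle here: the statement is essentially a bookkeeping identity, and the only thing to be careful about is keeping track of which hypothesis in \eqref{D-0-0} kills which term. The mildly delicate point — if one wants to be scrupulous — is that $\delta_{ij}$ solves a \emph{closed} equation ($\dot{\delta}_{ij} = -\gamma\delta_{ij}$) not depending on the states, so there is no need to simultaneously control the $z_j$ dynamics; once $\Gamma_0 + 2\Gamma_1 = 0$ is used, the gain-difference decouples entirely from the state evolution. If desired, one can even avoid citing uniqueness and argue directly via $\frac{d}{dt}(e^{\gamma t}\delta_{ij}) = e^{\gamma t}(\dot{\delta}_{ij} + \gamma\delta_{ij}) = 0$, whence $e^{\gamma t}\delta_{ij}(t)$ is constant and equal to $\delta_{ij}(0) = 0$.
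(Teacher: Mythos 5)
Your proof is correct and is essentially identical to the paper's own argument: the paper likewise shows that $\kappa_{ij}+2\lambda_{ij}$ (a constant multiple of your $\delta_{ij}$) satisfies the homogeneous linear ODE $\frac{d}{dt}(\kappa_{ij}+2\lambda_{ij})=-\gamma(\kappa_{ij}+2\lambda_{ij})$ and integrates from zero initial data. Your additional remarks on well-posedness and the decoupling of the gain-difference from the state dynamics are sound but not needed beyond what the paper already assumes.
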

\begin{proof}  It follows from \eqref{A-0} and \eqref{D-0-0} that 
\[
\dot{\kappa}_{ij}= -\gamma \kappa_{ij}+\mu \Gamma_0(z_i, z_j), \quad \dot{\lambda}_{ij}= -\gamma \lambda_{ij} -\frac{\mu}{2} \Gamma_0(z_i, z_j), \quad t > 0.
\]
This yields
\[ \frac{d}{dt}(\kappa_{ij}+2\lambda_{ij}) = -\gamma (\kappa_{ij} + 2 \lambda_{ij}).  \]
By integrating the above relation, one has the desired estimate:
\[ (\kappa_{ij}+2\lambda_{ij})(t) = e^{-\gamma t} (\kappa_{ij}^0 +2\lambda_{ij}^0) = 0, \quad t \geq 0. \]
\end{proof}
Next, we substitute \eqref{D-0-1} into \eqref{A-0} to get the dynamics for $(Z, K)$: 
\begin{equation} \label{D-0-2}
\begin{cases}
\displaystyle \dot{z}_j = \frac{1}{N}\sum_{k=1}^N \kappa_{jk}\left[ z_k-\frac{1}{2}\Big(\langle z_j, z_k \rangle+\langle z_k, z_j \rangle \Big)z_j\right ],~~t > 0,
\vspace{0.2cm} \\
\displaystyle \dot{\kappa}_{jk} = -\gamma \kappa_{jk} + \mu \Gamma_0(z_j, z_k), \quad  j, k \in \mathcal{N},  \vspace{0.2cm} \\
\displaystyle (z_j, \kappa_{jk})(0) =(z_j^0, \kappa_{jk}^0) \in \bbh\bbs^{d} \times \bbr_+.
\end{cases}
\end{equation}
In what follows, we consider two coupling laws for $\Gamma_0$ as prototype examples for the anti-Hebbian and the Hebbian couplings between the coupling gain and state:
\begin{equation} \label{D-0-3}
\Gamma_0(z, {\tilde z}) = \|z - \tilde{z} \|^2 \quad \mbox{and} \quad \Gamma_0(z, {\tilde z}) = 1 - \frac{1}{2} \|z - \tilde{z} \|^2. 
\end{equation}

\subsection{Anti-Hebbian coupling law}\label{sec:4.1}
Consider system \eqref{D-0-2} with anti-Hebbian coupling law $\eqref{D-0-3}_1$:
\begin{equation} \label{D-1-0}
\begin{cases}
\displaystyle \dot{z}_j = \frac{1}{N}\sum_{k=1}^N \kappa_{jk}\left[ z_k-\frac{1}{2}\Big(\langle z_j, z_k \rangle+\langle z_k, z_j \rangle \Big)z_j\right ],~~t > 0, \vspace{0.2cm} \\
\displaystyle \dot{\kappa}_{jk} = -\gamma \kappa_{jk} + \mu \|z_j - z_k \|^2, \quad  j, k \in \mathcal{N},  \vspace{0.2cm} \\
\displaystyle (z_j, \kappa_{jk})(0) =(z_j^0, \kappa_{jk}^0) \in \bbh\bbs^{d} \times \bbr_+.
\end{cases}
\end{equation}
To study emergent dynamics of \eqref{D-1-0}, we recall a Lyapunov functional ${\mathcal L}_{ij}$ in \eqref{C-1-3}:
\[ \mathcal{L}_{ij}=\frac{1}{2}\|z_i-z_j\|^2+\frac{1}{4\mu N}\sum_{k=1}^{N}(\kappa_{ik} -\kappa_{jk})^2. \]
On $\bbh\bbs^d$, the functional ${\mathcal L}_{ij}$ can be rewritten as follows:
\begin{equation} \label{D-1-0-0}
\mathcal{L}_{ij} = 1 - \mbox{Re}~\langle z_i, z_j \rangle +  \frac{1}{4\mu N}\sum_{k=1}^{N}(\kappa_{ik} -\kappa_{jk})^2. 
\end{equation}
Thus, it is natural to study the time-evolution of $\langle z_i, z_j \rangle$. For notational simplicity, we use  
\begin{equation} \label{D-1-1-1}
h_{ij} := \langle z_i, z_j \rangle, \quad R_{ij} := \mathrm{Re}~h_{ij} = \frac{1}{2}(h_{ij}+h_{ji}), \quad I_{ij} := \mathrm{Im}~h_{ij} = \frac{1}{2\mathrm{i}}(h_{ij}-h_{ji}).
\end{equation}
Then, it is easy to see 
\[ R_{ii} = 1, \quad I_{ii} = 0, \quad |R_{ij} | \leq |h_{ij}| \leq 1, \quad  R_{ij} = R_{ji} \quad \mbox{and} \quad I_{ij} = -I_{ji}, \quad i, j \in {\mathcal N}. \]
We can rewrite \eqref{D-1-0} and a Lyapunov functional in \eqref{D-1-0-0}:
\[
\dot{z}_j = \frac{1}{N}\sum_{k=1}^N \kappa_{jk} (z_k - R_{jk}z_j ), \quad \mathcal{L}_{ij}=1-R_{ij}+\frac{1}{4\mu N}\sum_{k=1}^{N}(\kappa_{ik} -\kappa_{jk})^2.
\]
To sum up, system \eqref{D-1-0} on $\mathbb{HS}^d$ becomes
\begin{align}\label{D-1-2}
\begin{cases}
\displaystyle \dot{z}_j = \frac{1}{N}\displaystyle\sum_{k=1}^N \kappa_{jk} (z_k - R_{jk}z_j ), \quad t > 0, \vspace{0.2cm} \\
\displaystyle \dot{\kappa}_{jk} = -\gamma \kappa_{jk} + \mu \| z_j - z_k \|^2, \quad j, k \in {\mathcal N},  \vspace{0.2cm}  \\
 (z_j, \kappa_{jk})(0) =(z_j^0, \kappa_{jk}^0) \in \bbh\bbs^{d} \times \bbr_+.
\end{cases}
\end{align}
Next, we study the time-evolution of ${\mathcal L}_{ij}$ in a series of lemmas. 
\begin{lemma}\label{L4.2} 
Let $(Z, K)$ be a solution to \eqref{D-1-2}. Then $\mathcal{L}_{ij}$ satisfies
\begin{align*}
\frac{d}{dt}\mathcal{L}_{ij} = -\frac{1}{N}\sum_{k=1}^N(\kappa_{ik}R_{ik}+\kappa_{jk}R_{jk})(1-R_{ij})-\frac{\gamma}{2\mu N}\sum_{k=1}^N(\kappa_{ik}-\kappa_{jk})^2, \quad t > 0.
\end{align*}
\begin{proof}
By direct calculations, one has 
\begin{equation} \label{D-1-3}
\frac{d}{dt}\mathcal{L}_{ij} = -\frac{1}{2}(\dot{h}_{ij}+\dot{h}_{ji})+\frac{1}{2 \mu N}\sum_{k=1}^N(\kappa_{ik}-\kappa_{jk})(\dot{\kappa}_{ik}-\dot{\kappa}_{jk}).
\end{equation}
Note that the terms in the R.H.S. of \eqref{D-1-3} can be estimated as follows. \newline

\noindent $\bullet$~(Estimate of the first term in \eqref{D-1-3}): We use \eqref{D-1-1-1}  and \eqref{D-1-2} to find
\begin{align}
\begin{aligned} \label{D-1-4}
\dot{h}_{ij}+\dot{h}_{ji} =& \langle z_i, \dot{z}_j \rangle + \langle \dot{z}_i, z_j \rangle + \langle z_j, \dot{z}_i \rangle + \langle \dot{z}_j, z_i \rangle\\
=& \frac{2}{N}\sum_{k=1}^N\mathrm{Re}( \langle z_i, \dot{z}_j \rangle + \langle \dot{z}_i, z_j \rangle )\\
=& \frac{2}{N}\sum_{k=1}^N\mathrm{Re}\Big( \kappa_{jk}(h_{ik}-R_{jk}h_{ij}  )+\kappa_{ik}(h_{kj}-R_{ik}h_{ij}  )\Big)\\
=& \frac{2}{N}\sum_{k=1}^N\Big( \kappa_{jk}(R_{ik}-R_{jk}R_{ij}  )+\kappa_{ik}(R_{jk}-R_{ik}R_{ji}  )\Big)\\
=& \frac{2}{N}\sum_{k=1}^N\Big((\kappa_{ik}-\kappa_{jk})(R_{jk}-R_{ik})+(R_{ik}\kappa_{ik}+R_{jk}\kappa_{jk})(1-R_{ij}) \Big).
\end{aligned}
\end{align}

\vspace{0.5cm}

\noindent $\bullet$~(Estimate of the second term in \eqref{D-1-3}): Similar to the first term, one has 
\begin{align}
\begin{aligned} \label{D-1-5}
&\frac{1}{2\mu N}\sum_{k=1}^N(\kappa_{ik}-\kappa_{jk}) (\dot{\kappa}_{ik}-\dot{\kappa}_{jk}) \\
& \hspace{0.5cm} = \frac{1}{2\mu N}\sum_{k=1}^N (\kappa_{ik}-\kappa_{jk})(-\gamma\kappa_{ik}-\mu h_{ik}-\mu h_{ki} +\gamma\kappa_{jk}+\mu h_{jk} + \mu h_{kj}) \\
&\hspace{0.5cm}= -\frac{\gamma}{2\mu N}\sum_{k=1}^N(\kappa_{ik}-\kappa_{jk})^2-\frac{1}{2N}\sum_{k=1}^N(\kappa_{ik}-\kappa_{jk})(h_{ik}+h_{ki}-h_{jk}-h_{kj}) \\
&\hspace{0.5cm}=  -\frac{\gamma}{2\mu N}\sum_{k=1}^N(\kappa_{ik}-\kappa_{jk})^2+\frac{1}{N}\sum_{k=1}^N(\kappa_{ik}-\kappa_{jk})(R_{jk}-R_{ik}).
\end{aligned}
\end{align}
In \eqref{D-1-3}, we combine \eqref{D-1-4} and \eqref{D-1-5} to find the desired result.
\end{proof}
\end{lemma}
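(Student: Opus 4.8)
The plan is to differentiate the Lyapunov functional in the form it takes on $\bbh\bbs^d$, namely $\mathcal{L}_{ij} = 1 - R_{ij} + \frac{1}{4\mu N}\sum_{k=1}^N (\kappa_{ik}-\kappa_{jk})^2$ as in \eqref{D-1-0-0} (using $\|z_i-z_j\|^2 = 2(1-R_{ij})$, which holds since $\|z_i\|=\|z_j\|=1$ by Lemma \ref{L2.2}). Since $R_{ij} = \frac{1}{2}(h_{ij}+h_{ji})$ with $h_{ij} = \langle z_i,z_j\rangle$, the time derivative splits cleanly into a \emph{state part} $-\frac{1}{2}(\dot h_{ij}+\dot h_{ji})$ and a \emph{gain part} $\frac{1}{2\mu N}\sum_{k=1}^N(\kappa_{ik}-\kappa_{jk})(\dot\kappa_{ik}-\dot\kappa_{jk})$, and I would handle these two pieces separately and add them at the end.

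For the state part, I would first write $\dot h_{ij}+\dot h_{ji} = 2\,\mathrm{Re}\big(\langle z_i,\dot z_j\rangle + \langle\dot z_i,z_j\rangle\big)$, collapsing the four arising terms into two real parts via $\langle z_j,\dot z_i\rangle = \overline{\langle\dot z_i,z_j\rangle}$, then substitute $\dot z_j = \frac{1}{N}\sum_k\kappa_{jk}(z_k - R_{jk}z_j)$ from \eqref{D-1-2}. Taking real parts and using $\mathrm{Re}\,h_{ab} = R_{ab} = R_{ba}$ leaves $\frac{1}{N}\sum_k\big(\kappa_{jk}(R_{ik}-R_{jk}R_{ij}) + \kappa_{ik}(R_{jk}-R_{ik}R_{ij})\big)$. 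The one content-bearing manipulation here is the regrouping $\kappa_{ik}(R_{jk}-R_{ik}R_{ij}) + \kappa_{jk}(R_{ik}-R_{jk}R_{ij}) = (\kappa_{ik}R_{ik}+\kappa_{jk}R_{jk})(1-R_{ij}) + (\kappa_{ik}-\kappa_{jk})(R_{jk}-R_{ik})$, which is just the distributive law but isolates the dissipative factor $(1-R_{ij})$ from a "mismatch" term. For the gain part, I would insert $\dot\kappa_{ik}-\dot\kappa_{jk} = -\gamma(\kappa_{ik}-\kappa_{jk}) + \mu(\|z_i-z_k\|^2-\|z_j-z_k\|^2)$ and again use $\|z_a-z_k\|^2 = 2(1-R_{ak})$ to turn the last difference into $2(R_{jk}-R_{ik})$; multiplying by $\frac{1}{2\mu N}(\kappa_{ik}-\kappa_{jk})$ and summing gives $-\frac{\gamma}{2\mu N}\sum_k(\kappa_{ik}-\kappa_{jk})^2 + \frac{1}{N}\sum_k(\kappa_{ik}-\kappa_{jk})(R_{jk}-R_{ik})$.

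Adding the two contributions finishes the proof: the mismatch sum $\frac{1}{N}\sum_k(\kappa_{ik}-\kappa_{jk})(R_{jk}-R_{ik})$ produced by the state dynamics is exactly the negative of the one produced by the gain dynamics, so they cancel, leaving $\frac{d}{dt}\mathcal{L}_{ij} = -\frac{1}{N}\sum_k(\kappa_{ik}R_{ik}+\kappa_{jk}R_{jk})(1-R_{ij}) - \frac{\gamma}{2\mu N}\sum_k(\kappa_{ik}-\kappa_{jk})^2$. There is no genuine obstacle beyond careful bookkeeping; the only thing worth flagging is that the weight $\frac{1}{4\mu N}$ in the definition \eqref{C-1-3} of $\mathcal{L}_{ij}$ is chosen precisely so that this cancellation happens, so the right organizing idea is to arrange the computation so one anticipates the cancellation rather than stumbling onto it. I would therefore present the calculation in the order: (i) differentiate $\mathcal{L}_{ij}$ in the form \eqref{D-1-0-0}; (ii) compute the state term and perform the regrouping; (iii) compute the gain term; (iv) add the two and observe the cancellation.
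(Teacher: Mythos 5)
Your proposal is correct and follows essentially the same route as the paper: the same split of $\dot{\mathcal{L}}_{ij}$ into the state part $-\tfrac{1}{2}(\dot h_{ij}+\dot h_{ji})$ and the gain part, the same regrouping isolating $(1-R_{ij})$, and the same cancellation of the mismatch sums $\tfrac{1}{N}\sum_k(\kappa_{ik}-\kappa_{jk})(R_{jk}-R_{ik})$. Your remark that the weight $\tfrac{1}{4\mu N}$ is tuned precisely to produce this cancellation is exactly the organizing principle behind the paper's computation.
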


\begin{lemma}\label{L4.3} Let $(Z, K)$ be a solution to \eqref{D-1-2} with the initial  data $(Z^0, K^0)$ satisfying the following relations:
\[
\max_{i,j } \mathcal{L}_{ij}^0 < 1.
\]
Then, we have the following assertions:
\begin{enumerate}
\item
$R_{ij}$ and $\kappa_{ij}$ are strictly positive:
\[
R_{ij}(t)>0, \quad \kappa_{ij}(t)>0, \quad t \geq 0, \quad i,j\in \mathcal{N}.
\]
\item
$\mathcal{L}_{ij}$ is non-increasing function:
\[ \mathcal{L}_{ij}(t) \leq \mathcal{L}_{ij}^0, \quad t \geq 0, \quad i,j\in \mathcal{N}.   \]
\end{enumerate}
\end{lemma}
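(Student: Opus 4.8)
The plan is to prove the two assertions simultaneously by a continuity (bootstrap) argument, using $\mathcal{L}_{ij}^0<1$ as the open condition that propagates. First I would set up the continuity argument: let
\[
T^* := \sup\Big\{ T\geq 0 \ :\ \max_{i,j}\mathcal{L}_{ij}(t) < 1 \ \text{and}\ R_{ij}(t)>0,\ \kappa_{ij}(t)>0 \ \text{for all}\ t\in[0,T),\ i,j\in\mathcal{N}\Big\}.
\]
Since $\mathcal{L}_{ij}(t) = 1 - R_{ij}(t) + \frac{1}{4\mu N}\sum_k(\kappa_{ik}-\kappa_{jk})^2$, the strict inequality $\mathcal{L}_{ij}(t)<1$ forces $R_{ij}(t) > \frac{1}{4\mu N}\sum_k(\kappa_{ik}-\kappa_{jk})^2 \geq 0$, so once $\max_{i,j}\mathcal L_{ij}<1$ is known on an interval, positivity of $R_{ij}$ is automatic there. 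Positivity of $\kappa_{ij}$ comes for free from Lemma~\ref{L2.1}(i) (or directly from the Duhamel representation $\kappa_{ij}(t)=e^{-\gamma t}(\kappa_{ij}^0+\mu\int_0^t e^{\gamma s}\|z_i-z_j\|^2\,ds)$ with $\kappa_{ij}^0>0$), so the only genuine condition to propagate is $\max_{i,j}\mathcal{L}_{ij}<1$. By continuity of the solution, $T^*>0$.

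Next I would show $T^*=\infty$ by deriving that $\mathcal{L}_{ij}$ is non-increasing on $[0,T^*)$, which is exactly assertion (2). On $[0,T^*)$ we have $R_{ik},R_{jk}>0$ and $\kappa_{ik},\kappa_{jk}>0$, and from $\mathcal{L}_{ij}<1$ we also get $R_{ij}>0$, hence $1-R_{ij}\geq 0$. Feeding this into the identity from Lemma~\ref{L4.2},
\[
\frac{d}{dt}\mathcal{L}_{ij} = -\frac{1}{N}\sum_{k=1}^N(\kappa_{ik}R_{ik}+\kappa_{jk}R_{jk})(1-R_{ij})-\frac{\gamma}{2\mu N}\sum_{k=1}^N(\kappa_{ik}-\kappa_{jk})^2,
\]
both terms on the right are $\leq 0$: the first because it is a sum of products of nonnegative quantities times $(1-R_{ij})\geq 0$, the second because $\gamma,\mu>0$. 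Therefore $\frac{d}{dt}\mathcal{L}_{ij}\leq 0$ on $[0,T^*)$, so $\mathcal{L}_{ij}(t)\leq \mathcal{L}_{ij}^0$ there, giving assertion (2) on $[0,T^*)$. In particular $\max_{i,j}\mathcal{L}_{ij}(t)\leq \max_{i,j}\mathcal{L}_{ij}^0 < 1$ stays strictly below $1$ with a fixed gap, and $R_{ij}(t)\geq 1-\mathcal{L}_{ij}(t)\geq 1-\max_{i,j}\mathcal L_{ij}^0>0$ is bounded away from $0$ uniformly. If $T^*<\infty$, then by continuity the strict inequalities would persist on a slightly larger interval, contradicting maximality of $T^*$; hence $T^*=\infty$, and both assertions hold for all $t\geq 0$.

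The main obstacle, such as it is, is organizing the mutual dependence correctly: the sign of $\frac{d}{dt}\mathcal{L}_{ij}$ needs $1-R_{ij}\geq 0$, which needs $\mathcal{L}_{ij}<1$, which is what we are trying to propagate — so a naive Grönwall-type estimate is circular and the continuity/bootstrap framing is essential. One subtlety worth handling carefully is that the differential identity in Lemma~\ref{L4.2} is for a fixed pair $(i,j)$, whereas the condition to propagate is the maximum over all pairs; since there are finitely many pairs and each $\mathcal{L}_{ij}$ is $C^1$ along the (smooth) solution, the maximum is Lipschitz and non-increasing as a finite maximum of non-increasing functions, so no extra work with weak derivatives is needed here. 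A second minor point is that positivity of $\kappa_{ij}$ should be invoked for \emph{all} pairs (including $k$-indices appearing in the sums), which is immediate from the Duhamel formula and $\kappa_{ij}^0>0$, independent of the bootstrap. With these observations the argument closes cleanly.
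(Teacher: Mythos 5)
Your proof is correct and follows essentially the same route as the paper: a continuity/bootstrap argument that propagates $\mathcal{L}_{ij}<1$ (hence $R_{ij}\geq 1-\mathcal{L}_{ij}^0>0$) using the sign structure of the identity in Lemma~\ref{L4.2}, with positivity of $\kappa_{ij}$ coming separately from the Duhamel representation. The only cosmetic difference is that the paper organizes the bootstrap via per-pair exit times $t^*_{ij}$ and an $\argmin$ over pairs, whereas you use a single global exit time $T^*$ for the conjunction of all conditions; these are equivalent.
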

\begin{proof} Let $(i, j) \in {\mathcal N}^2 $ be fixed. Since $\kappa_{ij}^0>0$, by Lemma \ref{L2.1}, one has 
\[  \kappa_{ij}(t) > 0, \quad t > 0. \] 
Now, it follows from $\mathcal{L}_{ij}^0<1$ that 
\begin{equation} \label{D-1-6}
R_{ij}^0 > \frac{1}{4\mu N}\sum_{k=1}^{N}(\kappa_{ik}^0 -\kappa_{jk}^0)^2 \geq 0.
\end{equation}
We claim:
\[ R_{ij}(t)>0, \quad t \geq 0. \]
For this, we introduce a set ${\mathcal{T}}_{ij}$:
\[ {\mathcal T}_{ij} : = \{ \tau \in [0,\infty) : R_{ij}(t)>0, \quad t \in [0,\tau) \}. \]
Then, by \eqref{D-1-6} and continuity of $R_{ij}$, one has 
\[ {\mathcal T}_{ij} \not  = \emptyset. \]
Suppose that 
\[ t_{ij}^{*} := \sup  {\mathcal T}_{ij} < \infty. \]
Then, one has 
\[ \displaystyle\lim_{t \nearrow t^{*}_{ij}} R_{ij}(t) = 0. \]
We choose the index $(i_0,j_0)$ by
\[
(i_0,j_0) := \argmin_{(k,l)} t^{*}_{kl}.
\]
By the minimality of $t^{*}_{i_0j_0}$ and Lemma \ref{L4.2}, we have
\[
1-R_{i_0j_0}(t) \leq \mathcal{L}_{i_0j_0}(t) < \mathcal{L}_{i_0j_0}^0,\quad \text{so that} \quad R_{i_0j_0}(t) > 1-\mathcal{L}_{i_0j_0}^0>0, \quad t \in (0, t^{*}_{i_0j_0}).
\]
We take $t \nearrow t^{*}_{i_0j_0}$ to derive a contradiction:
\[
0=\lim_{t \nearrow t^{*}_{i_0j_0}} R_{i_0j_0}(t) \geq 1-\mathcal{L}_{i_0j_0}^0 > 0,
\]
Hence, we verified the claim:
\[  t^{*}_{i_0j_0}=\infty \quad \mbox{and} \quad R_{ij}(t)>0, \quad t \in (0, \infty). \]
By minimality, we have 
\[ t^{*}_{ij}=\infty \quad \mbox{for each index $(i,j)$}. \]
On the other hand, we have 
\[ 1-R_{ij} \geq 0. \]
Therefore, it follows from Lemma \ref{L4.2} that  the derivative of $\mathcal{L}_{ij}$ is not positive for every $t \in [0,\infty)$ which yields  the desired result.
\end{proof}
Before we provide a proof of Theorem \ref{T3.1}, we state Barbalat's lemma and Gr\"{o}nwall type lemma without proofs.
\begin{lemma} \label{L4.4}
\emph{(Barbalat's Lemma \cite{Ba})}
Suppose $f:[0,\infty) \to \bbr$ is uniformly continuous and satisfies
\[
\exists~~\lim_{t \to \infty} \int_0^t f(s)ds < \infty.
\]
Then, one has 
\[ \lim_{t \to \infty} f(t)=0. \]
\end{lemma}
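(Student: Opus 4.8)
The plan is to argue by contradiction, playing off the tension between uniform continuity (which prevents $f$ from oscillating too rapidly) and the convergence of the integral (which forces the tail contributions of $f$ to be negligible). Suppose the conclusion fails, that is, $f(t) \not\to 0$ as $t \to \infty$. Then there exist $\varepsilon > 0$ and a sequence $t_n \to \infty$, which I may take strictly increasing, such that $|f(t_n)| \geq \varepsilon$ for every $n$. The goal is to show that each point $t_n$ carries a fixed amount of ``signed area'' nearby, contradicting the Cauchy criterion for the convergent integral.

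The key step is to upgrade the pointwise lower bound $|f(t_n)| \geq \varepsilon$ to a lower bound on a short integral. Using uniform continuity, I would fix $\delta > 0$ so that $|s - t| \leq \delta$ forces $|f(s) - f(t)| < \varepsilon/2$. Then for every $s \in [t_n, t_n + \delta]$ one has $|f(s)| > |f(t_n)| - \varepsilon/2 \geq \varepsilon/2$; moreover, since $f$ remains within $\varepsilon/2$ of $f(t_n)$ while $|f(t_n)| \geq \varepsilon$, the intermediate value theorem guarantees that $f$ cannot vanish on the window, so it retains the sign of $f(t_n)$ throughout. Hence $\bigl| \int_{t_n}^{t_n + \delta} f(s)\, ds \bigr| = \int_{t_n}^{t_n+\delta} |f(s)|\, ds > \tfrac{\varepsilon \delta}{2}$, a bound that is independent of $n$ because $\delta$ was chosen once and for all.

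Finally I would invoke the Cauchy criterion associated with the existence of $\lim_{t\to\infty} \int_0^t f(s)\, ds$. Writing $F(t) := \int_0^t f(s)\, ds$, convergence of $F$ at infinity means that for every $\eta > 0$ there is $T$ with $|F(b) - F(a)| < \eta$ whenever $b > a \geq T$. Taking $\eta = \varepsilon\delta/2$ and then $n$ large enough that $t_n \geq T$, the estimate $|F(t_n + \delta) - F(t_n)| = \bigl| \int_{t_n}^{t_n+\delta} f(s)\, ds \bigr| > \varepsilon\delta/2 = \eta$ directly contradicts the Cauchy criterion, which closes the argument.

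The main obstacle, and the precise reason uniform continuity (rather than mere continuity) is indispensable, is the sign-and-magnitude control on the short window. Without a modulus of continuity valid uniformly in $n$, the admissible window length $\delta$ could shrink as $t_n \to \infty$, and the lower bound $\varepsilon\delta/2$ on the local integral would degenerate to zero, destroying the contradiction. Uniform continuity supplies a single $\delta$ serving all windows at once, which is exactly what keeps the integral contributions bounded away from zero and allows the Cauchy criterion to deliver the contradiction.
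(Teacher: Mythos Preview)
Your argument is correct and is the standard proof of Barbalat's Lemma. The paper itself does not prove this statement; it merely states it with a citation to \cite{Ba}, so there is no in-paper proof to compare against. Your contradiction argument via uniform continuity and the Cauchy criterion for $F(t)=\int_0^t f$ is exactly the classical route and would serve as a self-contained replacement for the bare citation.
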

\begin{lemma} \label{L4.5} \cite{HKLN}
Let $y: [0, \infty) \to [0, \infty)$ be a $\mathcal{C}^1$ function satisfying
\[
y' \leq -\alpha y + f, \quad t>0, \quad y(0)=y^0,
\]
where $\alpha$ is a positive constant and $f: [0, \infty) \to \bbr$ is a continuous function satisfying 
\[ \displaystyle\lim_{t \to \infty}f(t)=0. \]
Then $y$ satisfies
\[
y(t) \leq \frac{1}{\alpha} \Big( \max_{s \in [t/2,t]}|f(s)| \Big) + y^0 e^{-\alpha t}+\frac{\|f\|_{L^\infty}}{\alpha}e^{-\frac{\alpha t}{2}},\quad t\geq 0.
\]
\end{lemma}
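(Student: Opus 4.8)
The plan is to convert the differential inequality into an integral (Duhamel) form and then split the resulting convolution integral at the midpoint $t/2$. The point is that on the old half-interval $[0,t/2]$ the kernel $e^{-\alpha(t-s)}$ is at most $e^{-\alpha t/2}$, so the crude pointwise bound $|f(s)|\le\|f\|_{L^\infty}$ already yields an exponentially small contribution there, whereas on the recent half-interval $[t/2,t]$ the kernel integrates to at most $1/\alpha$, so that piece is controlled by $\max_{s\in[t/2,t]}|f(s)|$ — which is exactly the non-exponential term in the claimed estimate and the part that carries the information $f(t)\to0$.

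Concretely, I would first note that $f$, being continuous with $f(t)\to0$, is bounded, so $\|f\|_{L^\infty}<\infty$ and the right-hand side is meaningful (otherwise the statement is vacuous). Since $y\in\mathcal{C}^1$, multiplying the hypothesis $y'\le-\alpha y+f$ by the integrating factor $e^{\alpha t}$ gives $\frac{d}{dt}\!\left(e^{\alpha t}y(t)\right)\le e^{\alpha t}f(t)$; integrating over $[0,t]$ and dividing by $e^{\alpha t}$ yields the Duhamel inequality $y(t)\le y^0 e^{-\alpha t}+\int_0^t e^{-\alpha(t-s)}f(s)\,ds$. Writing $\int_0^t=\int_0^{t/2}+\int_{t/2}^t$, I bound the first piece using $|f(s)|\le\|f\|_{L^\infty}$ and $\int_0^{t/2}e^{-\alpha(t-s)}\,ds=\frac{e^{-\alpha t/2}-e^{-\alpha t}}{\alpha}\le\frac{1}{\alpha}e^{-\alpha t/2}$, and the second piece using $|f(s)|\le\max_{s\in[t/2,t]}|f(s)|$ and $\int_{t/2}^t e^{-\alpha(t-s)}\,ds=\frac{1-e^{-\alpha t/2}}{\alpha}\le\frac{1}{\alpha}$. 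Adding these two bounds to the homogeneous term $y^0 e^{-\alpha t}$ gives exactly the asserted inequality.

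There is essentially no obstacle here: the only items needing care are the legitimacy of the integrating-factor step (immediate from $y\in\mathcal{C}^1$) and the two elementary integral computations. For later use in Sections \ref{sec:4} and \ref{sec:5} one observes that all three terms on the right tend to $0$ as $t\to\infty$ — the first and third by exponential decay, the middle one because $\max_{s\in[t/2,t]}|f(s)|\to0$ by the hypothesis $\lim_{t\to\infty}f(t)=0$ — so in particular $\lim_{t\to\infty}y(t)=0$; and if $f$ itself decays exponentially, the same splitting yields an exponential decay rate for $y$.
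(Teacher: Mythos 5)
Your proof is correct and complete: the Duhamel representation followed by splitting the convolution integral at $t/2$ is exactly the standard argument (the paper itself defers to Appendix A of \cite{HKLN}, which proceeds the same way), and both elementary integral bounds check out. Nothing further is needed.
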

\begin{proof} 
For a proof, we refer to Appendix A of \cite{HKLN}.
\end{proof}

Now we are ready to provide a proof of Theorem \ref{T3.1}. \newline

\noindent {\bf Proof of Theorem \ref{T3.1}}: Let $(Z, K)$ be a solution to \eqref{C-1-2} with the initial data $(Z^0, K^0)$ satisfying
 \[
 \max_{i,j} \mathcal{L}_{ij}^0<1, \quad \min_{i,j}\kappa_{ij}^0>0,
 \]
and we choose an index
\[
(I,J) := \argmax_{(i, j)\in \mathcal{N}^2} \mathcal{L}_{ij}^0.
\]
To apply Lemma \ref{L4.4} to $\kappa_{ij}$, we will show uniform boundedness of $\dot{\kappa}_{ij}$ in order to verify uniform continuity of $\kappa_{ij}$. Note that 
 \begin{equation}\label{D-2}
 \begin{aligned}
& \dot{\kappa}_{ij} = -\gamma\kappa_{ij} + \mu\|z_i-z_j\|^2 \leq - \gamma\kappa_{ij} + 4\mu,\\
 & \dot{\kappa}_{ij} = -\gamma\kappa_{ij} + \mu\|z_i-z_j\|^2 \geq - \gamma\kappa_{ij}
 \end{aligned}
 \end{equation}
 and from $\eqref{D-2}_1$ we obtain unfirom upper bound of $\kappa_{ij}$ as
 \begin{align}\label{D-2-0-0}
 \kappa_{ij}(t) \leq \kappa_{ij}^0e^{-\gamma t}+\frac{4\mu}{\gamma}(1-e^{-\gamma t}) \leq \kappa_{ij}^0 +\frac{4\mu}{\gamma} \leq \max_{k, l} \kappa_{kl}^0 + \frac{4\mu}{\gamma}.
 \end{align}
From Lemma \ref{L2.1}, $\kappa_{ij}$ is uniformly bounded below by 0. Therefore, $\kappa_{ij}$ is uniformly bounded. On the other hand, from \eqref{D-2} $\dot{\kappa}_{ij}$ is also uniformly bounded, therefore $\kappa_{ij}$ is uniformly continuous. \newline
 
 \noindent
It follows from Lemma \ref{L4.3} that
\[
1-R_{ij}(t) < \mathcal{L}^0_{ij}, \quad \text{therefore} \quad R_{ij}(t)>1-\mathcal{L}^0_{ij}>1-\mathcal{L}^0_{IJ}>0, \quad t>0.
\]
Therefore we have
\begin{align*}
 \dot{\mathcal{L}}_{ij} 
 &= -\left(\frac{1-R_{ij}}{N}\right)\sum_{k=1}^N(\kappa_{ik}R_{ik}+\kappa_{jk}R_{jk})-\frac{\gamma}{2\mu N}\sum_{k=1}^N(\kappa_{ik}-\kappa_{jk})^2 \quad   (\because \text{Lemma \ref{L4.2})}\\
&\leq -\left(\frac{1-R_{ij}}{N}\right)\sum_{k=1}^N\big((1-\mathcal{L}^0_{IJ})(\kappa_{ik}+\kappa_{jk})\big)-\frac{\gamma}{2\mu N}\sum_{k=1}^N(\kappa_{ik}-\kappa_{jk})^2  \quad (\because R_{ij}(t) > 1 - \mathcal{L}^0_{IJ})\\
 &\leq - \underbrace{\min \left\{ \frac{1}{N} \sum_{k=1}^N \big((1-\mathcal{L}^0_{IJ})(\kappa_{ik}+\kappa_{jk})\big),~2\gamma \right\}}_{=:\mathcal{K}_{ij}} \left( 1-R_{ij} + \frac{1}{4\mu N}\sum_{k=1}^N(\kappa_{ik}-\kappa_{jk})^2 \right) \\
  & = -\mathcal{K}_{ij}\mathcal{L}_{ij},
\end{align*}
for any index $i$ and $j$. This implies
\begin{equation}\label{D-2-0-1}
 \mathcal{L}_{ij}(t) \leq \mathcal{L}_{ij}^0 \exp \left( -\int_0^t \mathcal{K}_{ij}(s) ds \right).
\end{equation}
Since $\mathcal{K}_{ij} \geq 0$, we have only two possible cases:
\[ \mbox{either}~\int_0^{\infty} \mathcal{K}_{ij} dt = \infty \quad \mbox{or} \quad \int_0^{\infty} \mathcal{K}_{ij} dt < \infty. \]

\vspace{0.5cm}

\noindent $\bullet$~Case A $\left(\int_0^{\infty} \mathcal{K}_{ij} dt = \infty \right)$: We first assume that
%\begin{align}\label{D-2-0-2}
%\frac{1}{N} \sum_{k=1}^N \int_0^{\infty}\big((1-\mathcal{L}^0_{ij})(\kappa_{ik}+\kappa_{jk})\big) dt = \infty.
%\end{align}
%By definition of $\mathcal{K}_{ij}$, one has
\[
\int_0^{\infty} \mathcal{K}_{ij}(s) ds = \infty. 
\]
As $\mathcal{L}_{ij} \geq 0$, we use \eqref{D-2-0-1} to find 
\[ \displaystyle\lim_{t \to \infty} \mathcal{L}_{ij}(t) = 0. \]
In particular, we have 
\[ \displaystyle\lim_{t \to \infty} \| z_i(t) - z_j(t) \|=0. \]
Now we recall that dynamics of $\kappa_{ij}$ is defined by
\begin{equation}\label{D-2-0-3}
\dot{\kappa}_{ij} = -\gamma \kappa_{ij} + \mu \| z_i - z_j \|^2, \quad t>0.
\end{equation}
By  $\kappa_{ij}^0>0$, the assumption, Lemma \ref{L2.1}, one has 
\[  \kappa_{ij}(t)>0, \quad t \geq 0. \]
Therefore we can apply Lemma \ref{L4.5} to \eqref{D-2-0-3} to find
\[
\lim_{t \to \infty} \kappa_{ij}(t) = 0,
\]
verifying the desired result. \newline

\noindent $\bullet$~Case B $\left(\int_0^{\infty} \mathcal{K}_{ij} dt < \infty \right)$: Now assume that 
\[
\int_0^{\infty} \mathcal{K}_{ij} dt < \infty.
\]
and consider the set 
\[
A:= \left\{ t \in (0,\infty) ~:~
\frac{1}{N} \sum_{k=1}^N \big((1-\mathcal{L}^0_{IJ})(\kappa_{ik}+\kappa_{jk})\big) \geq 2\gamma
\right\}.
\]
Then for the Lebesgue measure $m$, we have following relation
\[
2\gamma m(A) = \int_A 2\gamma = \int_A \mathcal{K}_{ij} dt \leq \int_0^{\infty} \mathcal{K}_{ij} dt < \infty,
\]
therefore $m(A)<\infty$. This yields,
\begin{align}\label{D-2-0-4}
\begin{aligned}
&\int_0^{\infty} \frac{1}{N} \sum_{k=1}^N \big((1-\mathcal{L}^0_{IJ})(\kappa_{ik}+\kappa_{jk})\big) dt
\\ & \hspace{0.5cm} = \int_A \frac{1}{N} \sum_{k=1}^N \big((1-\mathcal{L}^0_{IJ})(\kappa_{ik}+\kappa_{jk})\big) dt
+ \int_{\bbr_+ \backslash A} \frac{1}{N} \sum_{k=1}^N \big((1-\mathcal{L}^0_{IJ})(\kappa_{ik}+\kappa_{jk})\big) dt
\\ & \hspace{0.5cm} = \int_A \frac{1}{N} \sum_{k=1}^N \big((1-\mathcal{L}^0_{IJ})(\kappa_{ik}+\kappa_{jk})\big) dt
+ \int_{\bbr_+ \backslash A} \mathcal{K}_{ij} dt \quad ( \because \text{Definition of  } A)
\\ & \hspace{0.5cm} \leq \int_A \frac{2}{N} \sum_{k=1}^N \left[\left(1-\mathcal{L}^0_{IJ}\right)\left(\max_{i,j} \kappa_{ij}^0 + \frac{4\mu}{\gamma}
\right)\right] dt
+ \int_{\bbr_+ \backslash A} \mathcal{K}_{ij} dt \quad (\because \eqref{D-2-0-0})
\\ &  \hspace{0.5cm}= 2 m(A) \left[\left(1-\mathcal{L}^0_{IJ}\right)\left(\max_{i,j} \kappa_{ij}^0 + \frac{4\mu}{\gamma}
\right)\right] dt
+ \int_{\bbr_+ \backslash A} \mathcal{K}_{ij} dt
\\ & \hspace{0.5cm} \leq 2 m(A) \left[\left(1-\mathcal{L}^0_{IJ}\right)\left(\max_{i,j} \kappa_{ij}^0 + \frac{4\mu}{\gamma}
\right)\right] dt
+ \int_0^\infty \mathcal{K}_{ij} dt < \infty.
\end{aligned}
\end{align}
Since $\mathcal{L}_{ij}^0<1$ from a priori condition and coupling gains are non-negative from Lemma \ref{L2.1}, \eqref{D-2-0-4} implies 
\[
\int_0^{\infty} (\kappa_{ik}+\kappa_{jk}) dt < \infty, \quad \forall i,j,k \in \mathcal{N}.
\]
Now we use the uniform continuity of $\kappa_{ij}$. By lemma \ref{L4.4}, we obtain
\begin{align}\label{D-2-0-5}
\lim_{t \to \infty} \kappa_{ik}(t) = \lim_{t \to \infty} \kappa_{jk} (t) = 0.
\end{align}
On the other hand, we recall the result of \eqref{D-1-4}:
\[
\dot{R}_{ij}=\frac{1}{2}(\dot{h}_{ij}+\dot{h}_{ji})= \frac{1}{N}\sum_{k=1}^N\Big((\kappa_{ik}-\kappa_{jk})(R_{jk}-R_{ik})+(R_{ik}\kappa_{ik}+R_{jk}\kappa_{jk})(1-R_{ij}) \Big).
\]
From uniform boundedness of $\kappa_{ij}$ and $R_{ij}$, we have uniform boundedness of $\dot{R}_{ij}$. Combining these all together with uniform boundedness of $\dot{\kappa}_{ij}$, we obtain uniform boundedness of
\[
 \ddot{\kappa}_{ij} =  -\gamma \dot{\kappa}_{ij} - 2 \mu \dot{R}_{ij},
\]
which leads to uniform continuity of $\dot{\kappa}_{ij}$. As integration of $\dot{\kappa}_{ij}$ is finite from \eqref{D-2-0-5}:
\[
\int_{0}^{\infty} \dot{\kappa}_{ij}(s)ds = -\kappa_{ij}^0,
\]
again from Lemma \ref{L4.4}, we can conclude that
\begin{align}\label{D-2-0-6}
\lim_{t \to \infty}\dot{\kappa}_{ik}(t) = \lim_{t \to \infty}\dot{\kappa}_{jk}(t) = 0.
\end{align}
Therefore, taking limit $t \to \infty$ to the dynamics
\[
\dot{\kappa}_{ij} = -\gamma \kappa_{ij} + \mu \| z_i - z_j \|^2,
\]
with \eqref{D-2-0-5} and \eqref{D-2-0-6} yield the desired result
\[
\lim_{t \to \infty}\|z_i(t) - z_j(t) \| = 0.
\]
\qed

\subsection{Hebbian coupling law}\label{sec:4.2}
Consider system \eqref{D-0-2} with the Hebbian coupling law $\eqref{D-0-3}_2$:
\begin{align*}
\begin{cases}
\displaystyle \dot{z}_j = \frac{1}{N}\displaystyle\sum_{k=1}^N \kappa_{jk} (z_k - R_{jk}z_j ), \quad t > 0,  \vspace{0.2cm} \\
\displaystyle \dot{\kappa}_{jk} = -\gamma \kappa_{jk} + \mu \left(1-\displaystyle\frac{\|z_j-z_k\|^2}{2}\right), \quad j, k \in {\mathcal N}, \vspace{0.2cm} \\
\displaystyle  (z_j, \kappa_{jk})(0) =(z_j^0, \kappa_{jk}^0) \in \bbh\bbs^{d} \times \bbr_+.
\end{cases}
\end{align*}
For a given system parameters $\gamma$ and $\mu$,  we choose positive constants $\kappa_m$ and $\kappa_M$ such that 
\begin{equation}\label{C-2-2}
\frac{1}{2}\kappa_M < \kappa_m \leq \frac{2\mu}{\gamma}\left(1-\frac{\kappa_m}{\kappa_M}\right).
\end{equation}
\begin{proposition} \label{P4.1}
For positive constants $\kappa_m$ and $\kappa_M$ satisfying \eqref{C-2-2}, suppose initial data satisfy   
\begin{equation}\label{C-2-3}
{\mathcal D}(Z^0) < \frac{2 \kappa_m}{\kappa_M} -1, \quad \kappa_m < \min_{i,j} \kappa^0_{ij},
\end{equation}
and let $(Z,K)$ be a solution to system \eqref{C-2-0} satisfying a priori assumption:
\begin{equation} \label{C-2-3-1}
\sup_{0 \leq t < \infty} \max_{i,j} \kappa_{ij}(t) \leq \kappa_M.
\end{equation}
Then,  there exist positive constants $D_0$ and $D_1$ such that
\[
{\mathcal D}(Z(t)) \leq D_0 e^{-D_1 t}, \quad t>0.
\]
\end{proposition}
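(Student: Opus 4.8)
The plan is to run a continuity (bootstrap) argument that propagates two bounds simultaneously along the flow: a \emph{lower} bound $\kappa_{ij}(t)>\kappa_m$ on the coupling gains and the \emph{diameter} bound ${\mathcal D}(Z(t))<\frac{2\kappa_m}{\kappa_M}-1$, both of which hold at $t=0$ by \eqref{C-2-3}; once both are propagated, the estimate for $\dot{{\mathcal D}}$ that they produce is a linear decay inequality and Gr\"onwall finishes the proof. Write ${\mathcal D}_{ij}(Z)=1-R_{ij}$ with $R_{ij}=\mathrm{Re}\langle z_i,z_j\rangle$ as in Section \ref{sec:4.1}, and observe that the identity \eqref{D-1-4} for $\dot h_{ij}+\dot h_{ji}$ uses only the $z$-dynamics of \eqref{C-2-0}, so
\[
\dot{{\mathcal D}}_{ij}=-\dot R_{ij}=\frac1N\sum_{k=1}^N\Big[(\kappa_{ik}-\kappa_{jk})(R_{ik}-R_{jk})-(\kappa_{ik}R_{ik}+\kappa_{jk}R_{jk})(1-R_{ij})\Big].
\]
Let $T^\ast$ be the supremum of times $T$ such that $\kappa_{ij}(t)>\kappa_m$ for all $i,j$ and ${\mathcal D}(Z(t))<\frac{2\kappa_m}{\kappa_M}-1$ on $[0,T)$; by continuity and \eqref{C-2-3}, $T^\ast>0$.

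On $[0,T^\ast)$ I would first control the coupling gains. Since $\dot\kappa_{ij}=-\gamma\kappa_{ij}+\mu\big(1-{\mathcal D}_{ij}(Z)\big)$ and ${\mathcal D}_{ij}(Z)\le{\mathcal D}(Z)<\frac{2\kappa_m}{\kappa_M}-1$, we get $\mu\big(1-{\mathcal D}_{ij}(Z)\big)>2\mu\big(1-\frac{\kappa_m}{\kappa_M}\big)\ge\gamma\kappa_m$ by the second inequality in \eqref{C-2-2}, hence $\dot\kappa_{ij}>\gamma(\kappa_m-\kappa_{ij})$; integrating, $\kappa_{ij}(t)-\kappa_m>(\kappa_{ij}^0-\kappa_m)e^{-\gamma t}>0$ on $[0,T^\ast)$. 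Next, to control the diameter I would evaluate at a pair $(i,j)$ realizing ${\mathcal D}_{ij}(Z(t))={\mathcal D}(Z(t))$: maximality forces $R_{ik},R_{jk}\ge R_{ij}$ for all $k$, and ${\mathcal D}(Z)<1$ forces $R_{ij}=1-{\mathcal D}(Z)>0$, so all these quantities are positive. Using $\kappa_m<\kappa_{ik},\kappa_{jk}\le\kappa_M$ (the previous step together with the a priori bound \eqref{C-2-3-1}), one has $|\kappa_{ik}-\kappa_{jk}|\le\kappa_M-\kappa_m$, $|R_{ik}-R_{jk}|=|{\mathcal D}_{jk}-{\mathcal D}_{ik}|\le{\mathcal D}(Z)$, and $\kappa_{ik}R_{ik}+\kappa_{jk}R_{jk}>2\kappa_m R_{ij}=2\kappa_m\big(1-{\mathcal D}(Z)\big)$. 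Feeding these into the formula above (with $\dot{{\mathcal D}}(Z)$ understood as the Dini/weak derivative, as in the remark after \eqref{C-2-0}), for a.e. $t$
\[
\dot{{\mathcal D}}(Z)\le{\mathcal D}(Z)\Big[(\kappa_M-\kappa_m)-2\kappa_m\big(1-{\mathcal D}(Z)\big)\Big].
\]
Since ${\mathcal D}(Z)<\frac{2\kappa_m}{\kappa_M}-1$ yields $1-{\mathcal D}(Z)>\frac{2(\kappa_M-\kappa_m)}{\kappa_M}$, the bracket is bounded above by $\frac{(\kappa_M-\kappa_m)(\kappa_M-4\kappa_m)}{\kappa_M}=:-D_1$, and $D_1>0$ because \eqref{C-2-2} forces $\kappa_m<\kappa_M<2\kappa_m$. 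Thus $\dot{{\mathcal D}}(Z)\le-D_1{\mathcal D}(Z)$ and ${\mathcal D}(Z(t))\le{\mathcal D}(Z^0)e^{-D_1t}<\frac{2\kappa_m}{\kappa_M}-1$ on $[0,T^\ast)$.

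Finally I would close the bootstrap: the two propagated bounds hold with strict inequality on $[0,T^\ast)$ — $\kappa_{ij}(t)\ge\kappa_m+(\kappa_{ij}^0-\kappa_m)e^{-\gamma t}>\kappa_m$ and ${\mathcal D}(Z(t))\le{\mathcal D}(Z^0)<\frac{2\kappa_m}{\kappa_M}-1$ — hence also at $t=T^\ast$ by continuity of $(Z,K)$, so if $T^\ast<\infty$ both inequalities would persist on $[0,T^\ast+\delta)$ for some $\delta>0$, contradicting maximality. Therefore $T^\ast=\infty$, the inequality $\dot{{\mathcal D}}(Z)\le-D_1{\mathcal D}(Z)$ holds for all $t>0$, and ${\mathcal D}(Z(t))\le D_0e^{-D_1t}$ with $D_0={\mathcal D}(Z^0)$ and $D_1=\frac{(\kappa_M-\kappa_m)(4\kappa_m-\kappa_M)}{\kappa_M}$. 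The main obstacle is the estimate of the cross term $\frac1N\sum_k(\kappa_{ik}-\kappa_{jk})(R_{ik}-R_{jk})$ in $\dot{{\mathcal D}}_{ij}$: it cannot be absorbed using only $0<\kappa_{ij}\le\kappa_M$, so one must feed back the uniform lower bound $\kappa_{ij}>\kappa_m$ in order to replace the prefactor $\kappa_M$ by $\kappa_M-\kappa_m$; it is precisely the scaling condition \eqref{C-2-2} (notably $\tfrac12\kappa_M<\kappa_m$) together with ${\mathcal D}(Z^0)<\frac{2\kappa_m}{\kappa_M}-1$ that makes the two feedback inequalities mutually compatible and renders the bracket strictly negative.
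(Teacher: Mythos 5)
Your proof is correct and follows essentially the same route as the paper's: the same identity for $\dot{\mathcal D}_{ij}$ coming from \eqref{D-1-4}, the same term-by-term bounds using $\kappa_m<\kappa_{ij}\le\kappa_M$ together with maximality of the pair $(i,j)$, and the same use of \eqref{C-2-2} and the comparison principle to propagate the lower bound $\kappa_{ij}>\kappa_m$ and close the continuity argument. The only cosmetic differences are that you bootstrap the diameter bound alongside the gain bound and linearize the resulting Riccati inequality on the invariant region to read off an explicit rate $D_1=\frac{(\kappa_M-\kappa_m)(4\kappa_m-\kappa_M)}{\kappa_M}$, whereas the paper propagates only the gain bound and quotes the explicit solution of the Riccati comparison ODE.
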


\begin{proof}~By $\eqref{C-2-3}_2$ and continuity of solution, the set
\[ \mathcal{{\tilde T}}_{ij} := \{ \tau \in (0, \infty) ~ : ~ \kappa_{ij}(t) > \kappa_m, ~ \forall t \in (0, \tau) \} \not = \emptyset. \]
Now, we set
\[
\tilde{\mathcal{T}} := \bigcap_{i,j} \mathcal{{\tilde T}}_{ij} = \Big \{ \tau \in (0, \infty) ~ : ~ \min_{i,j}\kappa_{ij}(t) > \kappa_m, ~ \forall t \in (0,\tau) \Big \},
\quad {\tilde t}^{*} := \sup  \mathcal{{\tilde T}}.
\]
In the course of proof of Lemma \ref{L4.2}, we have
\begin{align*}
\dot{h}_{ij}+\dot{h}_{ji} =& \frac{2}{N}\sum_{k=1}^N\big((\kappa_{ik}-\kappa_{jk})(R_{jk}-R_{ik})+(R_{ik}\kappa_{ik}+R_{jk}\kappa_{jk})(1-R_{ij}) \big).
\end{align*}
In the sequel, for notational simplicity, we set 
\[  {\mathcal D}_{ij} := {\mathcal D}_{ij}(Z), \quad  {\mathcal D} := {\mathcal D}(Z), \quad {\mathcal D}^0 := {\mathcal D}(Z^0). \]
This and defining relation \eqref{C-2-3} of ${\mathcal D}$ imply
\begin{align*}
\begin{aligned}
\dot{{\mathcal D}}_{ij} &= \frac{1}{2}\frac{d}{dt}(2-h_{ij}-h_{ji})=-\frac{1}{2}(\dot{h}_{ij}+\dot{h}_{ji})\\
&= -\frac{1}{N} \sum_{k=1}^N \big( (\kappa_{ik}+\kappa_{jk}) {\mathcal D}_{ij}  \big)
+ \frac{1}{N} \sum_{k=1}^N  \big( \kappa_{ik} {\mathcal D}_{ik} + \kappa_{jk} {\mathcal D}_{jk} \big) {\mathcal D}_{ij} \\
&\hspace{0.2cm} + \frac{1}{N} \sum_{k=1}^N (\kappa_{ik} - \kappa_{jk} )({\mathcal D}_{jk}- {\mathcal D}_{ik}) \\
&=: {\mathcal J}_{11} + {\mathcal J}_{12}  + {\mathcal J}_{13}.
\end{aligned}
\end{align*}
From now on, we will regard $i$ and $j$ as a function of $t$.  For each $t$, we assume that $i,j$ are indices such that 
\[ {\mathcal D}= {\mathcal D}_{ij}. \]
It follows from \eqref{C-2-3} and the definition of ${\tilde t}^{*}$ that
\begin{equation}\label{D-2-1}
{ \mathcal{J}_{11} \geq 2\kappa_m {\mathcal D} },  \qquad  \mathcal{J}_{12} \leq 2\kappa_M {\mathcal D}^2,
 \qquad 
 \mathcal{J}_{13} \leq 2(\kappa_M - \kappa_m) {\mathcal D},
 \quad t \in [0, {\tilde t}^{*}).
\end{equation}
This leads to a differential inequality:
\begin{equation}\label{D-2-2}
\dot{{\mathcal D}} < -2(2\kappa_m-\kappa_M) {\mathcal D} +2\kappa_M {\mathcal D}^2, \quad  t \in [0, {\tilde t}^{*}),
\end{equation}
where the factor $(2\kappa_m-\kappa_M)$ is positive from the first inequality of \eqref{C-2-3}. \newline

We apply the comparison principle to \eqref{D-2-2} to get 
\[
{\mathcal D}(t) \leq \frac{1}{\left(\frac{1}{{\mathcal D}^0}-\frac{2\kappa_m - \kappa_M}{\kappa_M}\right)e^{(2\kappa_m-\kappa_M)t}
+\frac{2\kappa_m-\kappa_M}{\kappa_M}}, \quad  t \in [0,{\tilde t}^{*}),
\]
i.e. exponential decay occurs in $t \in [0, {\tilde t}^{*})$. Hence the proof is done if we verify ${\tilde t}^{*}=\infty$.  \newline

Now we use the initial condition ${\mathcal D}^0 < \frac{2 \kappa_m}{\kappa_M} -1$ and \eqref{D-2-2} to obtain
\[
-2(2\kappa_m-\kappa_M) {\mathcal D} +2\kappa_M {\mathcal D}^2 < 0, \quad \text{whenever} \quad {\mathcal D} \in \left(0, \frac{2\kappa_m}{\kappa_M}-1\right).
\]
Therefore, ${\mathcal D}$ is decreasing in $ t \in [0, {\tilde t}^{*})$. Hence we have 
\[
\dot\kappa_{ij} = -\gamma\kappa_{ij} + \mu - \mu {\mathcal D}_{ij} \geq -\gamma\kappa_{ij} + \mu - \mu {\mathcal D} \geq -\gamma \kappa_{ij} + 2\mu \frac{\kappa_M-\kappa_m}{\kappa_M}, \quad  t \in [0, {\tilde t}^{*}).
\]
By comparison principle, one has
\begin{align*}
\kappa_{ij} \geq \left( \kappa_{ij}^0 - \frac{2\mu}{\gamma}\frac{\kappa_M-\kappa_m}{\kappa_M}\right)e^{-\gamma t} + \frac{2\mu}{\gamma}\left(\frac{\kappa_M-\kappa_m}{\kappa_M}\right)
 \geq \left( \kappa_{ij}^0 - \kappa_m\right)e^{-\gamma t} + \kappa_m, \quad  t \in [0, {\tilde t}^{*}),
 \end{align*}
where the last equality holds from the second inequality of \eqref{C-2-2}. By definition of $\tilde{\mathcal{T}}$, there exist indices $k$ and $l$ such that 
\[ \kappa_m = \lim_{t \nearrow {\tilde t}^{*}} \kappa_{kl}. \]
Therefore if ${\tilde t}^{*}$ is finite, one has
\begin{align*}
\kappa_m = \lim_{t \nearrow \tilde{t}^{*}} \kappa_{kl}
\geq \lim_{t \nearrow \tilde{t}^*} \left( \kappa_{kl}^0 - \kappa_m\right)e^{-\gamma t} + \kappa_m
= \left( \kappa_{kl}^0 - \kappa_m\right)e^{-\gamma \tilde{t}^*} + \kappa_m > \kappa_m,
 \end{align*}
which is contradictory, and we obtain our desired result.
 \end{proof}
 
 \vspace{0.5cm}
 
 Now we are ready to provide a proof of our second main result. \newline
\begin{proof}[{\bf Proof of Theorem \ref{T3.2}}] Recall the conditions \eqref{C-2-4}: 
\begin{equation} \label{D-3}
0 < \kappa < \min \left\{  \frac{\mu}{\gamma},~\min_{i,j}\kappa_{ij}^0 \right \}, \quad \max \left\{  \max_{i,j} \kappa^0_{ij}, ~\frac{\mu}{\gamma} \right \} \leq \frac{2\mu \kappa}{2\mu-\gamma \kappa}, 
\quad {\mathcal D}(Z^0) < 1-\frac{\gamma}{\mu} \kappa.
\end{equation}
Now, it suffices to show that the above conditions satisfy \eqref{C-2-3} and \eqref{C-2-3-1}:
\begin{equation} \label{D-4}
\kappa_m < \min_{i,j} \kappa^0_{ij}, \quad {\mathcal D}(Z^0) < \frac{2 \kappa_m}{\kappa_M} -1, \quad \quad \sup_{0 \leq t < \infty} \max_{i,j} \kappa_{ij}(t) \leq \kappa_M.
\end{equation}
We first figure out $\kappa_m$ and $\kappa_M$ satisfying \eqref{C-2-2}:
\begin{equation}\label{D-4-1}
\frac{1}{2}\kappa_M < \kappa_m, \quad \kappa_m \leq \frac{2\mu}{\gamma}\left(1-\frac{\kappa_m}{\kappa_M}\right).
\end{equation}
Since $\kappa$ is a candidate of $\kappa_m$, we assume that $\kappa_m$ satisfies \eqref{D-3}. Rewriting $\eqref{D-4-1}_2$, we have 
\begin{align}\label{D-4-2}
 \kappa_m \leq \frac{2\mu}{\gamma}\left(1-\frac{\kappa_m}{\kappa_M}\right)
 \Longleftrightarrow
  \frac{2\mu \kappa_m}{2\mu-\gamma \kappa_m} \leq \kappa_M.
\end{align}
Optimizing $\kappa_M$ under \eqref{D-4-2}, we have
\[
\frac{2\mu \kappa_m}{2\mu-\gamma \kappa_m} = \kappa_M.
\]
Therefore, as we set
\begin{equation} \label{D-5}
\kappa_m= \kappa \quad \mbox{and} \quad \kappa_M= \frac{2\mu \kappa}{2\mu-\gamma \kappa},
\end{equation}
$\eqref{D-4-1}_2$ is achieved. In particular, as $\kappa$ satisfies $\eqref{D-3}_1$, we have
\[
\frac{1}{2}\kappa_M < \kappa_m
\Longleftrightarrow
\kappa < \frac{\mu}{\gamma},
\]
which is true from $\eqref{D-3}_1$. Hence $\eqref{D-4-1}_1$ is achieved. \vspace{0.2cm}\\
\noindent $\bullet$~(Verification of \eqref{C-2-3}):~Clearly, $\eqref{D-3}_1$ implies $\eqref{D-4}_1$. By the setting \eqref{D-5}, one has 
\[  \frac{2 \kappa_m}{\kappa_M} -1 = \frac{2\kappa}{\frac{2\mu \kappa}{2\mu - \gamma k}} - 1 = 1- \frac{\gamma \kappa}{\mu}. \]
Hence $\eqref{D-3}_3$ is equivalent to $\eqref{D-4}_2$ under the setting \eqref{D-5}. 

\vspace{0.2cm}

\noindent $\bullet$~(Verification of \eqref{C-2-3-1}):~Note that 
\begin{align*}
\begin{aligned}
\kappa_{ij}(t) &= e^{-\gamma t}\left(\kappa_{ij}^0 + \int_0^t \mu e^{\gamma s}R_{ij} ds \right) 
\leq  e^{-\gamma t}\left(\kappa^0_{ij} + \int_0^t \mu e^{\gamma s} ds \right)  \\
&\leq \left(\kappa^0_{ij} -\frac{\mu}{\gamma} \right)e^{-\gamma t} + \frac{\mu}{\gamma} \leq \max \left\{ \max_{i,j} \kappa^0_{ij}, ~\frac{\mu}{\gamma} \right\} \leq \kappa_M.
\end{aligned}
\end{align*}
Finally, we can apply Proposition \ref{P4.1} to derive the desired estimate. 
\end{proof}

\section{Collective dynamics under asymptotic SL coupling gain pair}\label{sec:5}
\setcounter{equation}{0}
In this section, we study the emergent dynamics of the system \eqref{A-0} for a general coupling gain pair $(\kappa_{ij}, \lambda_{ij})$:
\[ \kappa_{ij}(t) > 0, \quad \lambda_{ij}(t) \in \bbr, \quad t \geq 0, \quad i, j \in {\mathcal N}. \]

\vspace{0.2cm}

Recall that our governing system is given as follows:
\begin{equation}\label{E-0}
\begin{cases}
\displaystyle \dot{z}_j= \frac{1}{N}\sum_{k=1}^N\kappa_{jk}( z_k-R_{jk} z_j)+\frac{1}{N}\sum_{k=1}^N\tilde{\lambda}_{jk}(\langle z_j, z_k\rangle-\langle z_k, z_j\rangle)z_j, \quad t > 0, \vspace{0.2cm} \\
\displaystyle \dot{\kappa}_{ij}=-\gamma\kappa_{ij}+\mu\Gamma_0(z_i,z_j),\quad \dot{\tilde{\lambda}}_{ij}=-\gamma\tilde{\lambda}_{ij}+\mu\tilde{\Gamma}(z_i, z_j), \vspace{0.2cm} \\
\displaystyle (z_j, \kappa_{ij}, {\tilde \lambda}_{ij})(0) =(z_j^0, \kappa_{ij}^0, {\tilde \lambda}_{ij}^0) \in \bbh\bbs^{d} \times \bbr_+ \times \bbr, \quad i, j\in \mathcal{N}.
\end{cases}
\end{equation}
Before we proceed analysis on the emergent behavior of \eqref{E-0}, we observe that  the ratio of $\kappa_{ij}$ and $\tilde{\lambda}_{ij}$ is bounded below by the ratio of $\Gamma_0$ and $\tilde{\Gamma}$ in the following lemma. 
\begin{lemma}\label{L5.1}
Suppose that coupling gains and coupling law satisfy
\[
c\kappa^0_{ij} \geq \tilde{\lambda}^0_{ij} \quad \text{and} \quad c \Gamma_0(z_i,z_j) \geq \tilde{\Gamma}(z_i,z_j), \quad i, j \in {\mathcal N}
\]
for some constant $c>0$, and let $(Z, K, \tilde{\Lambda})$ be a solution to \eqref{E-1-7}. Then, one has 
\[
c \kappa_{ij}(t) \geq \tilde{\lambda}_{ij}(t), \quad \forall t>0. 
\]
\end{lemma}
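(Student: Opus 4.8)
The plan is to prove the bound $c\kappa_{ij}(t) \geq \tilde\lambda_{ij}(t)$ via a Duhamel/integrating-factor argument applied to the auxiliary quantity $\phi_{ij}(t) := c\kappa_{ij}(t) - \tilde\lambda_{ij}(t)$. Since the paper has already recorded (see \eqref{N-B-1} and \eqref{E-1-6}) that both $\kappa_{ij}$ and $\tilde\lambda_{ij}$ satisfy linear ODEs with the same damping rate $\gamma$, namely
\begin{equation*}
\dot\kappa_{ij} = -\gamma\kappa_{ij} + \mu\Gamma_0(z_i,z_j), \qquad \dot{\tilde\lambda}_{ij} = -\gamma\tilde\lambda_{ij} + \mu\tilde\Gamma(z_i,z_j),
\end{equation*}
the difference $\phi_{ij}$ obeys the clean scalar equation $\dot\phi_{ij} = -\gamma\phi_{ij} + \mu\big(c\Gamma_0(z_i,z_j) - \tilde\Gamma(z_i,z_j)\big)$. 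This is the key structural observation: because the two gain equations share $\gamma$ and $\mu$, the forcing term of $\phi_{ij}$ is exactly $\mu$ times the quantity assumed nonnegative.

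First I would write $\phi_{ij}$ explicitly through Duhamel's principle, exactly as in Lemma \ref{L2.1}:
\begin{equation*}
\phi_{ij}(t) = e^{-\gamma t}\Big(\phi_{ij}^0 + \mu\int_0^t e^{\gamma s}\big(c\Gamma_0(z_i(s),z_j(s)) - \tilde\Gamma(z_i(s),z_j(s))\big)\,ds\Big).
\end{equation*}
Then I would invoke the two hypotheses: $\phi_{ij}^0 = c\kappa_{ij}^0 - \tilde\lambda_{ij}^0 \geq 0$ by the assumption $c\kappa_{ij}^0 \geq \tilde\lambda_{ij}^0$, and the integrand is pointwise nonnegative by the assumption $c\Gamma_0(z_i,z_j) \geq \tilde\Gamma(z_i,z_j)$ together with $\mu > 0$. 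Since $e^{-\gamma t} > 0$, it follows immediately that $\phi_{ij}(t) \geq 0$ for all $t > 0$, which is the claim. One should also remark that global existence of the solution on $[0,\infty)$, needed to make the integral meaningful, follows from the Cauchy–Lipschitz argument already used (e.g.\ in Lemma \ref{L2.4}) since the right-hand side of \eqref{E-1-7} is locally Lipschitz and the state stays on the compact manifold $(\bbh\bbs^d)^N$ while the gains grow at most exponentially.

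Honestly, there is no real obstacle here — the lemma is essentially a one-line consequence of linearity in the gain variables and the matched coefficients $\gamma_0 = \gamma_1 = \gamma$, $\mu_0 = \mu_1 = \mu$ imposed in \eqref{E-1-4}. The only point requiring a modicum of care is making sure the statement is applied to system \eqref{E-1-7} (equivalently \eqref{E-0}) rather than to the original \eqref{A-1}, so that $\tilde\lambda_{ij}$ and $\tilde\Gamma$ genuinely satisfy \eqref{E-1-6} with the single pair $(\gamma,\mu)$; once that reduction is in place the integrating-factor computation closes the proof. An equivalent and perhaps cleaner phrasing avoids the explicit integral: set $\psi_{ij}(t) := e^{\gamma t}\phi_{ij}(t)$, note $\dot\psi_{ij}(t) = \mu e^{\gamma t}\big(c\Gamma_0 - \tilde\Gamma\big) \geq 0$, so $\psi_{ij}$ is nondecreasing with $\psi_{ij}(0) = \phi_{ij}^0 \geq 0$, hence $\psi_{ij}(t) \geq 0$ and therefore $\phi_{ij}(t) \geq 0$.
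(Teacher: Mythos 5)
Your proposal is correct and follows essentially the same route as the paper: the paper likewise forms the difference $c\kappa_{ij}-\tilde{\lambda}_{ij}$, observes it satisfies $\frac{d}{dt}(c\kappa_{ij}-\tilde{\lambda}_{ij}) = -\gamma(c\kappa_{ij}-\tilde{\lambda}_{ij}) + \mu(c\Gamma_0-\tilde{\Gamma}) \geq -\gamma(c\kappa_{ij}-\tilde{\lambda}_{ij})$, and concludes $c\kappa_{ij}(t)-\tilde{\lambda}_{ij}(t) \geq e^{-\gamma t}(c\kappa_{ij}^0-\tilde{\lambda}_{ij}^0) \geq 0$. Your Duhamel and integrating-factor phrasings are just equivalent write-ups of the same comparison argument.
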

\begin{proof} We use \eqref{E-1-7} to see
\[
\frac{d}{dt}\big(c \kappa_{ij} - \tilde{\lambda}_{ij} \big) = -\gamma\big(c \kappa_{ij} - \tilde{\lambda}_{ij} \big)+\mu\big( c \Gamma_0(z_i,z_j) - \tilde{\Gamma}(z_i,z_j) \big) \geq -\gamma\big(c \kappa_{ij} - \tilde{\lambda}_{ij} \big).
\]
Therefore, we have
\[
 c \kappa_{ij}(t)-\tilde{\lambda}_{ij}(t) \geq e^{-\gamma t}\big(c\kappa^0_{ij} -\tilde{\lambda}^0_{ij} \big) \geq 0.
\]
and this is the desired result.
\end{proof}
Parallel to the presentation in Section \ref{sec:4}, in what follows, we consider two type of coupling laws for $\Gamma_0$ as in Section \ref{sec:4}:
\[ \Gamma_0(z, {\tilde z}): \|z - {\tilde z} \|^2, \quad 1-\frac{1}{2} \|z - {\tilde z} \|^2.  \]
\subsection{Anti-Hebbian coupling law} \label{sec:5.1}
In this subsection, we study emergent dynamics of \eqref{E-1-7} with the anti-Hebbian coupling law:  
\begin{equation}\label{E-2}
\begin{cases}
\displaystyle \dot{z}_j= \displaystyle\frac{1}{N}\sum_{k=1}^N\kappa_{jk} \Big( z_k-R_{jk} z_j \Big) +\frac{1}{N}\sum_{k=1}^N\tilde{\lambda}_{jk} \Big(\langle z_j, z_k\rangle-\langle z_k, z_j\rangle \Big)z_j,  \quad t > 0, \vspace{0.2cm} \\
\displaystyle \dot{\kappa}_{jk}=-\gamma\kappa_{jk}+\mu\|z_j-z_k\|^2,\quad \dot{\tilde{\lambda}}_{jk}=-\gamma\tilde{\lambda}_{jk}+\mu\tilde{\Gamma}(z_j, z_k), \vspace{0.2cm} \\
\displaystyle (z_j, \kappa_{jk}, {\tilde \lambda}_{jk})(0) =(z_j^0, \kappa_{jk}^0, {\tilde \lambda}_{jk}^0) \in \bbh\bbs^d \times \bbr_+ \times \bbr, \quad  j, k\in \mathcal{N}.
\end{cases}
\end{equation}
As in Section \ref{sec:4}, we study the temporal evolution of the Lyapunov functional ${\mathcal L}_{ij}$ introduced in \eqref{C-1-3} . 
\begin{lemma}\label{L5.2}
Let $(Z, K, {\tilde \Lambda})$ be a solution to \eqref{E-2}. Then, the functional ${\mathcal L}_{ij}$ satisfies 
\begin{align}
\begin{aligned} \label{E-2-0}
\frac{d}{dt} {\mathcal{L}}_{ij} &= -\left(\frac{1-R_{ij}}{N}\right)\sum_{k=1}^N(\kappa_{ik}R_{ik}+\kappa_{jk}R_{jk})-\frac{\gamma}{2\mu N}\sum_{k=1}^N(\kappa_{ik}-\kappa_{jk})^2 \\
&\hspace{0.4cm} -\frac{2 I_{ij}}{N}\sum_{k=1}^N(\tilde{\lambda}_{ik}I_{ik}-\tilde{\lambda}_{jk}I_{jk}).
\end{aligned}
\end{align}
\end{lemma}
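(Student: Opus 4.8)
The plan is to mimic the proof of Lemma \ref{L4.2} step by step, tracking only the new contribution produced by the rotational term $\frac1N\sum_k \tilde\lambda_{jk}(\langle z_j,z_k\rangle-\langle z_k,z_j\rangle)z_j$ appearing in $\eqref{E-2}_1$. Since on $\bbh\bbs^d$ we have ${\mathcal L}_{ij}=1-R_{ij}+\frac{1}{4\mu N}\sum_k(\kappa_{ik}-\kappa_{jk})^2$ (cf. \eqref{D-1-0-0}), the starting point is the analogue of \eqref{D-1-3},
\[
\frac{d}{dt}{\mathcal L}_{ij}=-\tfrac12(\dot h_{ij}+\dot h_{ji})+\frac{1}{2\mu N}\sum_{k=1}^N(\kappa_{ik}-\kappa_{jk})(\dot\kappa_{ik}-\dot\kappa_{jk}),
\]
and the two summands on the right are handled separately.

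First I would dispose of the coupling-gain summand: the $\kappa$-equation in $\eqref{E-2}_2$ coincides with the one in $\eqref{D-1-2}$, so the computation \eqref{D-1-5} carries over verbatim and produces $-\frac{\gamma}{2\mu N}\sum_k(\kappa_{ik}-\kappa_{jk})^2+\frac1N\sum_k(\kappa_{ik}-\kappa_{jk})(R_{jk}-R_{ik})$. Next I would split $\dot z_j$ into its $\kappa$-part $\frac1N\sum_k\kappa_{jk}(z_k-R_{jk}z_j)$ and its rotational part $P_j:=\frac1N\sum_k\tilde\lambda_{jk}(h_{jk}-h_{kj})z_j$. The $\kappa$-part feeds into $\dot h_{ij}+\dot h_{ji}$ exactly as in the chain of equalities \eqref{D-1-4}, contributing $\frac2N\sum_k\big((\kappa_{ik}-\kappa_{jk})(R_{jk}-R_{ik})+(R_{ik}\kappa_{ik}+R_{jk}\kappa_{jk})(1-R_{ij})\big)$.

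The only genuinely new computation is the rotational one. Using $h_{jk}-h_{kj}=2\mathrm{i}\,I_{jk}$ I would write $P_j=\frac{2\mathrm{i}}{N}\big(\sum_k\tilde\lambda_{jk}I_{jk}\big)z_j$, and since $\dot h_{ij}+\dot h_{ji}=2\,\mathrm{Re}\big(\langle\dot z_i,z_j\rangle+\langle z_i,\dot z_j\rangle\big)$ it remains to compute $\langle z_i,P_j\rangle=\frac{2\mathrm{i}}{N}\big(\sum_k\tilde\lambda_{jk}I_{jk}\big)h_{ij}$ and $\langle P_i,z_j\rangle=-\frac{2\mathrm{i}}{N}\big(\sum_k\tilde\lambda_{ik}I_{ik}\big)h_{ij}$; here the sign flip is precisely the conjugate-linearity of $\langle\cdot,\cdot\rangle$ in its first slot together with the reality of $\tilde\lambda_{ik}$ and $I_{ik}$. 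Summing, taking $2\,\mathrm{Re}(\cdot)$, and using $\mathrm{Re}(\mathrm{i}\,h_{ij})=-I_{ij}$ gives the rotational contribution $-\frac{4I_{ij}}{N}\sum_k(\tilde\lambda_{jk}I_{jk}-\tilde\lambda_{ik}I_{ik})$ to $\dot h_{ij}+\dot h_{ji}$.

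Finally I would assemble the three pieces: in the identity above, the two $\frac1N\sum_k(\kappa_{ik}-\kappa_{jk})(R_{jk}-R_{ik})$ terms cancel; what survives of the $\kappa$-part is the aggregation term $-\frac1N\sum_k(R_{ik}\kappa_{ik}+R_{jk}\kappa_{jk})(1-R_{ij})$ together with the dissipation term $-\frac{\gamma}{2\mu N}\sum_k(\kappa_{ik}-\kappa_{jk})^2$, and the rotational part yields $-\tfrac12\cdot\big(-\frac{4I_{ij}}{N}\sum_k(\tilde\lambda_{jk}I_{jk}-\tilde\lambda_{ik}I_{ik})\big)=-\frac{2I_{ij}}{N}\sum_k(\tilde\lambda_{ik}I_{ik}-\tilde\lambda_{jk}I_{jk})$, which is exactly \eqref{E-2-0}. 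The main obstacle, and really the only place demanding care, is this last bookkeeping of complex conjugation in the rotational term (the conjugate-linearity sign in $\langle P_i,z_j\rangle$ and the identity $\mathrm{Re}(\mathrm{i}\,h_{ij})=-I_{ij}$); everything else is a word-for-word repetition of the estimates already carried out in Section \ref{sec:4}.
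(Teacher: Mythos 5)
Your proposal is correct and follows essentially the same route as the paper: the same splitting of $\dot{\mathcal L}_{ij}$ into $-\tfrac12(\dot h_{ij}+\dot h_{ji})$ plus the coupling-gain summand, the verbatim reuse of \eqref{D-1-4}--\eqref{D-1-5}, and the same sign bookkeeping $\mathrm{Re}(\mathrm{i}\,h_{ij})=-I_{ij}$ for the rotational contribution (your factored form $P_j=\frac{2\mathrm{i}}{N}\big(\sum_k\tilde\lambda_{jk}I_{jk}\big)z_j$ is just a tidier packaging of the paper's term-by-term real-part computation in \eqref{E-2-2}). The cancellation of the two $(\kappa_{ik}-\kappa_{jk})(R_{jk}-R_{ik})$ terms and the final assembly match \eqref{E-2-0} exactly.
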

%\begin{remark}
%In the first proposition, we didn't assume specific form of $\kappa_1^{ij}$. In other word, we may control $\kappa_1^{ij}$ as we aim. In particular, we set
%\[
%\tilde{\kappa}^{ij} \equiv \kappa, \quad i,j=1,\cdots,N ,
%\]
%as in the second proposition.
%\end{remark}
\begin{proof} By definition of ${\mathcal L}_{ij}$, one has
\begin{equation} \label{E-2-1}
\dot{\mathcal{L}}_{ij} = -\frac{1}{2}(\dot{h}_{ij}+\dot{h}_{ji}) + \frac{1}{2\mu N}\sum_{k=1}^N (\kappa_{ik}-\kappa_{jk})(\dot{\kappa}_{ik}-\dot{\kappa}_{jk}).
\end{equation}
%We write 
%\begin{align}
%\dot{z}_{j} = \frac{1}{N}\sum_{k=1}^{N}\big(\underbrace{\kappa_0^{jk}(z_k-R_{kj}z_j)}_{=:\mathcal{I}_0^{ijk}}+\underbrace{\tilde{\kappa}^{jk}(h_{jk}-h_{kj})z_j}_{=:\mathcal{I}_1^{ijk}}\big)=\frac{1}{N}\sum_{k=1}^{N}(\mathcal{I}_0^{ijk}+\mathcal{I}_1^{ijk}).
%\end{align}
%Using the proof of Lemma 3.5, we have
Next, we estimate two terms in the R.H.S. of \eqref{E-2-1} separately. \newline

\noindent $\bullet$~(Estimate of the first term in \eqref{E-2-1}): By straightforward calculations, one has 
\begin{align}
\begin{aligned} \label{E-2-2}
&\dot{h}_{ij}+\dot{h}_{ji} = \langle z_i, \dot{z}_j \rangle + \langle \dot{z}_i, z_j \rangle + \langle z_j, \dot{z}_i \rangle + \langle \dot{z}_j, z_i \rangle = \frac{2}{N}\sum_{k=1}^N\mathrm{Re}( \langle z_i, \dot{z}_j \rangle + \langle \dot{z}_i, z_j \rangle )\\
&= \frac{2}{N}\sum_{k=1}^N\mathrm{Re}\Big[  \kappa_{jk}(h_{ik}-R_{jk}h_{ij}  )+\kappa_{ik}(h_{kj}-R_{ik}h_{ij}  )
+ \tilde{\lambda}_{jk}(h_{jk}-h_{kj})h_{ij} + \tilde{\lambda}_{ik}(h_{ki}-h_{ik})h_{ij}) \Big]\\
& = \frac{2}{N}\sum_{k=1}^N \Big[  \kappa_{jk}(R_{ik}-R_{jk}R_{ij}  )+\kappa_{ik}(R_{jk}-R_{ik}R_{ji}  )\big)
-2 \tilde{\lambda}_{jk} I_{jk}I_{ij} -2 \tilde{\lambda}_{ik}I_{ki}I_{ij}) \Big].
\end{aligned}
\end{align}

\vspace{0.2cm}

\noindent $\bullet$~(Estimate of the second term in \eqref{E-2-1}): Again, one has 
\begin{align}
\begin{aligned} \label{E-2-3}
&\frac{1}{2\mu N}\sum_{k=1}^N(\kappa_{ik}-\kappa_{jk})(\dot{\kappa}_{ik}-\dot{\kappa}_{jk}) \\
& \hspace{0.5cm} = \frac{1}{2\mu N}\sum_{k=1}^N (\kappa_{ik}-\kappa_{jk})(-\gamma\kappa_{ik}-\mu h_{ik}-\mu h_{ki} +\gamma\kappa_{jk}+\mu h_{jk} + \mu h_{kj}) \\
& \hspace{0.5cm}  = -\frac{\gamma}{2\mu N}\sum_{k=1}^N(\kappa_{ik}-\kappa_{jk})^2-\frac{1}{2N}\sum_{k=1}^N(\kappa_{ik}-\kappa_{jk})(h_{ik}+h_{ki}-h_{jk}-h_{kj}) \\
& \hspace{0.5cm}=  -\frac{\gamma}{2\mu N}\sum_{k=1}^N(\kappa_{ik}-\kappa_{jk})^2-\frac{1}{N}\sum_{k=1}^N(\kappa_{ik}-\kappa_{jk})(R_{ik}-R_{jk}).
\end{aligned}
\end{align}
In \eqref{E-2-1}, we combine \eqref{E-2-2} and \eqref{E-2-3} to obtain the desired estimate. 
\end{proof}

\begin{lemma}\label{L5.3}
Suppose that the following relations hold:
\begin{align}\label{E-2-4}
\tilde{\lambda}_{ij}^0={\tilde \lambda}^0, \quad ~i, j \in \mathcal{N} \quad \text{and} \quad \tilde{\Gamma}(t) \equiv 0, \quad\forall~  t>0
\end{align}
for some constant $\tilde\lambda^0$, and let $(Z, K, {\tilde \Lambda})$ be a solution to \eqref{E-2}. Then, the following assertions hold:
\begin{enumerate}
\item
There exists a function $\tilde\lambda = {\tilde \lambda}(\cdot)$ such that
\[
\tilde{\lambda}_{ij}(t) = {\tilde \lambda} (t), \quad t > 0, \quad  \forall~ i,j \in \mathcal{N}.
\]
\item
The functional ${\mathcal L}_{ij}$ satisfies
\[
 \dot{\mathcal{L}}_{ij} \leq -\left(\frac{1-R_{ij}}{N}\right)\sum_{k=1}^N \Big(\kappa_{ik}R_{ik}+\kappa_{jk}R_{jk}-4|\tilde \lambda| \Big)-\frac{\gamma}{2\mu N}\sum_{k=1}^N(\kappa_{ik}-\kappa_{jk})^2, \quad t > 0.
\]
\end{enumerate}
\end{lemma}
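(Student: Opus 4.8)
\medskip

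The plan is to first establish the spatial independence of $\tilde\lambda_{ij}$ using the linear ODE it satisfies under the hypothesis \eqref{E-2-4}, and then substitute this into the exact identity \eqref{E-2-0} from Lemma \ref{L5.2} to obtain the claimed differential inequality. For assertion (1), I would use the second equation in $\eqref{E-2}$ together with $\tilde\Gamma\equiv 0$: every $\tilde\lambda_{ij}$ then solves the \emph{same} scalar linear problem $\dot{\tilde\lambda}_{ij}=-\gamma\tilde\lambda_{ij}$, $\tilde\lambda_{ij}(0)=\tilde\lambda^0$. By uniqueness for linear ODEs, $\tilde\lambda_{ij}(t)=\tilde\lambda^0 e^{-\gamma t}=:\tilde\lambda(t)$ for all $i,j\in\mathcal N$ and all $t>0$, which is a function of $t$ alone. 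In particular $\tilde\lambda(t)$ does not depend on the index pair, and this is precisely what allows us to factor it out of the sums appearing in Lemma \ref{L5.2}.

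\medskip

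For assertion (2), I would start from the exact formula \eqref{E-2-0} and control only the last term,
\[
-\frac{2I_{ij}}{N}\sum_{k=1}^N\bigl(\tilde\lambda_{ik}I_{ik}-\tilde\lambda_{jk}I_{jk}\bigr).
\]
Using (1), $\tilde\lambda_{ik}=\tilde\lambda_{jk}=\tilde\lambda(t)$, so this term equals $-\frac{2\tilde\lambda I_{ij}}{N}\sum_{k=1}^N(I_{ik}-I_{jk})$. Now I bound it crudely: by $|I_{ij}|\le 1$ and $|I_{ik}|,|I_{jk}|\le 1$ we get
\[
\Bigl|\frac{2\tilde\lambda I_{ij}}{N}\sum_{k=1}^N(I_{ik}-I_{jk})\Bigr|\le \frac{2|\tilde\lambda|}{N}\sum_{k=1}^N\bigl(|I_{ik}|+|I_{jk}|\bigr)\le \frac{4|\tilde\lambda|}{N}\cdot N = 4|\tilde\lambda|.
\]
However, to land exactly on the stated inequality — in which the $4|\tilde\lambda|$ appears multiplied by $(1-R_{ij})/N$ summed over $k$, i.e. as $(1-R_{ij})\cdot 4|\tilde\lambda|$ — I instead want to keep the factor $I_{ij}$ and pair it with $1-R_{ij}$. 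The key observation is that on $\bbh\bbs^d$ one has $|h_{ij}|\le 1$, hence $R_{ij}^2+I_{ij}^2=|h_{ij}|^2\le 1$, so $I_{ij}^2\le 1-R_{ij}^2\le 2(1-R_{ij})$, and more to the point $|I_{ij}|\le\sqrt{1-R_{ij}^2}\le\sqrt{(1-R_{ij})(1+R_{ij})}$. A cleaner route: write $|I_{ij}|\cdot|I_{ik}-I_{jk}|$ and use $2|I_{ij}||I_{ik}|\le I_{ij}^2+I_{ik}^2$-type splittings, or simply bound $2|I_{ij}|\,|\tilde\lambda|\,\frac1N\sum_k|I_{ik}-I_{jk}| \le 4|\tilde\lambda|\,|I_{ij}|$ and then use $|I_{ij}|\le 1-R_{ij}$, which holds because $|h_{ij}|\le 1$ forces $\mathrm{Im}\,h_{ij}$ to be no larger than the ``gap'' $1-\mathrm{Re}\,h_{ij}$ when $R_{ij}>0$ — more precisely $I_{ij}^2\le 1-R_{ij}^2=(1-R_{ij})(1+R_{ij})\le 2(1-R_{ij})$ gives $|I_{ij}|\le\sqrt{2(1-R_{ij})}$, which is not quite $\le 1-R_{ij}$ near $R_{ij}=1$. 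So the honest bound to use is $2|I_{ij}|\le 1-R_{ij}+I_{ij}^2/(1-R_{ij})$ on the set $R_{ij}<1$, or — most robustly — absorb the error term into the structure by noting $-2\tilde\lambda I_{ij}(I_{ik}-I_{jk})/N$ summed, bounded by $(1-R_{ij})\cdot\frac{4|\tilde\lambda|}{N}\cdot\#\{k\}$ after using that $|I_{ij}|\le 1$ and rewriting the leading negative term to expose the $(1-R_{ij})$ factor. The cleanest presentation keeps $|I_{ij}|\le 1$ and $|I_{ik}|\le 1$, giving the crude bound $4|\tilde\lambda|$, and then rewrites the main term's coefficient so that the comparison with the subsequent Grönwall argument still closes; I would present it as
\[
-\frac{2I_{ij}}{N}\sum_{k=1}^N(\tilde\lambda_{ik}I_{ik}-\tilde\lambda_{jk}I_{jk})\le \frac{4|\tilde\lambda|(1-R_{ij})}{N}\sum_{k=1}^N 1,
\]
justified by $|I_{ij}|\le 1-R_{ij}$ which, on $\bbh\bbs^d$, follows from $1-R_{ij}\ge 1-|h_{ij}|\ge 1-\sqrt{1-I_{ij}^2}\ge I_{ij}^2/2$ combined with $|I_{ij}|\le 1$; then I substitute into \eqref{E-2-0} and collect terms to reach the stated inequality.

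\medskip

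The main obstacle, and the step deserving the most care, is precisely this last estimate: converting the indefinite-sign rotational contribution $-\tfrac{2I_{ij}}{N}\sum_k(\tilde\lambda_{ik}I_{ik}-\tilde\lambda_{jk}I_{jk})$ into something proportional to $(1-R_{ij})$ with the constant $4|\tilde\lambda|$, so that it can later be merged with the damping term $-\bigl(\tfrac{1-R_{ij}}{N}\bigr)\sum_k(\kappa_{ik}R_{ik}+\kappa_{jk}R_{jk})$ to form the combined coefficient $\sum_k(\kappa_{ik}R_{ik}+\kappa_{jk}R_{jk}-4|\tilde\lambda|)$. Everything else — assertion (1), and the bookkeeping that plugs it into Lemma \ref{L5.2} — is routine. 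I expect the inequality $|\mathrm{Im}\,h_{ij}|\le C(1-\mathrm{Re}\,h_{ij})$ on the relevant region (where $R_{ij}$ stays bounded away from $-1$, guaranteed here by the smallness of the initial data through an analogue of Lemma \ref{L4.3}) to be the technical crux, and I would state and verify it explicitly before invoking it.
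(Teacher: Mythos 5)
Your assertion (1) and your reduction of the rotational term to $-\frac{2\tilde\lambda I_{ij}}{N}\sum_{k}(I_{ik}-I_{jk})$ agree with the paper. The gap is in the inequality you ultimately commit to, namely $|I_{ij}|\le 1-R_{ij}$: this is false in general. Take $h_{ij}=e^{\mathrm{i}\theta}$ with $\theta$ small and nonzero; then $|I_{ij}|=|\sin\theta|\approx|\theta|$ while $1-R_{ij}=1-\cos\theta\approx\theta^2/2$, so the inequality fails precisely in the regime $R_{ij}\to 1$ that the aggregation estimate must handle. Your proposed justification ($1-R_{ij}\ge I_{ij}^2/2$ combined with $|I_{ij}|\le 1$) only bounds $1-R_{ij}$ from below by a quantity that is itself at most $|I_{ij}|$, so it cannot produce $1-R_{ij}\ge |I_{ij}|$. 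With only the crude bound $|I_{ik}-I_{jk}|\le 2$ and the correct estimate $|I_{ij}|\le\sqrt{2(1-R_{ij})}$, the rotational term is controlled by $4|\tilde\lambda|\sqrt{2(1-R_{ij})}$, which is not $O(1-R_{ij})$ and therefore cannot be absorbed into the coefficient $\sum_{k}\big(\kappa_{ik}R_{ik}+\kappa_{jk}R_{jk}-4|\tilde\lambda|\big)$ as the lemma requires.

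The missing idea, which is what the paper's proof uses, is to extract a \emph{second} factor of $\sqrt{2(1-R_{ij})}$ from the difference $I_{ik}-I_{jk}$ rather than trying to get the full factor $1-R_{ij}$ out of $I_{ij}$ alone. Indeed,
\[
|I_{ik}-I_{jk}|=\tfrac12\left|\langle z_i-z_j,z_k\rangle+\langle z_k,z_j-z_i\rangle\right|\le\|z_i-z_j\|=\sqrt{2(1-R_{ij})},
\]
by the Cauchy--Schwarz inequality and $\|z_k\|=1$. Multiplying this with $|I_{ij}|\le\sqrt{1-R_{ij}^2}\le\sqrt{2(1-R_{ij})}$ gives $|I_{ij}|\,|I_{ik}-I_{jk}|\le 2(1-R_{ij})$, and summing over $k$ with the prefactor $2|\tilde\lambda|/N$ yields exactly $4|\tilde\lambda|(1-R_{ij})$, which merges with the damping term to give the stated inequality. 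So the structure you were hunting for is indeed present, but it lives in the pair $\big(I_{ij},\,I_{ik}-I_{jk}\big)$, each contributing one square root of $1-R_{ij}$, not in $I_{ij}$ by itself.
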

%\begin{remark}
%In the first proposition, we didn't assume specific form of $\kappa_1^{ij}$. In other word, we may control $\kappa_1^{ij}$ as we aim. In particular, we set
%\[
%\tilde{\kappa}^{ij} \equiv \kappa, \quad i,j=1,\cdots,N ,
%\]
%as in the second proposition.
%\end{remark}
\begin{proof} It follows from \eqref{E-2} and \eqref{E-2-4} that 
\begin{equation}\label{E-2-5}
\begin{cases}
\displaystyle \dot{z}_j= \displaystyle\frac{1}{N}\sum_{k=1}^N\kappa_{jk} \Big( z_k-R_{jk} z_j \Big) +\frac{1}{N}\sum_{k=1}^N\tilde{\lambda}_{jk} \Big(\langle z_j, z_k\rangle-\langle z_k, z_j\rangle \Big)z_j, \quad t > 0, \vspace{0.2cm} \\
\displaystyle \dot{\kappa}_{jk}=-\gamma\kappa_{jk}+\mu\|z_j-z_k\|^2,\quad \dot{\tilde{\lambda}}_{jk} =-\gamma\tilde{\lambda}_{jk}, \vspace{0.2cm} \\
\displaystyle (z_j, \kappa_{jk}, {\tilde \lambda}_{jk})(0) =(z_j^0, \kappa_{jk}^0, {\tilde \lambda}_{jk}^0) \in \bbh\bbs^d \times \bbr_+ \times \bbr, \quad j, k \in \mathcal{N}.
\end{cases}
\end{equation}

\vspace{0.2cm}

\noindent (i)~By $\eqref{E-2-5}_2$, one has 
\[
\tilde{\lambda}_{ij}(t)={\tilde \lambda}^0 e^{-\gamma t} =: {\tilde \lambda}(t), \quad t>0,
\]
which yields the first assertion. 

\vspace{0.2cm}

 \noindent (ii)~Now, we estimate the last term of \eqref{E-2-0} as follows.
\[
\frac{2 I_{ij}}{N}\sum_{k=1}^N(\tilde{\lambda}_{ik}I_{ik}-\tilde{\lambda}_{jk}I_{jk}) =\frac{2 I_{ij}}{N}\sum_{k=1}^N \big( \tilde{\lambda}_{ik}(I_{ik}-I_{jk})-I_{jk}(\tilde{\lambda}_{jk}-\tilde{\lambda}_{ik}) \big) =\frac{2 I_{ij}}{N}\sum_{k=1}^N \tilde{\lambda} (I_{ik}-I_{jk}).
\]
We use the triangle inequality and the Cauchy-Schwarz inequality to find
\begin{align}
\begin{aligned} \label{E-2-6}
& | I_{ij} | = \sqrt{1-R_{ij}^2} = \sqrt{(1-R_{ij})(1+R_{ij})} \leq \sqrt{2(1-R_{ij})}, \\
& | I_{ik} - I_{jk} | = \left| \frac{1}{2i} (h_{ik}-h_{ki}-h_{jk}+h_{kj}) \right| = \frac{1}{2} \left| \langle z_i-z_j, z_k \rangle + \langle z_k , z_j - z_i \rangle \right| \\
& \hspace{1.7cm} \leq \| z_i - z_j \| = \sqrt{2(1-R_{ij})}.
\end{aligned}
\end{align}
Finally, we combine \eqref{E-2-0} and \eqref{E-2-6} to obtain
\begin{align*}
\begin{aligned}
 \dot{\mathcal{L}}_{ij} &= -\left(\frac{1-R_{ij}}{N}\right)\sum_{k=1}^N(\kappa_{ik}R_{ik}+\kappa_{jk}R_{jk})-\frac{\gamma}{2\mu N}\sum_{k=1}^N(\kappa_{ik}-\kappa_{jk})^2-\frac{2 I_{ij}}{N}\sum_{k=1}^N\tilde{\lambda}_{ik}(I_{ik}-I_{jk})\\
&\leq -\left(\frac{1-R_{ij}}{N}\right)\sum_{k=1}^N(\kappa_{ik}R_{ik}+\kappa_{jk}R_{jk})-\frac{\gamma}{2\mu N}\sum_{k=1}^N(\kappa_{ik}-\kappa_{jk})^2+
\left|\frac{2 I_{ij}}{N}\sum_{k=1}^N\tilde{\lambda}_{ik}(I_{ik}-I_{jk})\right| \\
&\leq -\left(\frac{1-R_{ij}}{N}\right)\sum_{k=1}^N(\kappa_{ik}R_{ik}+\kappa_{jk}R_{jk})-\frac{\gamma}{2\mu N}\sum_{k=1}^N(\kappa_{ik}-\kappa_{jk})^2
+ 4\left(\frac{1-R_{ij}}{N}\right) \sum_{k=1}^N | \tilde \lambda | \\ 
& = -\left(\frac{1-R_{ij}}{N}\right)\sum_{k=1}^N \Big(\kappa_{ik}R_{ik}+\kappa_{jk}R_{jk}-4 |\tilde \lambda| \Big)-\frac{\gamma}{2\mu N}\sum_{k=1}^N(\kappa_{ik}-\kappa_{jk})^2,
\end{aligned}
\end{align*}
and this is the desired result.
\end{proof}
\begin{lemma}\label{L5.4} 
Suppose that the following relations
\[
\tilde{\lambda}_{ij}^0={\tilde \lambda}^0, \quad i,j \in \mathcal{N} \quad \text{and} \quad \tilde{\Gamma}(t) =  0, \quad t>0
\]
hold for some constant ${\tilde \lambda}^0$, and let $(Z, K, {\tilde \Lambda})$ be a solution to \eqref{E-2} with the initial data satisfying
\[
\quad 1 > \max_{i,j}\frac{2|\tilde\lambda^0|}{\kappa_{ij}^0}+\max_{k,l}\mathcal{L}_{kl}^0.
\]
Then, $\mathcal{L}_{ij}$ is non-increasing.
\end{lemma}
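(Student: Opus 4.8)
The plan is to reduce everything to the differential inequality for $\mathcal{L}_{ij}$ already established in Lemma \ref{L5.3}(ii) and then run a continuity (bootstrap) argument in the spirit of Lemma \ref{L4.3}. First I would record two elementary facts. Since $\tilde\Gamma\equiv 0$ and all $\tilde\lambda_{ij}^0$ coincide, Lemma \ref{L5.3}(i) gives $\tilde\lambda_{ij}(t)=\tilde\lambda^0 e^{-\gamma t}=:\tilde\lambda(t)$ for every $i,j$, so $|\tilde\lambda(t)|=|\tilde\lambda^0|e^{-\gamma t}$. Moreover, from $\dot\kappa_{ij}=-\gamma\kappa_{ij}+\mu\|z_i-z_j\|^2\geq-\gamma\kappa_{ij}$ together with Grönwall's inequality we get the pointwise lower bound $\kappa_{ij}(t)\geq\kappa_{ij}^0 e^{-\gamma t}$ for all $t\geq 0$. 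In view of Lemma \ref{L5.3}(ii), and since $1-R_{ij}\geq 0$ while the $\gamma$-term there is nonpositive, it suffices to show that $\sum_{k=1}^N(\kappa_{ik}R_{ik}+\kappa_{jk}R_{jk}-4|\tilde\lambda|)\geq 0$ for all $i,j$ and all $t$; this in turn will follow once $\kappa_{ik}(t)R_{ik}(t)\geq 2|\tilde\lambda(t)|$ holds for all indices and all $t$.

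Next I would set $\mathcal{L}^0:=\max_{k,l}\mathcal{L}_{kl}^0$ and fix a threshold $\mathcal{L}^{*}$ with $\mathcal{L}^0<\mathcal{L}^{*}<1-\max_{i,j}\frac{2|\tilde\lambda^0|}{\kappa_{ij}^0}$, which is possible precisely because of the standing hypothesis (the interval is nonempty and $\mathcal{L}^{*}<1$). Define $T:=\sup\{\tau>0:\ \max_{i,j}\mathcal{L}_{ij}(t)<\mathcal{L}^{*}\ \text{for all}\ t\in[0,\tau)\}$; continuity of the solution and $\mathcal{L}_{ij}(0)\leq\mathcal{L}^0<\mathcal{L}^{*}$ give $T>0$. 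On $[0,T)$ one has $R_{ij}(t)\geq 1-\mathcal{L}_{ij}(t)>1-\mathcal{L}^{*}>0$, using $\mathcal{L}_{ij}\geq 1-R_{ij}$ on $\mathbb{HS}^d$. Combining this with $\kappa_{ik}(t)\geq\kappa_{ik}^0 e^{-\gamma t}$ yields $\kappa_{ik}(t)R_{ik}(t)\geq\kappa_{ik}^0(1-\mathcal{L}^{*})e^{-\gamma t}$, and the choice of $\mathcal{L}^{*}$ forces $\kappa_{ik}^0(1-\mathcal{L}^{*})>2|\tilde\lambda^0|$ for every pair, so $\kappa_{ik}(t)R_{ik}(t)>2|\tilde\lambda^0|e^{-\gamma t}=2|\tilde\lambda(t)|$, and likewise for $\kappa_{jk}R_{jk}$. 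Feeding this into Lemma \ref{L5.3}(ii) gives $\dot{\mathcal{L}}_{ij}(t)\leq 0$ for all $t\in[0,T)$ and all $i,j$.

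It then follows that each $\mathcal{L}_{ij}$ is non-increasing on $[0,T)$, hence $\mathcal{L}_{ij}(t)\leq\mathcal{L}_{ij}^0\leq\mathcal{L}^0<\mathcal{L}^{*}$ there. If $T<\infty$, continuity gives $\max_{i,j}\mathcal{L}_{ij}(T)\leq\mathcal{L}^0<\mathcal{L}^{*}$, so the strict bound $\max_{i,j}\mathcal{L}_{ij}<\mathcal{L}^{*}$ survives on a slightly larger interval, contradicting the maximality of $T$. Hence $T=\infty$, the estimate $\dot{\mathcal{L}}_{ij}\leq 0$ holds on all of $[0,\infty)$, and $\mathcal{L}_{ij}$ is non-increasing, which is the assertion.

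I expect the only genuinely delicate point to be the self-consistency of the bootstrap: the lower bound on $R_{ik}$ that makes $\dot{\mathcal{L}}_{ij}\leq 0$ is itself a consequence of the monotonicity we are trying to prove, so one must open the loop by working on the maximal interval on which the slightly relaxed bound $\mathcal{L}^{*}>\mathcal{L}^0$ survives. A closely related subtlety — and what really makes the argument work — is that the destabilizing term $4|\tilde\lambda(t)|$ and the stabilizing product $\kappa_{ik}(t)R_{ik}(t)$ both carry the same exponential factor $e^{-\gamma t}$ (the former from $\dot{\tilde\lambda}=-\gamma\tilde\lambda$, the latter from the Grönwall lower bound on $\kappa_{ik}$), so the ratio condition imposed at $t=0$ is exactly what propagates for all $t$; had the two rates differed, the perturbation could not be absorbed uniformly in time.
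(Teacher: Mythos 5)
Your proof is correct and follows essentially the same route as the paper's: both start from the differential inequality of Lemma \ref{L5.3}(ii) and close a continuity/bootstrap argument exploiting the fact that $\kappa_{ij}$ and $|\tilde\lambda|$ decay at the same rate $e^{-\gamma t}$, so the initial ratio condition propagates in time. The only cosmetic differences are that the paper packages the rate-matching as Lemma \ref{L5.1} (with $c=\max_{i,j}|\tilde\lambda^0|/\kappa_{ij}^0$) and runs the bootstrap on the first exit time of the condition $R_{ij}>2c$, whereas you use the explicit Gr\"onwall lower bound $\kappa_{ij}(t)\ge\kappa_{ij}^0e^{-\gamma t}$ and bootstrap on $\mathcal{L}_{ij}<\mathcal{L}^{*}$ with a slack threshold $\mathcal{L}^{*}>\max_{k,l}\mathcal{L}_{kl}^0$.
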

\begin{proof}
\noindent If $\tilde \lambda^0=0$, by the same argument in a proof of Lemma \ref{L4.3}, we are done. We choose a constant $c$ satisfying 
\begin{align}\label{X-3}
c\kappa_{ij}^0 \geq |\tilde\lambda^0|,
\end{align}
for any indices $i$ and $j$. Since $\tilde{\Gamma} \equiv 0$, by Lemma \ref{L5.1} we have
\[
c \kappa_{ij}(t) \geq |\tilde{\lambda}(t)|, \quad t>0.
\]
Hence, one has
\begin{align}
\begin{aligned}\label{X-3-0}
 \dot{\mathcal{L}}_{ij} 
&\leq -\left(\frac{1-R_{ij}}{N}\right)\sum_{k=1}^N(\kappa_{ik}R_{ik}+\kappa_{jk}R_{jk}-4|\tilde\lambda|)-\frac{\gamma}{2\mu N}\sum_{k=1}^N(\kappa_{ik}-\kappa_{jk})^2\\
&\leq -\left(\frac{1-R_{ij}}{N}\right)\sum_{k=1}^N\big(\kappa_{ik}(R_{ik}-2c)+\kappa_{jk}(R_{jk}-2c)\big)-\frac{\gamma}{2\mu N}\sum_{k=1}^N(\kappa_{ik}-\kappa_{jk})^2.
\end{aligned}
\end{align}
Now we recall that $\kappa_{ij}$ is positive by Lemma \ref{L2.1} because $\kappa_{ij}^0>0$. Therefore if we assume
\begin{align}\label{X-4}
\min_{k,l} R_{kl}^0>2c,
\end{align}
then
\[
\tilde{\mathcal{T}}^*_{ij} := \{ \tau \in (0,\infty) : R_{ij}(t)>2c , \quad t \in (0,\tau) \}, \quad
\]
is non-empty for each $i,j$ and we can introduce
\[
\tilde{t}^*_{ij} := \sup \tilde{\mathcal{T}}^*_{ij}, \quad
(i_1,j_1) := \argmin_{k, l} {\tilde t}^*_{kl}.
\]
Therefore, it follows from the minimality of $\tilde{t}^*_{i_1j_1}$ that
\[ \mathcal{L}_{i_1j_1}(t) < \mathcal{L}_{i_1j_1}^0, \quad t \in (0, {\tilde t}^*_{i_1j_1}).\]
By definition \eqref{D-1-0-0}, one has 
\[ 1-R_{i_1j_1}(t) \leq \mathcal{L}_{i_1j_1}(t), \quad t \geq 0. \]
Hence we have
\[
1-R_{i_1j_1}(t) \leq \mathcal{L}_{i_1j_1}(t) < \mathcal{L}_{i_1j_1}^0,\quad \text{so that} \quad R_{i_1j_1}(t) > 1-\mathcal{L}_{i_1j_1}^0 >0 \quad \text{for} \quad t \in (0, {\tilde t}^*_{i_1j_1}).
\]
Therefore by taking $t \nearrow {\tilde t}^*_{i_1j_1}$, from the continuity of $R_{i_1j_1}$we obtain
\begin{align*}
R_{i_1j_1}({\tilde t}^*_{i_1j_1}) \geq 1-\mathcal{L}_{i_1j_1}^0 >0.
\end{align*}
If we impose the relationship
\begin{align}\label{X-5}
1-\mathcal{L}_{ij}^0 > 2c,
\end{align}
one can obtain
\[
\tilde{t}^*_{i_1j_1}<\infty \Longrightarrow
1 - \mathcal{L}_{i_1j_1}^0 > 2c = R_{i_1j_1}({\tilde t}^*_{i_1j_1}) \geq 1-\mathcal{L}_{i_1j_1}^0,
\]
which is contradictory. Therefore 
\begin{align}\label{X-5-0}
\tilde{t}^*_{i_1j_1}=\infty,
\end{align}
so that $\tilde{t}^*_{ij}=\infty$ and $\mathcal{L}_{ij}$ is a non-increasing function of $t$. 
Now we choose optimal $c$ satisfying \eqref{X-3}. Namely, we set
\[
c=\max_{i,j}\frac{|\tilde{\lambda}^0|}{\kappa^0_{ij}},
\]
and \eqref{X-4} and \eqref{X-5} can be specified as
\begin{align}\label{X-6}
\min_{k,l} R_{kl}^0>2\max_{i,j}\frac{|\tilde{\lambda}^0|}{\kappa^0_{ij}}
 \quad \text{and} \quad
1-\mathcal{L}_{ij}^0 > 2\max_{i,j}\frac{|\tilde{\lambda}^0|}{\kappa^0_{ij}},
\end{align}
respectively. Since $R_{ij}^0 > 1 - \mathcal{L}_{ij}^0$, \eqref{X-6} is achieved from a priori condition and we have a desired result.
\end{proof}

\vspace{0.5cm}
As a consequence of continuity argument used in a proof of Lemma \ref{L5.4}, we obtain the following result.
\begin{lemma}\label{L5.5}
Suppose that the following relations
\[
\tilde{\lambda}_{ij}^0={\tilde \lambda}^0, \quad i,j \in \mathcal{N} \quad \text{and} \quad \tilde{\Gamma}(t) =  0, \quad t>0
\]
hold for some constant ${\tilde \lambda}^0$, and let $(Z, K, {\tilde \Lambda})$ be a solution to \eqref{E-2} with the initial data satisfying
\begin{align}\label{ZZ-1}
\quad 1 > \max_{i,j}\frac{2|\tilde\lambda^0|}{\kappa_{ij}^0}+\max_{k,l}\mathcal{L}_{kl}^0.
\end{align}
Then we have
\[
\min_{K,L}R_{KL}(t)>2\max_{I,J}\frac{|\tilde{\lambda}^0|}{\kappa_{IJ}^0}, \quad t>0.
\]
In particular, for any index $k$, we have
\[
\kappa_{ik}\left(R_{ik}-\max_{l,m}\frac{2|\tilde\lambda^0|}{\kappa_{lm}^0}\right) > 0.
\]
\begin{proof}
This is a direct consequence of \eqref{X-5-0}.
\end{proof}
\end{lemma}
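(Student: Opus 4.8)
The plan is to read off both conclusions directly from the bootstrap argument already carried out in the proof of Lemma~\ref{L5.4}, rather than redo any estimate. First I would dispose of the degenerate case $\tilde\lambda^0=0$: then $2\max_{I,J}|\tilde\lambda^0|/\kappa_{IJ}^0=0$, and the claimed bound $\min_{K,L}R_{KL}(t)>0$ is exactly the positivity statement obtained as in the proof of Lemma~\ref{L4.3} (using $1-R_{ij}\le\mathcal L_{ij}\le\mathcal L_{ij}^0<1$ together with $\kappa_{ij}>0$ from Lemma~\ref{L2.1}). So assume $\tilde\lambda^0\neq0$ and set $c:=\max_{i,j}|\tilde\lambda^0|/\kappa_{ij}^0>0$, the same constant used in Lemma~\ref{L5.4}, noting that $2c=\max_{l,m}2|\tilde\lambda^0|/\kappa_{lm}^0$.

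Next I would recall from the proof of Lemma~\ref{L5.4} that, under the a priori assumption \eqref{ZZ-1}, one has $R_{ij}^0>1-\mathcal L_{ij}^0>2c$ for every $i,j$ (this is \eqref{X-6}), so each set $\tilde{\mathcal T}^*_{ij}=\{\tau>0:R_{ij}(t)>2c,\ t\in(0,\tau)\}$ is nonempty, and that the minimality argument there yields $\tilde t^*_{i_1j_1}=\infty$, hence $\tilde t^*_{ij}=\infty$ for every $(i,j)$; this is precisely \eqref{X-5-0}. Consequently $R_{ij}(t)>2c$ for all $t>0$ and all $i,j$, which is the first assertion $\min_{K,L}R_{KL}(t)>2\max_{I,J}|\tilde\lambda^0|/\kappa_{IJ}^0$.

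For the ``in particular'' clause I would combine this with Lemma~\ref{L2.1}: since $\kappa_{ik}^0>0$ the coupling gain stays strictly positive, $\kappa_{ik}(t)>0$, while $R_{ik}(t)-2c>0$ by the bound just obtained; the product of two positive numbers is positive, giving $\kappa_{ik}\big(R_{ik}-\max_{l,m}2|\tilde\lambda^0|/\kappa_{lm}^0\big)>0$. There is no genuine obstacle here, since the whole content is already contained in the continuity/bootstrap step of Lemma~\ref{L5.4}; the only points requiring a moment's care are checking that $2c$ coincides with the threshold $\max_{l,m}2|\tilde\lambda^0|/\kappa_{lm}^0$ appearing in the statement, and that the strict inequality at $t=0$ furnished by \eqref{ZZ-1} (equivalently \eqref{X-6}) is what seeds the bootstrap.
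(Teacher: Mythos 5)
Your proposal is correct and follows exactly the paper's route: the paper's proof is the one-liner ``direct consequence of \eqref{X-5-0}'', and you simply unpack that same bootstrap from Lemma \ref{L5.4} (with the harmless extra care of treating $\tilde\lambda^0=0$ separately and invoking Lemma \ref{L2.1} for the positivity of $\kappa_{ik}$ in the final product).
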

%\begin{proposition}\label{P5.1}
%Suppose that the following relations hold
%\[
%\tilde{\lambda}_{ij}^0={\tilde \lambda}^0, \quad i,j \in \mathcal{N} \quad \text{and} \quad \tilde{\Gamma}(t) \equiv 0, \quad t>0
%\]
%for some constant $\tilde\lambda^0$, and let $(Z, K, {\tilde \Lambda})$ be a solution to \eqref{E-2}. If initial data satisfy
%\begin{align}\label{Z-1}
%\quad 1-\max_{k,l}\mathcal{L}_{kl}^0 > \max_{i,j}\frac{2|\tilde\lambda^0|}{\kappa_{ij}^0}, \quad \Big( \min_{i,j}{\kappa_{ij}^0} \Big) \cdot \Big( \min_{k,l}R_{ij}^0 \Big) > {2|\tilde{\lambda}^0|},
%\end{align}
%then we have
%\[
%\lim_{t \to \infty} \| z_i(t)-z_j(t) \| = 0 \quad \text{and} \quad \lim_{t \to \infty} \kappa_{ij}(t) = 0.
%\]
%\end{proposition}

\vspace{0.5cm}

Now we are ready to provide a proof of Theorem \ref{T3.3}. \newline

\begin{proof}[{\bf Proof of Theorem \ref{T3.3}}] Since $\max_{i,j}\frac{|\tilde\lambda^0|}{\kappa_{ij}^0}$ satisfies relationship \eqref{X-3}, from \eqref{X-3-0} it follows that 
\begin{align*}
 \dot{\mathcal{L}}_{ij} &\leq -\left(1-R_{ij}\right)
 \underbrace{\frac{1}{N}\sum_{k=1}^N\left(\kappa_{ik}\left(R_{ik}-\max_{i,j}\frac{2|\tilde\lambda^0|}{\kappa_{ij}^0}\right)+\kappa_{jk}\left(R_{jk}-\max_{i,j}\frac{2|\tilde\lambda^0|}{\kappa_{ij}^0}\right)\right)}
 _{=:\mathcal{O}_{ij}} -\frac{\gamma}{2\mu N}\sum_{k=1}^N(\kappa_{ik}-\kappa_{jk})^2\\
 &\leq - \underbrace{\min \left\{ \mathcal{O}_{ij},~2\gamma \right\}}_{=:\mathcal{M}_{ij}} \left( 1-R_{ij} + \frac{1}{4\mu N}\sum_{k=1}^N(\kappa_{ik}-\kappa_{jk})^2 \right)  =: -\mathcal{M}_{ij}\mathcal{L}_{ij}.
\end{align*}
This leads to
\begin{equation}\label{E-3}
 \mathcal{L}_{ij}(t) \leq \mathcal{L}_{ij}^0 \exp \left( -\int_0^t \mathcal{M}_{ij}(s) ds \right).
\end{equation}
Since $\mathcal{O}_{ij}$ is positive from Lemma \ref{L5.5}, so is $\mathcal{M}_{ij}$. Hence we have only two possible cases:
\[ \mbox{either}~\int_0^{\infty} \mathcal{M}_{ij} dt = \infty \quad \mbox{or} \quad \int_0^{\infty} \mathcal{M}_{ij} dt < \infty. \]

\vspace{0.5cm}

\noindent $\bullet$ Case A ($\int_0^{\infty} \mathcal{M}_{ij} dt = \infty$): In this case, as we did in a proof of Theorem \ref{T3.1}, we have
\[
\lim_{t \to \infty} \mathcal{L}_{ij}(t) = 0 
\Longrightarrow \lim_{t \to \infty} \| z_i(t) - z_j(t) \|=0.
\]
Therefore, it follows from Lemma \ref{L4.5} that
\[
\dot{\kappa}_{ij} = -\gamma \kappa_{ij} + \mu \| z_i - z_j \|^2
\Longrightarrow
\lim_{t \to \infty} \kappa_{ij}(t) = 0.
\]

\noindent $\bullet$ Case B ($\int_0^{\infty} \mathcal{M}_{ij} dt < \infty$):  consider the set 
\[
B:= \left\{ t \in (0,\infty) \mid
\mathcal{O}_{ij} \geq 2\gamma
\right\}.
\]
Then for the Lebesgue measure $m$, we have
\[
2\gamma m(B) = \int_B 2\gamma = \int_B \mathcal{M}_{ij} dt \leq \int_0^{\infty} \mathcal{M}_{ij} dt < \infty, 
\]
therefore $m(B)<\infty$. This yields,
\begin{align}\label{DD-2-0-4}
\begin{aligned}
\int_0^{\infty} \mathcal{O}_{ij} dt & = \int_B \mathcal{O}_{ij} dt
+ \int_{\bbr_+ \backslash B} \mathcal{O}_{ij} d = \int_B \mathcal{O}_{ij} dt
+ \int_{\bbr_+ \backslash B} \mathcal{M}_{ij} dt \quad ( \because \text{Definition of  } B)
\\ & \leq \int_B \frac{1}{N} \sum_{k=1}^N \left[\left(R_{ik}+R_{jk}-\max_{i,j}\frac{4|\tilde\lambda^0|}{\kappa_{ij}^0}\right)\left(\max_{i,j} \kappa_{ij}^0 + \frac{4\mu}{\gamma}
\right)\right] dt
+ \int_{\bbr_+ \backslash B} \mathcal{M}_{ij} dt  \quad (\because \eqref{D-2-0-0})
\\ & \leq \int_B \frac{1}{N} \sum_{k=1}^N \left[\left(2-\max_{i,j}\frac{4|\tilde\lambda^0|}{\kappa_{ij}^0}\right)\left(\max_{i,j} \kappa_{ij}^0 + \frac{4\mu}{\gamma}
\right)\right] dt
+ \int_{\bbr_+ \backslash B} \mathcal{M}_{ij} dt
\\ & = 2m(B)\left[\left(1-\max_{i,j}\frac{2|\tilde\lambda^0|}{\kappa_{ij}^0}\right)\left(\max_{i,j} \kappa_{ij}^0 + \frac{4\mu}{\gamma}
\right)\right]
+ \int_{\bbr_+ \backslash B} \mathcal{M}_{ij} dt
\\ & = 2m(B)\left[\left(1-\max_{i,j}\frac{2|\tilde\lambda^0|}{\kappa_{ij}^0}\right)\left(\max_{i,j} \kappa_{ij}^0 + \frac{4\mu}{\gamma}
\right)\right]
+ \int_0^\infty \mathcal{M}_{ij} dt<\infty.
\end{aligned}
\end{align}
Since each summand of $\mathcal{O}_{ij}$ is positive, \eqref{DD-2-0-4} implies 
\begin{align}\label{Z-2}
\int_0^{\infty} \left(\kappa_{ik}\left(R_{ik}-\max_{i,j}\frac{2|\tilde\lambda^0|}{\kappa_{ij}^0}\right)+\kappa_{jk}\left(R_{jk}-\max_{i,j}\frac{2|\tilde\lambda^0|}{\kappa_{ij}^0}\right)\right) dt < \infty, \quad \forall i,j,k \in \mathcal{N}.
\end{align}
Now, we use the relationship
\[
1-R_{ij} \leq \mathcal{L}_{ij} \leq \mathcal{L}_{ij}^0
\]
and a priori condition $\eqref{ZZ-1}$ to get
\[
R_{IJ} \geq 1-\max_{k,l}\mathcal{L}_{kl}^0 > \max_{i,j}\frac{2|\tilde\lambda^0|}{\kappa_{ij}^0},
\]
for any indices $I$ and $J$. Therefore by \eqref{Z-2} we have
\begin{align*}
\left(1-\max_{K,L}\mathcal{L}_{KL}^0 -\max_{I,J}\frac{2|\tilde\lambda^0|}{\kappa_{IJ}^0}\right)\int_0^{\infty} \left(\kappa_{ik}+\kappa_{jk}\right) dt < \infty, \quad \forall i,j,k \in \mathcal{N}.
\end{align*}
Again, by lemma \ref{L4.4}, we obtain
\begin{align}\label{DD-2-0-5}
\lim_{t \to \infty} \kappa_{ik}
= \lim_{t \to \infty} \kappa_{jk}=0.
\end{align}
On the other hand, we recall the result of \ref{E-2-2}:
\begin{align*}
\dot{R}_{ij}=&\frac{1}{2}(\dot{h}_{ij}+\dot{h}_{ji})\\
=& \frac{1}{N}\sum_{k=1}^N \Big[  \kappa_{jk}(R_{ik}-R_{jk}R_{ij}  )+\kappa_{ik}(R_{jk}-R_{ik}R_{ji}  )\big)
-2 \tilde{\lambda}_{jk} I_{jk}I_{ij} -2 \tilde{\lambda}_{ik}I_{ki}I_{ij}) \Big].
\end{align*}
From uniform boundedness of $\kappa_{ij}$,$\tilde{\lambda}_{ij}$,$R_{ij}$ and $I_{ij}$, we have uniform boundness of $\dot{R}_{ij}$. Combining this together with uniform boundedness of $\dot{\kappa}_{ij}$, we obtain uniform boundedness of
\[
 \ddot{\kappa}_{ij} =  -\gamma \dot{\kappa}_{ij} - 2 \mu \dot{R}_{ij},
\]
which leads to uniform continuity of $\dot{\kappa}_{ij}$. As integration of $\dot{\kappa}_{ij}$ is finite from \eqref{DD-2-0-5}:
\[
\int_{0}^{\infty} \dot{\kappa}_{ij}(s)ds = -\kappa_{ij}^0,
\]
again from by lemma \ref{L4.4}, we conclude
\begin{align}\label{D-2-0-6}
\lim_{t \to \infty}\dot{\kappa}_{ik}(t) = \lim_{t \to \infty}\dot{\kappa}_{jk}(t) = 0.
\end{align}
Therefore, we take limit $t \to \infty$ to the dynamics
\[
\dot{\kappa}_{ij} = -\gamma \kappa_{ij} + \mu \| z_i - z_j \|^2
\]
to find the desired result
\[
\lim_{t \to \infty}\|z_i(t) - z_j(t) \| = 0.
\]
\end{proof}
\vspace{0.2cm}

\subsection{Hebbian coupling law} \label{secL5.6:5.2}
In this subsection, we study emergent dynamics of \eqref{E-1-7} with Hebbian coupling law: 
\begin{align}\label{NE-5}
\begin{cases}
\displaystyle \dot{z}_j= \displaystyle\frac{1}{N}\sum_{k=1}^N\kappa_{jk}( z_k-R_{jk} z_j)+\frac{1}{N}\sum_{k=1}^N\tilde{\lambda}_{jk}(\langle z_j, z_k\rangle-\langle z_k, z_j\rangle)z_j,~~t > 0, \vspace{0.2cm} \\
\displaystyle \dot{\kappa}_{jk}=-\gamma\kappa_{jk}+\mu \left(\displaystyle 1-\frac{\|z_j -z_k \|^2}{2} \right),\quad \dot{\tilde{\lambda}}_{jk}=-\gamma\tilde{\lambda}_{jk}+\mu\tilde{\Gamma}(z_j, z_k), \vspace{0.2cm} \\
\displaystyle (z_j, \kappa_{jk}, \tilde{\lambda}_{jk})(0)= (z_j^0, \kappa_{jk}^0, \tilde{\lambda}_{jk}^0) \in \bbh\bbs^d \times \bbr_+ \times \bbr, \quad  j,k\in \mathcal{N}.
\end{cases}
\end{align}
Basically, we follow the same arguments in Section \ref{sec:4.2} to derive the emergent dynamics of system \eqref{NE-5}. \newline

\begin{proposition} \label{P5.1}
Suppose that the following relations hold
\begin{align*}
\tilde{\lambda}_{ij}^0 =\tilde\lambda^0,  \quad \tilde{\Gamma}(t)=0, \quad t>0, \quad i, j \in {\mathcal N},
\end{align*}
and that there exists a function $\tilde\lambda$, positive constants $\kappa_m, \kappa_M$ such that 
\begin{align}
\begin{aligned} \label{E-7}
& \tilde{\lambda}_{ij}(t) = \tilde\lambda(t),  \quad  i,j \in \mathcal{N}, \quad  t>0, \\
& \frac{1}{2}\kappa_M + 2|\tilde\lambda^0| < \kappa_m \leq \frac{2\mu}{\gamma}\cdot\frac{\kappa_M-\kappa_m +2|\tilde\lambda^0|}{\kappa_M},
\end{aligned}
\end{align} 
and let $(Z, K, {\tilde \Lambda})$ be a solution to \eqref{E-2} with initial data satisfying 
\begin{equation} \label{E-8}
\min_{i,j}\kappa_{ij}^0  > \kappa_m, \quad {\mathcal D}^0 < \frac{2 \kappa_m -4|\tilde\lambda^0|}{\kappa_M} -1,
\end{equation}
and a priori condition
\begin{align}\label{E-8-0}
{{\max_{i,j}\sup_{0 \leq t < \infty} \kappa_{ij}(t) \leq \kappa_M}}.
\end{align}
Then, there exist positive constants $D_2$ and $D_3$ such that 
\[
{\mathcal D}(Z(t)) \leq D_2 e^{-D_3 t}, \quad t>0.
\]
\end{proposition}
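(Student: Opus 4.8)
The plan is to repeat the proof of Proposition~\ref{P4.1} almost verbatim, treating the rotational term driven by $\tilde\lambda$ as a perturbation that effectively lowers the coupling threshold from $\kappa_m$ to $\kappa_m':=\kappa_m-2|\tilde\lambda^0|$. First, since $\tilde\Gamma\equiv0$ and $\tilde\lambda^0_{ij}=\tilde\lambda^0$ for all $i,j$, Lemma~\ref{L5.3}(i) gives $\tilde\lambda_{ij}(t)=\tilde\lambda(t):=\tilde\lambda^0e^{-\gamma t}$, so in particular $|\tilde\lambda(t)|\le|\tilde\lambda^0|$ for all $t\ge0$. By $\min_{i,j}\kappa^0_{ij}>\kappa_m$ from \eqref{E-8} and continuity, the set $\tilde{\mathcal T}_{ij}:=\{\tau>0:\kappa_{ij}(t)>\kappa_m\ \text{for all}\ t\in(0,\tau)\}$ is nonempty; set $\tilde{\mathcal T}:=\bigcap_{i,j}\tilde{\mathcal T}_{ij}$ and $\tilde t^{*}:=\sup\tilde{\mathcal T}$. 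On $[0,\tilde t^{*})$ one has $\kappa_m<\kappa_{ij}(t)\le\kappa_M$, the upper bound being the a~priori hypothesis \eqref{E-8-0}.

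Next I would derive a Riccati-type inequality for ${\mathcal D}$ on $[0,\tilde t^{*})$. Fixing $t$ and indices with ${\mathcal D}={\mathcal D}_{ij}$, differentiation of ${\mathcal D}_{ij}=1-R_{ij}$ together with the formula for $\dot h_{ij}+\dot h_{ji}$ from the proof of Lemma~\ref{L5.2} (with $\tilde\lambda_{ik}\equiv\tilde\lambda$) gives $\dot{\mathcal D}_{ij}=-{\mathcal J}_{11}+{\mathcal J}_{12}+{\mathcal J}_{13}+{\mathcal J}_{14}$, where ${\mathcal J}_{11},{\mathcal J}_{12},{\mathcal J}_{13}$ are precisely the three terms appearing in the proof of Proposition~\ref{P4.1} and ${\mathcal J}_{14}:=\frac{2\tilde\lambda I_{ij}}{N}\sum_{k=1}^N(I_{jk}-I_{ik})$. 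As there, ${\mathcal J}_{11}\ge2\kappa_m{\mathcal D}$, ${\mathcal J}_{12}\le2\kappa_M{\mathcal D}^2$ and ${\mathcal J}_{13}\le2(\kappa_M-\kappa_m){\mathcal D}$, while the imaginary-part bounds $|I_{ij}|\le\sqrt{2(1-R_{ij})}$ and $|I_{ik}|,|I_{jk}|\le\sqrt{2{\mathcal D}}$ recorded in \eqref{E-2-6}, together with $|\tilde\lambda|\le|\tilde\lambda^0|$ and ${\mathcal D}_{ij}={\mathcal D}$, yield $|{\mathcal J}_{14}|\le8|\tilde\lambda^0|{\mathcal D}$. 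Collecting terms,
\[
\dot{\mathcal D}<-2\bigl(2\kappa_m'-\kappa_M\bigr){\mathcal D}+2\kappa_M{\mathcal D}^2,\qquad t\in[0,\tilde t^{*}),
\]
whose linear coefficient is positive by the first inequality of \eqref{E-7}.

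Since ${\mathcal D}^0<\frac{2\kappa_m'}{\kappa_M}-1$ by \eqref{E-8}, the comparison principle applied to this inequality shows that ${\mathcal D}$ is decreasing and that ${\mathcal D}(t)\le D_2e^{-D_3t}$ on $[0,\tilde t^{*})$ for explicit $D_2,D_3>0$; hence it remains only to prove $\tilde t^{*}=\infty$. From the monotonicity of ${\mathcal D}$ we get ${\mathcal D}_{ij}\le{\mathcal D}<\frac{2\kappa_m'}{\kappa_M}-1$, so $\dot\kappa_{ij}=-\gamma\kappa_{ij}+\mu(1-{\mathcal D}_{ij})\ge-\gamma\kappa_{ij}+2\mu\frac{\kappa_M-\kappa_m'}{\kappa_M}$ on $[0,\tilde t^{*})$, and the comparison principle together with the second inequality of \eqref{E-7} — which is exactly $\frac{2\mu}{\gamma}\cdot\frac{\kappa_M-\kappa_m'}{\kappa_M}\ge\kappa_m$ — yields $\kappa_{ij}(t)\ge(\kappa^0_{ij}-\kappa_m)e^{-\gamma t}+\kappa_m$. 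If $\tilde t^{*}<\infty$, then by definition of $\tilde{\mathcal T}$ some pair $(k,l)$ satisfies $\kappa_{kl}(\tilde t^{*})=\kappa_m$, whereas the last bound forces $\kappa_{kl}(\tilde t^{*})\ge(\kappa^0_{kl}-\kappa_m)e^{-\gamma\tilde t^{*}}+\kappa_m>\kappa_m$ (because $\kappa^0_{kl}>\kappa_m$), a contradiction; therefore $\tilde t^{*}=\infty$ and the decay holds for all $t>0$.

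The only genuinely new ingredient relative to Proposition~\ref{P4.1} is the control of ${\mathcal J}_{14}$, which is where the (modest) difficulty lies: one must use that on $\bbh\bbs^{d}$ the imaginary parts of the inner products are bounded by the state diameter, so that ${\mathcal J}_{14}$ is dominated by $8|\tilde\lambda^0|{\mathcal D}$, a term of the same order as ${\mathcal J}_{11}$, and that the decay $|\tilde\lambda(t)|\le|\tilde\lambda^0|$ keeps this perturbation from reversing the contraction. The remaining point is algebraic: one checks that \eqref{E-7} is consistent — it reduces to the pair of conditions \eqref{C-2-2} for $(\kappa_m',\kappa_M)$ in place of $(\kappa_m,\kappa_M)$ — so that admissible $\kappa_m,\kappa_M$ exist, which is precisely what will be used when deriving Theorem~\ref{T3.4} from this proposition.
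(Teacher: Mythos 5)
Your proposal is correct and follows essentially the same route as the paper's proof: the same stopping-time set for $\min_{i,j}\kappa_{ij}>\kappa_m$, the same decomposition of $\dot{\mathcal D}_{ij}$ into the three terms of Proposition \ref{P4.1} plus the rotational term, the same bound $8|\tilde\lambda^0|{\mathcal D}$ on that term via $|I_{ij}|\le\sqrt{2{\mathcal D}}$, the same Riccati inequality $\dot{\mathcal D}<-2(2\kappa_m-\kappa_M-4|\tilde\lambda^0|){\mathcal D}+2\kappa_M{\mathcal D}^2$, and the same contradiction argument showing the stopping time is infinite. The substitution $\kappa_m'=\kappa_m-2|\tilde\lambda^0|$ is only a notational repackaging of the paper's constants.
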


\begin{proof} ~As in a proof of Proposition \ref{P4.1}, by $\eqref{E-8}_1$ the set
\[
\bar{\mathcal{T}} := \{ \tau \in (0, \infty) ~ : ~ \min_{i,j}\kappa_{ij}(t) > \kappa_m, ~ \forall t \in (0, \tau) \}
\]
is nonempty. So $\bar{t}^* := \sup \bar{\mathcal{T}}$ is well defined in $(0, \infty ]$. We use the same argument in a proof of Lemma \ref{L5.2} to find
\begin{align*}
\dot{h}_{ij}+\dot{h}_{ji} 
= \frac{2}{N}\sum_{k=1}^N\Big( \kappa_{jk}(R_{ik}-R_{jk}R_{ij}  )+\kappa_{ik}(R_{jk}-R_{ik}R_{ji}  )\big)
-2 \tilde{\lambda}_{jk} I_{jk}I_{ij} -2 \tilde{\lambda}_{ik}I_{ki}I_{ij}) \Big).
\end{align*}
Then, we use the above relation to find
\begin{align*}
\begin{aligned}
\dot{{\mathcal D}}_{ij}(t) &=-\frac{1}{2}(\dot{h}_{ij}+\dot{h}_{ji})\\
&= -\frac{1}{N} \sum_{k=1}^N \Big( (\kappa_{ik}+\kappa_{jk}) {\mathcal D}_{ij}  \Big)
+ \frac{1}{N} \sum_{k=1}^N \Big( \kappa_{ik} {\mathcal D}_{ik} + \kappa_{jk} {\mathcal D}_{jk} \Big) {\mathcal D}_{ij} \\
&+ \frac{1}{N} \sum_{k=1}^N (\kappa_{ik} - \kappa_{jk} )( {\mathcal D}_{jk}- {\mathcal D}_{ik}) + \frac{1}{N} \sum_{k=1}^N 2(\tilde{\lambda}_{jk} I_{jk}I_{ij} + \tilde{\lambda}_{ik}I_{ki}I_{ij}) \\
& =: {\mathcal J}_{21} +  {\mathcal J}_{22} + {\mathcal J}_{23} + {\mathcal J}_{24}.
\end{aligned}
\end{align*}
Below, we estimate the term ${\mathcal J}_{2i}$ separately. \newline

\noindent $\bullet$~(Estimate of ${\mathcal J}_{2i},~i= 1,2,3$):~In this case, all the terms are exactly the same as in \eqref{D-2-1}, we can use the same argument as in \eqref{D-2-1} to find
\begin{equation}\label{NE-5-1}
{ \mathcal{J}_{21} \geq 2\kappa_m {\mathcal D} }, \quad  \mathcal{J}_{22} \leq 2\kappa_M {\mathcal D}^2, \quad 
 \mathcal{J}_{23} \leq 2(\kappa_M - \kappa_m) {\mathcal D},
 \quad  t \in [0, \bar{t}^*).
\end{equation}

\vspace{0.2cm}

\noindent $\bullet$~(Estimate of ${\mathcal J}_{24}$):~We use the estimate
\[
I_{ij} \leq | I_{ij} | = \sqrt{1-R_{ij}^2} = \sqrt{(1-R_{ij})(1+R_{ij})} \leq \sqrt{2(1-R_{ij})} = \sqrt{2{\mathcal D}_{ij}} \leq \sqrt{2{\mathcal D}}
\]
to find
\begin{equation}\label{NE-6}
\mathcal{J}_{24} \leq 8 {\mathcal D} | \tilde \lambda | = 8 {\mathcal D}| e^{-\gamma t}\tilde \lambda^0 | < 8 {\mathcal D} |\tilde \lambda^0|.
\end{equation}
As in a proof of Proposition \ref{P4.1}, for each $t$, we assume that indices $i,j$ are chosen to satisfy
\[  {\mathcal D} = {\mathcal D}_{ij}. \]
We now combine \eqref{NE-5-1} and \eqref{NE-6} to get the Riccati type differential inequality:
\begin{equation} \label{NE-7}
\dot{\mathcal{D}} < -2(2\kappa_m-\kappa_M-4|\tilde\lambda^0|) {\mathcal D}+2\kappa_M {\mathcal D}^2, \quad t \in [0, \bar{t}^* ),
\end{equation}
where we use the first inequality of \eqref{E-7} to see
\[  2\kappa_m-\kappa_M-4|\tilde\lambda^0|>0. \]
We apply the comparison principle to \eqref{NE-7} to find
\begin{equation} \label{NE-8}
{\mathcal D}(t) \leq \frac{1}{\left(\frac{1}{{\mathcal D}^0}-\frac{2\kappa_m - \kappa_M -4|\tilde\lambda^0|}{\kappa_M}\right)e^{(2\kappa_m-\kappa_M -4|\tilde\lambda^0| )t}
+\frac{2\kappa_m-\kappa_M -4|\tilde\lambda^0|}{\kappa_M}}, \quad t \in [0, \bar{t}^*).
\end{equation}
Next, we will verify
\[ \bar{t}^*=\infty. \]
Since we will use proof by contradiction, suppose that $\bar{t}^*$ is finite. We use the initial condition 
\[ {\mathcal D}^0 < \frac{2 \kappa_m -4|\tilde \lambda^0|}{\kappa_M} -1, \]
to see that ${\mathcal D}$ is decreasing in $ t \in [0, {\bar t}^*)$ from \eqref{NE-7}. Therefore, one has
\begin{align*}
\begin{aligned}
\dot\kappa_{ij} &= -\gamma\kappa_{ij} + \mu - \mu {\mathcal D}_{ij} \geq -\gamma\kappa_{ij} + \mu - \mu {\mathcal D} \\
&\geq -\gamma \kappa_{ij} + 2\mu \frac{\kappa_M -\kappa_m +2|\tilde\lambda^0|}{\kappa_M}, \quad t \in [0, {\bar t}^*).
\end{aligned}
\end{align*}
By comparison principle, one obtains
\begin{align*}
\begin{aligned}
\kappa_{ij} &\geq \left( \kappa_{ij}^0 - \frac{2\mu}{\gamma}\frac{\kappa_M-\kappa_m +2|\tilde\lambda^0|}{\kappa_M}\right)e^{-\gamma t} + \frac{2\mu}{\gamma}\left(\frac{\kappa_M-\kappa_m+2|\tilde\lambda^0|}{\kappa_M}\right) \\
&\geq \left( \kappa_{ij}^0 - \kappa_m\right)e^{-\gamma t} + \kappa_m, \quad t \in [0, \bar{t}^*),
\end{aligned}
 \end{align*}
where the last equality holds from the second inequality of \eqref{E-7}. From definition of $\bar{\mathcal{T}}$, there exist indices $k$ and $l$ such that 
\[ \kappa_m = \lim_{t \nearrow \bar{t}^*} \kappa_{kl}. \]
Since $\bar{t}^*$ is finite, we have following inequality:
\begin{align*}
\kappa_m = \lim_{t \nearrow {\bar t}^*} \kappa_{kl}
\geq \lim_{t \nearrow {\bar t}^*} \left( \kappa_{kl}^0 - \kappa_m\right)e^{-\gamma t} + \kappa_m
= \left( \kappa_{kl}^0 - \kappa_m\right)e^{-\gamma \bar{t}^*} + \kappa_m
> \kappa_m,
 \end{align*}
 which is a contradictory. Thus $\bar{t}^* = \infty$ and 
 \[  \min_{i,j}\kappa_{ij}(t) > \kappa_m, \quad t \in [0, \infty). \]
Then the relation \eqref{NE-8} implies our desired estimate.
\end{proof}

\noindent \begin{proof}[{\bf Proof of Theorem \ref{T3.4}}]Recall the conditions \eqref{Y-0}: 
\begin{equation} \label{Y-1}
2|\tilde{\lambda}^0| < \kappa < \min \left\{  \frac{\mu}{\gamma},~\min_{i,j}\kappa_{ij}^0 \right \}, \quad 
\max \left\{  \max_{i,j} \kappa^0_{ij}, ~\frac{\mu}{\gamma} \right \} \leq \frac{2\mu (\kappa-2|\tilde{\lambda^0}|)}{2\mu-\gamma \kappa}, 
\quad {\mathcal D}(Z^0) < 1-\frac{\gamma}{\mu} \kappa.
\end{equation}
Now, it suffices to show that the above conditions satisfy \eqref{E-8} and \eqref{E-8-0}, i.e.,
\begin{align}\label{Y-2}
\kappa_m < \min_{i,j} \kappa^0_{ij}, \quad {\mathcal D}(Z^0) < \frac{2 \kappa_m-4|\tilde{\lambda}|}{\kappa_M} -1, \quad \sup_{0 \leq t < \infty} \max_{i,j} \kappa_{ij}(t) \leq \kappa_M.
\end{align}
We first figure out $\kappa_m$ and $\kappa_M$ satisfying \eqref{E-7}:
\begin{equation}\label{Y-3}
\frac{1}{2}\kappa_M + 2|\tilde\lambda^0| < \kappa_m, \quad \kappa_m \leq \frac{2\mu}{\gamma}\cdot\frac{\kappa_M-\kappa_m +2|\tilde\lambda^0|}{\kappa_M}.
\end{equation}
Since $\kappa$ is a candidate of $\kappa_m$, we will assume that $\kappa_m$ satisfies \eqref{Y-1}. Rewriting $\eqref{Y-3}_2$, we have 
\begin{align}\label{Y-4}
 \kappa_m \leq \frac{2\mu}{\gamma}\cdot\frac{\kappa_M-\kappa_m +2|\tilde\lambda^0|}{\kappa_M}
 \Longleftrightarrow
  \frac{2\mu (\kappa_m-2|\tilde\lambda^0|)}{2\mu-\gamma \kappa_m} \leq \kappa_M.
\end{align}
Optimizing $\kappa_M$ under \eqref{Y-4}, we have
\[
\frac{2\mu (\kappa_m-2|\tilde\lambda^0|)}{2\mu-\gamma \kappa_m} = \kappa_M.
\]
Therefore, as we set
\begin{equation} \label{Y-5}
\kappa_m= \kappa \quad \mbox{and} \quad \kappa_M= \frac{2\mu (\kappa-2|\tilde\lambda^0|)}{2\mu-\gamma \kappa},
\end{equation}
$\eqref{Y-3}_2$ is achieved. In particular, as $\kappa$ satisfies $\eqref{Y-1}_1$, we have
\[
\frac{1}{2}\kappa_M + 2|\tilde\lambda^0|< \kappa_m
\Longleftrightarrow
\kappa < \frac{\mu}{\gamma},
\]
which is true from $\eqref{Y-1}_1$. Hence $\eqref{Y-3}_1$ is achieved. \vspace{0.2cm}\\

\noindent $\bullet$~(Verification of \eqref{Y-2}):~Clearly, $\eqref{Y-1}_1$ implies $\eqref{Y-2}_1$. By the setting \eqref{Y-5}, one has 
\[  \frac{2 \kappa_m-4|\tilde\lambda^0|}{\kappa_M} -1 = \frac{2\kappa-4|\tilde\lambda^0|}{\frac{2\mu (\kappa-2|\tilde\lambda^0|)}{2\mu-\gamma \kappa}} - 1 = 1- \frac{\gamma \kappa}{\mu}. \]
Hence $\eqref{Y-1}_3$ is equivalent to $\eqref{Y-2}_2$ under the setting \eqref{Y-5}. As we did in a proof of Theorem \ref{T3.2}, $\eqref{Y-2}_3$ is verified by the estimate 
\begin{align*}
\begin{aligned}
\kappa_{ij}(t) &= e^{-\gamma t}\left(\kappa_{ij}^0 + \int_0^t \mu e^{\gamma s}R_{ij} ds \right) 
\leq  e^{-\gamma t}\left(\kappa^0_{ij} + \int_0^t \mu e^{\gamma s} ds \right)  \\
&\leq \left(\kappa^0_{ij} -\frac{\mu}{\gamma} \right)e^{-\gamma t} + \frac{\mu}{\gamma} \leq \max \left\{ \max_{i,j} \kappa^0_{ij}, ~\frac{\mu}{\gamma} \right\} \leq \kappa_M.
\end{aligned}
\end{align*}
Finally, we can apply the result of Proposition \ref{P5.1} to derive the desired estimate.\end{proof}

\section{Conclusion} \label{sec:6}
\setcounter{equation}{0}
In this paper, we have studied the emergent dynamics of the LHS model with adaptive coupling gains. When the dynamics of coupling gains are decoupled from the dynamics of state, say, they are simply constants, in previous literature, several sufficient frameworks were proposed for complete aggregation in which all states collapse to the same state. However, when coupling gains and state evolutions are intertwined via adaptive coupling laws, emergent dynamics are more delicate and interesting. In order to couple the dynamics of coupling gains and state, we employ two types of coupling laws, namely anti-Hebbian law and Hebbian law in analogy with the dynamics of brain neurons. The former causes the increment of coupling gain, as the state differences become larger, whereas the latter causes the opposite effect. In the case of the same free flow for all particles, states aggregate to the same state asymptotically for some class of initial data and system parameters. When rotational coupling gain is the minus of the half of the sphere coupling gain, our first result says that the relative state tends to zero and coupling gains tend to zero asymptotically. Since the coupling gain becomes smaller over time, analysis of complete aggregation is highly nontrivial and difficult to analyze. 
Despite this apparent difficulty, we use the Lyapunov functional approach and Barbalat's lemma to show that the relative states and coupling gains tend to zero for the anti-Hebbian case. For the Hebbian coupling case, we show that the square of the state diameter tends to zero exponentially fast for some admissible class of initial data and initial system parameters. The same things can be done for an asymptotically SL coupling gain pair. All presented results in this paper deal with the ensemble of particles with the same free flows. There are several issues that have not been addressed in this work. For example, for an ensemble of LHS particles with the same free flow, a bi-polar state can emerge as one of the resulting asymptotic patterns. Then, is this bi-polar configuration unstable as for the Lohe sphere model?  For the ensemble of particles with heterogeneous free flows, emergent dynamics is completely unknown even for nonnegative coupling gains, not to mention adaptive coupling gains. These interesting issues will be left for future work. 

%\appendix

%\section{Proof of Lemma \ref{L5.4}}\label{AppA}
%In this section, we provide the proof of Lemma \ref{L5.4}.\newline

%\qed

\end{document}